

\documentclass[aps,amssymb,amsmath,amsfonts,superscriptaddress]{revtex4}

\usepackage{graphicx}
\usepackage{bm,bbm}
\usepackage{epstopdf}

\usepackage{amsthm}
\usepackage{amsmath}
\usepackage{color}

\usepackage{hyperref}

\newcommand{\1}{\mathbbm{1}}

\usepackage[T1]{fontenc}
\usepackage[english]{babel}
\usepackage[latin2]{inputenc}

\newtheorem{proposition}{Proposition}
\newtheorem{lemma}{Lemma}
\newtheorem{remark}{Remark}
\newtheorem{theorem}{Theorem}

\newtheorem{corollary}[proposition]{Corollary}

\usepackage{amsopn}
\usepackage{amssymb}
\usepackage{hyperref}
\usepackage{listings}
\usepackage{subfig}
\usepackage{enumerate}
\usepackage{todonotes}
\usepackage{ dsfont }


\newcommand{\etal}{\emph{et al.}}

\newtheorem{definition}{Definition}

\DeclareMathOperator{\tr}{Tr}
\DeclareMathOperator{\Tr}{Tr}

\DeclareMathOperator{\diag}{diag}

\newcommand{\R}{\ensuremath{\mathbb{R}}}

\newcommand{\E}{\ensuremath{\mathbb{E}}}

\newcommand{\C}{\ensuremath{\mathbb{C}}}

\newcommand{\ket}[1]{\ensuremath{|#1\rangle}}
\newcommand{\bra}[1]{\ensuremath{\langle#1|}}
\newcommand{\ketbra}[2]{\ensuremath{\ket{#1}\bra{#2}}}
\newcommand{\proj}[1]{\ensuremath{\ketbra{#1}{#1}}}
\newcommand{\braket}[2]{\ensuremath{\langle{#1}|{#2}\rangle}}

\newcommand{\CC}{\mathcal{C}}


\linespread{1.1}  

\begin{document}
\title{Generating random quantum channels}

\author{Ryszard Kukulski}
\email{rkukulski@iitis.pl}
\affiliation{Institute of Theoretical and Applied Informatics, Polish Academy
of Sciences, Ba{\l}tycka 5, 44-100 Gliwice, Poland}
\author{Ion Nechita}
\affiliation{CNRS, Laboratoire de Physique Th\'{e}orique, IRSAMC,
Universit\'{e} de Toulouse, UPS, F-31062 Toulouse, France}
\author{{\L}ukasz Pawela}
\affiliation{Institute of Theoretical and Applied Informatics, Polish Academy
of Sciences, Ba{\l}tycka 5, 44-100 Gliwice, Poland}
\author{Zbigniew Pucha{\l}a}
\affiliation{Institute of Theoretical and Applied Informatics, Polish Academy
of Sciences, Ba{\l}tycka 5, 44-100 Gliwice, Poland}
\affiliation{Faculty of Physics, Astronomy and Applied Computer Science,
Jagiellonian University, ul. {\L}ojasiewicza 11,  30-348 Krak{\'o}w, Poland}
\author{Karol {\.Z}yczkowski}
\affiliation{Faculty of Physics, Astronomy and Applied Computer Science,
Jagiellonian University, ul. {\L}ojasiewicza 11,  30-348 Krak{\'o}w, Poland}
\affiliation{Center for Theoretical Physics, Polish Academy of Sciences,\\ al.\
Lotnik\'ow 32/46, 02-668 Warszawa, Poland}
\date{May 19, 2021}

\begin{abstract}
Several techniques of generating random quantum channels, which act on the set
of $d$-dimensional quantum states, are investigated. We  present three
approaches to the problem of sampling of quantum channels and show they are 
mathematically equivalent. We discuss under which conditions they give the 
uniform, Lebesgue measure on the convex set of quantum operations and compare 
their advantages and computational complexity, and demonstrate  which of them is
particularly suitable for numerical investigations. Additional results focus on
the spectral gap and other spectral properties of random quantum channels and
their invariant  states. We compute mean values of several quantities
characterizing a given quantum channel, including its unitarity, the average
output purity and the $2$-norm coherence of a channel, averaged over  the entire
set of the quantum channels with respect to the uniform measure. An ensemble of
classical stochastic matrices obtained due to super-decoherence of random
quantum stochastic maps is analyzed and their spectral properties are studied
using the Bloch representation of a classical probability vector.
\end{abstract}

\maketitle 
\section{Introduction}
Quantum dynamics corresponding to classically chaotic systems can be described
by suitable ensembles of random matrices. In the case of autonomous quantum
systems one mimics Hamiltonians with the help of ensembles of random Hermitian
matrices invariant with respect to certain transformations. Depending on the
symmetry properties of the system investigated one uses orthogonal, unitary or
symplectic ensembles \cite{Ha10}. In the case of time-dependent, periodically
driven systems, corresponding unitary evolution operators can be described by
one of three circular ensembles of Dyson \cite{Me04,Fo10}.

The problem becomes more complex in a physically important situation of
\emph{open quantum systems} interacting with an environment \cite{AL00,BP07} and
dissipative quantum systems \cite{We08,Br01,WM10}. Time evolution of the
principal system coupled with an ancillary system is non-unitary, and the
spectrum of the associated evolution operators is contained in the unit disk.
Dynamics of dissipative chaotic quantum systems can be described by the master
equation \cite{Ha10,Br01}, while properties of chaotic scattering \cite{LW91}
can be explained with help of suitable non-unitary ensembles of random matrices
\cite{HILSS92,Sch16}.

A very general  scheme of a continuous time evolution of an open quantum system
can be described by the celebrated Gorini-Lindblad-Kossakowski-Sudarshan
equation \cite{GKS76,Li76,CP17}. In the case of discrete dynamics one uses the
notion of quantum operation or \emph{quantum channel}: a completely positive,
trace preserving linear map which sends the set of mixed quantum states into
itself \cite{BZ17}. The set of quantum operations acting on density matrices of
a given size $d$ is convex and compact, and it is easy to show that the flat
Lebesgue measure in this set is generated by the Hilbert-Schmidt distance in the
set of dynamical matrices which determine the maps.

Such an ensemble of random quantum operations was introduced in \cite{BCSZ09}.
The goal of this work is to introduce and study three families of ensembles of
quantum operations, motivated by different structural characterizations of
quantum channels. We show that these probability measures agree for specific
values of the parameters, and we provide practical algorithms to generate such
maps numerically. We argue that the most computationally efficient procedure to
generate a random quantum channel is to sample its Kraus operators, see
procedure \textbf{b)} from Section \ref{sec:different-measures}. Furthermore, we
also present several results on properties of random operations: we analyze
spectral properties of a generic superoperator and the corresponding invariant
state and show that in the limit of a large dimension $d$ a typical channel is
close to the unital completely depolarizing channel. As a closely related
subject we discuss properties of random stochastic and bistochastic matrices
which can be considered as classical analogues of random quantum stochastic
maps. The present work can be considered as a complementary to the earlier
contribution \cite{ZPNC11}, as it extends the study of random quantum states for
random operations.
Note also that the techniques of random matrices are frequently used in the
theory of quantum information \cite{CN16,AS17}. In particular, the seminal
result of Hastings \cite{Ha09} establishing non-additivity of minimum output
entropy of quantum channels was obtained with use of a certain class random
quantum operations.

\bigskip

This work is organized as follows. In Section~\ref{sec:2} we set the scene and
introduce concepts and notation necessary for presenting our results.
Section~\ref{sec:different-measures} describes the several equivalent sampling
methods for random quantum operations, as well as their relation to the flat
measure, while in subsequent Section \ref{sec_distr} the distribution of output
states of a random channel is investigated. Next, in
Section~\ref{sec:spectrum-random-quantum-channels} we discuss the spectral
properties of the superoperator representing a quantum channel. Furthermore, we
demonstrate that the Bloch representation of a quantum operation is equivalent
with the Fano form of the corresponding bi-partite Jamio{\l}kowski state, so
certain properties of a random superoperator are related to properties of a
correlation matrix, which describes correlations between two subsystems a
bi-partite random state. Effects of super-decoherence and a relation between
random quantum stochastic maps and the classical random stochastic matrices are
discussed in  section \ref{sec:classical}, while in Section  \ref{sec:inv}
spectral properties of invariant states of random quantum channels are analyzed.
The Appendices from A to E contain two technical lemmas and proofs of
propositions formulated in the main body of the work. The Appendix F provides a
short review of several other ensembles of random stochastic and bistochastic
quantum maps.

\section{Setting the scene}\label{sec:2}

Quantum channels, the most general transformations of quantum states allowed by
the axioms of quantum mechanics, are modeled by linear maps between matrix
algebras satisfying certain positivity and trace preservation properties. There
are several equivalent characterizations of quantum channels, each having its
own merit depending on the point of view we want to take. We summarize them in
the following theorem (see, e.g.~\cite[Corollary 2.27]{Wat18}).

\begin{theorem}\label{thm:def-quantum-channel} A linear map $\Phi: M_{d_1}(\C)
\to M_{d_2}(\C)$ is called a \emph{quantum channel} (or quantum operation) if
any of the following equivalent conditions is satisfied
\begin{enumerate}
	\item The map $\Phi$ is both
	\begin{itemize}
		\item \emph{completely positive}: for all $n \geq 1$, $\Phi \otimes
		\operatorname{id}_n : M_{d_1 n}(\C) \to M_{d_2 n}(\C)$ is positive
		(i.e.~maps positive semidefinite matrices to positive semidefinite
		matrices)
	\end{itemize}
	and
	\begin{itemize}
		\item \emph{trace preserving}: for all $X \in M_{d_1}(\C)$, $\Tr \Phi(X)
		= \Tr X$.
	\end{itemize}
	\item The map $\Phi$ admits a \emph{Kraus decomposition}:
	\begin{equation}\label{eq:def-Kraus-decomposition}
	\forall X \in M_{d_1}(\C), \quad \Phi(X) = \sum_{i=1}^r A_i X A_i^\dagger,
	\end{equation}
	for matrices $A_1, \ldots, A_r \in M_{d_2 \times d_1}(\C)$, called
	\emph{Kraus operators}. The matrices $A$ satisfy the identity resolution,
	$\sum_{i=1}^r A_i^{\dagger} A_i= \mathbbm{1}_{d_1}$ corresponding to the
	fact that $\Phi$ is trace preserving.
	\item The map $\Phi$ admits a \emph{Stinespring dilation}: there exists, for some positive integer $n$, an isometry $V : \C^{d_1} \to \C^{d_2} \otimes \C^n$ such that
	\begin{equation}\label{eq:def-Stinespring-dilation}
	\forall X \in M_{d_1}(\C), \quad \Phi(X) = [\operatorname{id}_{d_2} \otimes \Tr_{n}](VXV^\dagger).
	\end{equation}
	\item The \emph{Choi matrix} of $\Phi$
	\begin{equation}\label{eq:def-Choi-matrix}
	J_\Phi := \sum_{i,j=1}^{d_1} \Phi(\ketbra{i}{j}) \otimes \ketbra{i}{j} \in M_{d_2}(\C) \otimes M_{d_1}(\C)
	\end{equation}
	is positive semidefinite and has partial trace
  \begin{equation}
    [\Tr_{d_2} \otimes \operatorname{id}_{d_1}](J_\Phi) = \mathbbm{1}_{d_1}.
  \end{equation}
\end{enumerate}
\end{theorem}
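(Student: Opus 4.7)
The plan is to establish the four equivalences as a cycle of implications, say $(1)\Rightarrow(4)\Rightarrow(2)\Rightarrow(3)\Rightarrow(1)$, which is more economical than proving all six pairwise equivalences. The crux of the argument is the Choi--Jamio\l{}kowski isomorphism, encoded in step $(4)\Rightarrow(2)$; the other three arrows are essentially bookkeeping once the right objects are in place.

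For $(1)\Rightarrow(4)$, I would note that the matrix $\Omega = \sum_{i,j} \ketbra{i}{j}\otimes\ketbra{i}{j}\in M_{d_1}(\C)\otimes M_{d_1}(\C)$ is positive semidefinite (it is $d_1$ times a rank-one projector onto the maximally entangled vector). Applying complete positivity with $n=d_1$ gives $(\Phi\otimes\operatorname{id}_{d_1})(\Omega)=J_\Phi\succeq 0$; the partial trace condition $[\Tr_{d_2}\otimes\operatorname{id}_{d_1}](J_\Phi)=\mathbbm{1}_{d_1}$ then follows by applying the trace-preserving property to each $\ketbra{i}{j}$ separately.

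For $(4)\Rightarrow(2)$, which I expect to be the main obstacle (really a notational one), I would spectrally decompose $J_\Phi=\sum_{i=1}^r \ketbra{v_i}{v_i}$ with $v_i\in\C^{d_2}\otimes\C^{d_1}$ (absorbing eigenvalues into the $v_i$), identify each $v_i$ with a matrix $A_i\in M_{d_2\times d_1}(\C)$ via the inverse of the vectorization map $\wek(A)=\sum_{k,\ell}A_{k\ell}\ket{k}\otimes\ket{\ell}$, and verify directly on matrix units $\ketbra{a}{b}$ that $\Phi(\ketbra{a}{b})=\sum_i A_i\ketbra{a}{b}A_i^\dagger$. The trace-preservation condition then translates into $\sum_i A_i^\dagger A_i=\mathbbm{1}_{d_1}$ via the calculation $[\Tr_{d_2}\otimes\operatorname{id}](\sum_i \wek(A_i)\wek(A_i)^\dagger)=(\sum_i A_i^\dagger A_i)^\top$. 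The subtle point here is respecting the convention chosen for $J_\Phi$ (which tensor factor carries the input space) and for $\wek$; I would fix both up front so the identification is unambiguous.

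The remaining two arrows are short. For $(2)\Rightarrow(3)$, set $V:=\sum_{i=1}^r A_i\otimes\ket{i}\in M_{d_2 r\times d_1}(\C)$ with $n=r$; then $V^\dagger V=\sum_i A_i^\dagger A_i=\mathbbm{1}_{d_1}$ gives the isometry property, and a direct computation shows $[\operatorname{id}_{d_2}\otimes\Tr_n](VXV^\dagger)=\sum_i A_i X A_i^\dagger=\Phi(X)$. For $(3)\Rightarrow(1)$, complete positivity of $\Phi$ follows because $X\mapsto VXV^\dagger$ and the partial trace are each manifestly completely positive (the first has a single-element ``Kraus representation''; the second is the adjoint of an inclusion-type map), and composition preserves complete positivity; trace preservation is the one-line calculation $\Tr\Phi(X)=\Tr(VXV^\dagger)=\Tr(V^\dagger V X)=\Tr X$. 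This closes the cycle.
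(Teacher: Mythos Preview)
Your proof is correct and follows the standard textbook cycle of implications. However, the paper does not actually prove this theorem: it is stated as background material with a citation to \cite[Corollary 2.27]{Wat18}, so there is no in-paper argument to compare against. Your approach is essentially the one found in Watrous and other standard references, so nothing further is needed here.
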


A few remarks are in order regarding the result above. First, it suffices to
check the complete positivity condition for $n = d_1$. Second, regarding the
Kraus decomposition, the smallest positive integer $r$ for which $\Phi$ admits a
Kraus decomposition \eqref{eq:def-Kraus-decomposition} is called the \emph{Choi
rank} of $\Phi$ and is denoted by $\operatorname{rk}_C(\Phi)$. The Choi rank is
a measure of the noisiness of the channel $\Phi$, varying from $r=1$ for a
\emph{unitary conjugation} ($d_1=d_2=d$):
\begin{equation}
  \Phi_U(X) = UXU^*, \quad \text{ where $U \in \mathcal U(d)$ is a unitary matrix}
\end{equation}
to $r=d_1d_2$ for the \emph{completely depolarizing channel}
\begin{equation}
  \Phi_*(X) = \Tr X \frac{\mathbbm{1}_{d_2}}{d_2}.
\end{equation}

In the Stinespring dilation formulation, the Hilbert space $\C^n$ is commonly
termed the \emph{environment}. In the special case when $d_1$ divides $d_2n$,
equation \eqref{eq:def-Stinespring-dilation} can be rewritten as
\begin{equation}
  \Phi(X) =[\operatorname{id}_{d_2} \otimes \Tr_{n}]\left(U(X \otimes \ketbra{0}{0})U^\dagger\right)
\end{equation}
with $U \in \mathcal U(d_2n)$ a unitary operator and $\ket{0}$ a
$d_2n/d_1$-dimensional unit vector. The isometry $V$ defining the quantum
channel $\Phi$ is then a truncation of the unitary operator $U$. The environment
size $n$ can be taken to be $n=d_1d_2$ without loss of generality, and the
minimal $n$ for which a decomposition \eqref{eq:def-Stinespring-dilation} exists
is equal to the Choi rank $\operatorname{rk}_C(\Phi)$ of the channel.

The final characterization in Theorem \ref{thm:def-quantum-channel} is an
instance of the \emph{Jamio{\l}kowski-Choi isomorphism}: with any linear map
$\Phi: M_{d_1}(\C) \to M_{d_2}(\C)$ one can associate a matrix $J_\Phi \in
M_{d_2}(\C) \otimes M_{d_1}(\C)$; this isomorphism has the following properties:
\begin{itemize}
	\item maps preserving self-adjointness are mapped to self-adjoint matrices,
	\item completely positive maps are mapped to positive semidefinite matrices,
	\item trace preserving maps are mapped to matrices $J_\Phi$ satisfying
	$[\Tr_{d_2} \otimes \operatorname{id}_{d_1}](J_\Phi) = \mathbbm{1}_{d_1}$,
	\item unital maps (i.e.~$\Phi(\mathbbm{1}_{d_1}) = \mathbbm{1}_{d_2}$) are
	mapped to matrices $J_\Phi$ satisfying $[\operatorname{id}_{d_2} \otimes
	\Tr_{d_1}](J_\Phi) = \mathbbm{1}_{d_2}$.
\end{itemize}
The Choi matrix $J_\Phi$ can be written with the help of the \emph{maximally
entangled state}, $  \ket \Omega = \frac{1}{\sqrt{d_1}} \sum_{i=1}^{d_1}
\ket{ii} \in \C^{d_1} \otimes \C^{d_1}  $ as
\begin{equation}
  J_\Phi = d_1[\Phi \otimes \operatorname{id}_{d_1}](\ketbra \Omega \Omega).
  \label{jamiol}
\end{equation}
The rescaled Choi matrix, $J_{\Phi}/d_1$, with unit trace is also called {\sl
Jamio{\l}kowski state}, which explains the notation used.

Moreover, from the Kraus decomposition $\Phi(X) = \sum_{i=1}^r A_i X
A_i^\dagger$ of the channel $\Phi$ one can calculate the Choi matrix using the
vectorization notation,
\begin{equation}
J_\Phi=\sum_{i=1}^r |A_i\rangle\rangle \langle \langle A_i|,
\label{choi2}
\end{equation}
where $|A\rangle\rangle = \sum_{i=1}^{d_1} A \ket{i} \otimes \ket{i}$ denotes
the vector of length $d_2d_1$ obtained by reshaping a given matrix $A$ of size
$d_2 \times d_1$. Let us discuss now the matrix of $\Phi$, viewed as a linear
map between the vector spaces $M_{d_1}(\C) \cong \C^{d_1^2}$ and $M_{d_2}(\C)
\cong \C^{d_2^2}$. We also denote by $\Phi$ this matrix, usually called the
\emph{superoperator}; we have, in terms of the Kraus operators,
\begin{equation}
\Phi=\sum_{i=1}^r A_i\otimes \bar{A}_i,
\label{Phisum}
\end{equation}
where bar denotes the complex conjugation. In terms of the Choi matrix, the
superoperator reads $\Phi = J_\Phi^{\mathrm R}$, where $\mathrm R$ denotes the
transformation of \emph{matrix reshuffling} (or realignment) -- see
\cite[Chapter 10.2]{BZ17}:
\begin{equation}\label{eq:def-reshuffling}
\left( \ketbra i j \otimes \ketbra k l \right)^{\mathrm R} = \ketbra i k \otimes \ketbra j l.
\end{equation}

The linear map $\Phi: M_{d_1}(\C) \to M_{d_2}(\C)$ admits an adjoint
$\Phi^\dagger : M_{d_2}(\C) \to M_{d_1}(\C)$ for the Hilbert-Schmidt scalar
product on complex matrices $\braket{A}{B} = \Tr(A^\dagger B)$. If $\Phi$ is a
quantum channel, the map $\Phi^\dagger$ is still completely positive, but the
trace preservation property of $\Phi$ is converted to unitality:
$\Phi^\dagger(\mathbbm{1}_{d_2})  = \mathbbm{1}_{d_1}$. Quantum channels which
are unital are called \emph{bistochastic}, since they satisfy both normalization
conditions; the name is a reference to the classical situation where row- and
column-stochastic matrices are called bistochastic or doubly stochastic.

An  interested reader can find modern expositions of these results and many
developments in monographs such as \cite[Chapter 8]{NC10}, \cite[Chapters 10 and
11]{BZ17}, or \cite[Chapter 2.2]{Wat18}.

\bigskip

Some basic facts from the theory of random matrices will be relevant in this
paper. For an in-depth introduction, we refer the reader to the classical
textbook \cite{Me04} or to modern presentations \cite{AGZ10, MSp17}. In this
work we are going to use the following ensembles of random matrices:

\begin{itemize}
	\item the {\bf real Ginibre ensemble} consisting of matrices with
	independent and identically distributed real standard Gaussian entries
	$G_\R$.
	\item the {\bf complex Ginibre matrices} consisting of matrices $G$  with
	independent complex entries distributed according to the standard complex
	Gaussian distribution \cite{Gi65}. One can write $G = (G_\R' + i
	G_\R'')/\sqrt 2$, where $G_\R'$ and $G_\R''$ are independent real Ginibre
	matrices. Note that the Ginibre matrices can be rectangular, and they are
	normalized as $\mathbb E[\Tr(GG^{\dagger})]=d_1d_2$. In the square case
	($d_1=d_2=d$), the spectrum of a normalized complex Ginibre matrix $G /
	\sqrt d$ covers uniformly the unit disk, a result called the \emph{circular
	law of Girko} \cite{Gi65, Gi84,Fo10}.
	\item {\bf Gaussian Unitary ensemble} (GUE) of Hermitian (self-adjoint)
	matrices, invariant with respect to the unitary group, contains matrices
	$H=(G+ G^{\dagger})/\sqrt{2}$. In the limit of large matrix dimension $d$,
	the spectrum of normalized GUE matrices $H/\sqrt d$ converges to Wigner's
	semicircle distribution \cite[Chapter 2]{AGZ10}.
	\item {\bf the complex Wishart ensemble} of parameters $(d,s)$ consisting
	of matrices $W = GG^\dagger$, where $G$ is a rectangular $d \times s$
	random matrix from the complex Ginibre ensemble. By construction, Wishart
	matrices are positive semidefinite. The parameter $s$ can be chosen to be
	any real number in the set $\{1, 2, \ldots, d-1\} \cup [d, \infty)$. In the
	scaling limit where $d,s \to \infty$ in such a way that $s/d$ converges to
	a constant value $c >0$, the spectrum of $W/d$ converges to the
	\emph{Mar\v{c}enko-Pastur distribution} of parameter $c$ \cite[Definition
	11]{MSp17}.
	\item {\bf Circular unitary ensemble} (CUE) consisting of random unitary
	matrices distributed according to the Haar measure on the unitary group
	\cite{Me04,PZK98}. The Haar measure on $\mathcal U(d)$ is the unique
	probability measure invariant with respect to left and right multiplication
	with fixed unitary matrices. CUE matrices can be easily sampled starting
	from Ginibre matrices, after performing a QR decomposition \cite{Mez07}.
	\item {\bf random isometry ensemble} consisting of  operators $V : \C^d \to
	\C^D$, for positive integers $d \leq D$, satisfying $V^\dagger V =
	\mathbbm{1}_d$. A Haar-distributed random isometry is obtained by truncating
	a $D \times D$ Haar-distributed random unitary operator $U$ to its first $d$
	columns \cite{ZS00}.
\end{itemize}

\medskip

\section{Distributions of random quantum channels}\label{sec:different-measures}

Let us denote by  $\CC_{d_1,d_2}$ the set of quantum channels acting on density
matrices of order $d_1$ which output density matrices of order $d_2$:
\begin{equation}
\CC_{d_1,d_2} := \{\Phi : M_{d_1}(\C) \to M_{d_2}(\C) \, : \, \Phi \text{ is completely positive and trace preserving}\}.
\end{equation}
The set $\CC_{d_1,d_2}$ is compact and convex. In this section, we discuss
different natural ways of endowing this convex body with natural probability
measures. One natural candidate is the flat measure on this set, induced by
the Hilbert-Schmidt  (HS) distance, $D_{HS}(A,B)=({\rm Tr}
(A-B)(A-B)^{\dagger})^{1/2}$. We shall see that the flat (or Lebesgue) measure
is actually a special case of several one-parameter families of probability
measures on $\CC_{d_1,d_2}$. In the case of equal input and output dimensions
$d_1=d_2=d$, each map $\Phi$ can be represented by a Hermitian matrix $J_\Phi$
of order $d^2$, characterized by $d^4$ real parameters. However, the trace
preserving condition, $[\Tr_{d} \otimes \operatorname{id}_{d}] J_\Phi={\mathbbm
1_d}$, imposes $d^2$ constraints, so  the set $\CC_d$ can be embedded in a real
vector space of dimension $d^4-d^2$.
The volume of the convex set $\CC_d$ with respect to the Hilbert-Schmid (flat)
measure was estimated \cite{SWZ08}  for a large dimension $d$,
while in the case of one-qubit channels, $d=2$,
 an exact result is available \cite{LA17}. 

We introduce next three methods to generate random operations and show their 
equivalence. The motivation for
these families of measures comes from different perspectives on quantum channels
provided by Theorem \ref{thm:def-quantum-channel}. The common idea is that we
shall consider the different representations of a quantum channel (respectively,
the Kraus, Choi, and Stinespring representations), generate randomly the
defining object, (the Kraus operators, the Choi matrix, the Stinespring isometry
respectively), according to the natural measures on the respective spaces, and
then define the probability measure on the set of channels as an induced (or
image) measure. The same strategy has been used in the setting of density
matrices (or mixed quantum states) in \cite{ZSo01}: the \emph{induced measure}
on the set of $d \times d$ density matrices of parameter $s$ is the image
measure of the Lebesgue measure on the unit sphere of $\C^d \otimes \C^s$, by
taking the partial trace on the $s$-dimensional environment, see discussion at
the beginning of Section \ref{sec_distr}.

We shall introduce three families of measures on the set of quantum channels
$\CC_{d_1,d_2}$, starting from the most general ones. We shall conclude by
identifying the flat (or Lebesgue) measure as a special case of all of them.

\bigskip

{\bf a) Random Choi matrix.}

\noindent Define the set of allowed parameters $M$
\begin{equation}
\mathcal M_{d_1,d_2}:= \left\{ \left\lceil \frac{d_1}{d_2}\right\rceil,
\left\lceil \frac{d_1}{d_2}\right\rceil+1, \ldots, d_1d_2-1 \right\} \sqcup
[d_1d_2, +\infty).
\end{equation}

\begin{definition}\label{def:random-Choi} Let $M \in \mathcal M_{d_1,d_2}$ be a
real number. We define $\mu_{d_1, d_2; M}^{Choi}$ to be the probability measure
of the random quantum channel $\Phi \in \CC_{d_1,d_2}$ defined as follows:
\begin{enumerate}
\item Consider a random complex Wishart matrix $W$ of parameters
$(d_1d_2, M)$;
\item Find the positive semidefinite matrix defined by the partial trace,
$H:=[ \Tr_{d_2} \otimes \operatorname{id}_{d_1}] W$;
\item Write the dynamical matrix  (or Choi matrix)
\begin{equation}\label{eq:normalize-Choi}
J:=
({\mathbbm 1}_{d_2} \otimes H^{-1/2}) W ({\mathbbm 1}_{d_2} \otimes H^{-1/2})  \ ;
\end{equation}
\item Reshuffle the Choi matrix $J$ to obtain the superoperator $\Phi=J^{\mathrm
R}$ (see \eqref{eq:def-reshuffling} for the definition of reshuffling); in other
words, $\Phi$ is the unique quantum channel having Choi matrix $J_\Phi = J$.
\end{enumerate}
\end{definition}

Several remarks are in order here. First, note that the condition $M \in
\mathcal M_{d_1,d_2}$ allows for the existence of the Wishart distribution of
$W$. Second, the lower bound on the integer values of $M$, $Md_2 \geq d_1$,
implies that the random matrix $H$ is, generically, invertible. Indeed, $W$
follows a Wishart distribution of parameters $(d_1 d_2, M)$ and thus $H$ is also
Wishart, with parameters $(d_1, Md_2)$. Hence, with probability one, $H$ is
positive definite, rendering valid the normalization procedure from
\eqref{eq:normalize-Choi}. The random matrix $J$ is constructed to be positive
semidefinite, rendering the corresponding channel $\Phi$ completely positive;
the trace preservation condition follows from \eqref{eq:normalize-Choi}. Since
the matrix $H$ is generically invertible, the rank of the Choi matrix $J$ (and
thus the Choi rank of $\Phi$) is, almost surely,
\begin{equation}\label{eq:Choi-rank-random}
\operatorname{rk}_C(\Phi) = \min(d_1d_2, M).
\end{equation}

Finally, let us point out that, from a computational perspective, the costly
operation in the procedure above is the inversion of the $d_1 \times d_1$ matrix
$H$, needed to enforce the trace preservation condition.

\bigskip

{\bf b) Random Kraus operators.}

\begin{definition}\label{def:random-Kraus}
Let $M$ be an integer satisfying $M
d_2 \geq d_1$. We define $\mu_{d_1, d_2; M}^{Kraus}$ to be the probability
measure of the random quantum channel $\Phi \in \CC_{d_1,d_2}$ defined as
follows:
\begin{enumerate}
\item Generate $M$ independent $d_2 \times d_1$ non-Hermitian matrices $G_1, \ldots, G_M $ from
the complex Ginibre ensemble;
\item Compute the positive semidefinite matrix $H=\sum_{i=1}^M  G_i^{\dagger} G_i \geq 0$;
\item Define the set of Kraus operators $A_i:= G_i H^{-1/2}$, $i=1,\dots, M$;
\item The channel $\Phi$ is defined via its Kraus decomposition $\Phi(\cdot) = \sum_{i=1}^M A_i \cdot A_i^\dagger$.
\end{enumerate}
\end{definition}

Let us first justify the validity of the construction. As in the random Choi
matrix setting above, the matrix $H$ has a Wishart distribution of parameters
$(d_1, Md_2)$, hence it is generically positive definite. The operators $A_i$
satisfy the condition
\begin{equation}
   \sum_{i=1}^M A_i^{\dagger} A_i = H^{-1/2} H H^{-1/2} = \mathbbm 1_{d_1},
\end{equation}
proving that the completely positive map $\Phi$ associated with Kraus operators
$A_i$ is trace preserving and forms a legitimate quantum channel. By
construction, the channel $\Phi$ has generically Choi rank given by
\eqref{eq:Choi-rank-random}. This can also be seen as a consequence of the
following result, showing that the probability measures defined as above
correspond to the ones obtained from random Choi matrices, in the case
of integer parameter $M$; for a proof, see Appendix \ref{appendix-1}.

\begin{proposition}\label{prop:Choi-vs-Kraus}
For all integers $M$ such that
$Md_2 \geq d_1$, we have $\mu_{d_1,d_2;M}^{Kraus} = \mu_{d_1,d_2;M}^{Choi}$.
\end{proposition}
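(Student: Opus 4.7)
The plan is to show that the two procedures produce the \emph{same} random Choi matrix (and hence the same random channel) by exposing a natural coupling between the complex Wishart matrix $W$ used in procedure \textbf{a)} and the $M$ complex Ginibre matrices $G_1,\ldots,G_M$ used in procedure \textbf{b)}. The whole argument rests on re-packaging the columns of a rectangular Ginibre matrix as vectorizations of $d_2 \times d_1$ matrices, together with the identity $J_\Phi = \sum_i |A_i\rangle\rangle\langle\langle A_i|$ from \eqref{choi2}.

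First, I would recall that for integer $M$ with $Md_2 \geq d_1$, a complex Wishart matrix of parameters $(d_1 d_2, M)$ can be realized as $W = \widetilde G \widetilde G^{\dagger}$, where $\widetilde G$ is a $d_1 d_2 \times M$ complex Ginibre matrix. The $M$ columns of $\widetilde G$ are i.i.d.\ standard complex Gaussian vectors in $\C^{d_1 d_2}$, and may be identified with the vectorizations $|G_1\rangle\rangle, \ldots, |G_M\rangle\rangle$ of i.i.d.\ $d_2 \times d_1$ complex Ginibre matrices. Under this identification one has the distributional identity
\begin{equation}
W \stackrel{d}{=} \sum_{i=1}^{M} |G_i\rangle\rangle \langle\langle G_i|,
\end{equation}
with $G_1,\ldots,G_M$ distributed exactly as in step~1 of Definition~\ref{def:random-Kraus}.

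Next, I would compute the partial trace appearing in step~2 of procedure \textbf{a)}. A direct calculation using $|A\rangle\rangle = \sum_k A|k\rangle \otimes |k\rangle$ gives
\begin{equation}
\bigl[\Tr_{d_2} \otimes \operatorname{id}_{d_1}\bigr]\bigl(|G_i\rangle\rangle\langle\langle G_i|\bigr) = (G_i^{\dagger} G_i)^{T},
\end{equation}
so the matrix $H$ of procedure \textbf{a)} equals the transpose of the matrix $H_K := \sum_i G_i^{\dagger} G_i$ of procedure \textbf{b)}. Then I would perform the analogous calculation for the sandwich $(\mathbbm{1}_{d_2} \otimes H^{-1/2}) |G_i\rangle\rangle$: expanding in the computational basis and using Hermiticity of $H$ (so that $H^{T} = \overline H$ and $H^{-1/2}$ commutes appropriately under complex conjugation with transposition), one finds
\begin{equation}
(\mathbbm{1}_{d_2} \otimes H^{-1/2})\, |G_i\rangle\rangle = |G_i \overline{H^{-1/2}}\rangle\rangle = |G_i H_K^{-1/2}\rangle\rangle = |A_i\rangle\rangle,
\end{equation}
where $A_i = G_i H_K^{-1/2}$ are precisely the Kraus operators produced by procedure \textbf{b)}. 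Consequently, the Choi matrix built in procedure \textbf{a)} satisfies
\begin{equation}
J = \sum_{i=1}^{M} |A_i\rangle\rangle \langle\langle A_i|,
\end{equation}
which by \eqref{choi2} is exactly the Choi matrix of the channel produced by procedure \textbf{b)}. Since the Kraus operators coincide almost surely under this coupling, the induced probability measures on $\mathcal{C}_{d_1,d_2}$ coincide.

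The main technical obstacle is purely bookkeeping: one has to track conjugations and transpositions carefully, because the partial trace of a vectorized rank-one projector produces a transposed Gram matrix and the normalization factor $H^{-1/2}$ acts on the second tensor leg. The cleanest way to avoid errors is to do all the work with the vectorization identity $|A B\rangle\rangle = (\mathbbm{1} \otimes B^{T})|A\rangle\rangle$ once and for all, which reduces the verification to matching scalar indices. Everything else (existence of $H^{-1/2}$, Choi-rank equal to $\min(d_1 d_2, M)$) follows from the Wishart distribution of $H$, as already noted in the text.
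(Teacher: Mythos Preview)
Your proposal is correct and takes essentially the same approach as the paper: both arguments couple the Wishart matrix $W$ to the Ginibre blocks via $W=\sum_i |G_i\rangle\rangle\langle\langle G_i|$, identify the normalizing matrix in procedure \textbf{a)} as the transpose of the one in procedure \textbf{b)}, and then verify $(\mathbbm{1}_{d_2}\otimes H^{-1/2})|G_i\rangle\rangle=|A_i\rangle\rangle$. The only cosmetic difference is direction---the paper starts from the Kraus side and computes the resulting Choi matrix, whereas you start from the Choi side and recover the Kraus operators---but the algebraic content and the transpose/conjugate bookkeeping are identical.
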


Finally, let us point out here that the computational cost of the procedure
presented in Definition \ref{def:random-Kraus} comes from inverting the matrix
$H$.

\bigskip

 {\bf c) Environmental form.}
\begin{definition}\label{def:random-Stinespring} Let $M$ be an integer
satisfying $M d_2 \geq d_1$. We define $\mu_{d_1, d_2; M}^{Stinespring}$ to be
the probability measure of the random quantum channel $\Phi \in \CC_{d_1,d_2}$
defined as follows:
\begin{enumerate}
\item Consider a random Haar isometry $V : \C^{d_1} \to \C^{d_2} \otimes
\C^{M}$ embedding the input system Hilbert space isometrically into the tensor
product of the output space with an environment ${\sf E}$ of dimension $M$;
\item The channel $\Phi$ is defined by its Stinespring decomposition
\begin{equation}\label{eq:def-Stinespring}
\Phi(\cdot)=[\operatorname{id}_{d_2} \otimes \Tr_{M}]\left( V \cdot V^\dagger
\right).
\end{equation}
\end{enumerate}
\end{definition}

We sketch next a construction involving a more physical unitary evolution, which
is however less general than the one above. Consider a total Hilbert space
$\mathcal H$ admitting two tensor product decompositions
\begin{equation}
  \mathcal H = \C^{d_1} \otimes \C^K = \C^{d_2} \otimes \C^M,
\end{equation}
where $K$, respectively $M$ are the dimensions of two auxiliary systems: an
input environment ${\sf E}_{in} = \C^K$, initially in an arbitrary pure state
$|\nu\rangle \in \C^K$, and an output environment ${\sf E_{out}} = \C^M$. Evolve
the total system with a unitary transformation $U$ of size $D= d_1 K = d_2M$,
which is assumed to be generated according to the Haar measure on $\mathcal
U(D)$. The channel $\Phi$ is then defined as
\begin{equation}\label{eq:mu-stinespring-unitary}
  \Phi(\rho)=[\operatorname{id}_{d_2} \otimes
  \Tr_{M}]\left(U(\rho \otimes|\nu\rangle\langle\nu|)U^{\dagger}\right).
\end{equation}
Note that the random isometry $V$ appearing in the first step of construction in
Definition \ref{def:random-Stinespring} can be obtained by truncating the $d_2M
\times d_2 M$ Haar-random unitary operator $U$ to a $d_2 M \times d_1$ matrix
$V$.

As in the previous cases, the corresponding Choi matrix of the channel has
generically rank given by \eqref{eq:Choi-rank-random}. Notice that in this case,
the sampling procedure defined above has as a computational bottleneck, the
sampling of the random Haar isometry $V$. The following result (proven in
Appendix \ref{appendix-2}) shows that environmental form construction, using
random isometries, is also a special case of the random Choi matrix construction
considered previously.

\begin{proposition}\label{prop:Stinesping-equals-Kraus}
	For all integers $M$ such that $Md_2 \geq d_1$, we have $\mu_{d_1,d_2;M}^{Stinespring} = \mu_{d_1,d_2;M}^{Choi}$.
\end{proposition}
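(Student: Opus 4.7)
The plan is to leverage Proposition~\ref{prop:Choi-vs-Kraus}, which already identifies $\mu_{d_1,d_2;M}^{Kraus}$ with $\mu_{d_1,d_2;M}^{Choi}$, so it suffices to construct a coupling showing $\mu_{d_1,d_2;M}^{Stinespring} = \mu_{d_1,d_2;M}^{Kraus}$. The backbone of the argument is the standard fact that a Haar-random isometry $V : \C^{d_1} \to \C^{d_2} \otimes \C^{M}$ can be realized as the polar part of a rectangular complex Ginibre matrix: if $G$ is a $d_2 M \times d_1$ complex Ginibre matrix (which requires $Md_2 \geq d_1$, exactly our hypothesis), then
\begin{equation}
V := G\,(G^\dagger G)^{-1/2}
\end{equation}
is distributed according to the Haar measure on the Stiefel manifold of isometries. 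This follows from the bi-unitary invariance of the Ginibre distribution together with the uniqueness of the Haar measure on isometries, and it is the probabilistic analog of the Gram--Schmidt (QR) construction of Haar unitaries already invoked earlier in the paper.

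The next step is to match the block structure of the Stinespring picture with the Kraus picture. Fix an orthonormal basis $\{\ket e\}_{e=1}^{M}$ of the environment $\C^{M}$, and decompose the Ginibre matrix into $M$ independent blocks $G_1, \ldots, G_M \in M_{d_2 \times d_1}(\C)$ by writing $G = \sum_{e=1}^{M} G_e \otimes \ket e$. Because the entries of $G$ are i.i.d.\ standard complex Gaussians, the blocks $G_1, \ldots, G_M$ are exactly the $M$ independent Ginibre matrices appearing in step 1 of Definition~\ref{def:random-Kraus}. Moreover,
\begin{equation}
G^\dagger G = \sum_{e=1}^{M} G_e^\dagger G_e = H,
\end{equation}
so the matrix $H$ used for normalization in the Kraus construction is the same one that appears in the polar decomposition giving $V$. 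Consequently the blocks of $V$ are $V_e = (\boldone_{d_2} \otimes \bra{e}) V = G_e H^{-1/2} = A_e$, the Kraus operators from Definition~\ref{def:random-Kraus}.

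It only remains to verify that the Stinespring channel associated with $V$ coincides with the Kraus channel associated with $\{A_e\}$. Expanding the partial trace in the basis $\{\ket e\}$,
\begin{equation}
[\operatorname{id}_{d_2} \otimes \Tr_M]\bigl(V \rho V^\dagger\bigr) = \sum_{e=1}^{M} (\boldone_{d_2} \otimes \bra e) V \rho V^\dagger (\boldone_{d_2} \otimes \ket e) = \sum_{e=1}^{M} A_e \rho A_e^\dagger,
\end{equation}
which is exactly the Kraus form used in Definition~\ref{def:random-Kraus}. This produces an explicit measure-preserving coupling between the random objects of Definitions~\ref{def:random-Kraus} and~\ref{def:random-Stinespring}, yielding $\mu_{d_1,d_2;M}^{Stinespring} = \mu_{d_1,d_2;M}^{Kraus} = \mu_{d_1,d_2;M}^{Choi}$.

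The only genuine obstacle is the first step, namely justifying that $G (G^\dagger G)^{-1/2}$ is Haar-distributed on the Stiefel manifold. This is standard (one writes $G = V \sqrt{G^\dagger G}$ as the polar decomposition, invokes independence of the polar factors for a bi-unitarily invariant distribution, and checks invariance of the law of $V$ under left multiplication by $\mathcal U(d_2 M)$); I would either cite this from one of the random-matrix references already used, e.g.~\cite{Mez07, ZS00}, or include it as a short auxiliary lemma in the appendix alongside the proof of Proposition~\ref{prop:Choi-vs-Kraus}.
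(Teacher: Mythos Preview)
Your proof is correct and rests on the same key ingredient as the paper's: realizing the Haar isometry as the polar part $V=G(G^\dagger G)^{-1/2}$ of a rectangular Ginibre matrix and reading off the Kraus operators as the environment blocks of $V$. The only difference is organizational: you factor through Proposition~\ref{prop:Choi-vs-Kraus} (showing $\mu^{Stinespring}=\mu^{Kraus}$ and then invoking $\mu^{Kraus}=\mu^{Choi}$), whereas the paper computes the Choi matrix of the Stinespring channel directly and identifies it with the normalized Wishart of Definition~\ref{def:random-Choi} via a reshuffled Ginibre matrix $\widetilde G$; both routes are equivalent and of comparable length.
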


\bigskip

 {\bf d) The Lebesgue (flat) measure.}

Finally, a natural probability measure is the (normalized) Lebesgue (or flat)
measure on the set of quantum channels. Since the set $\CC_{d_1, d_2}$ is a
convex compact set, one can endow it with the probability measure obtained by
normalizing the volume (or Hilbert-Schmidt) measure to have total mass 1. We
have the following remarkable statement, see Appendix \ref{appendix-3} for a
proof.

\begin{proposition}\label{prop:Lebesgue-vs-others} The flat (or Lebesgue)
measure $\mu_{d_1,d_2}^{Lebesgue}$ on the set of quantum channels is a
particular case of the constructions in Definitions \ref{def:random-Choi},
\ref{def:random-Kraus}, \ref{def:random-Stinespring}, obtained for the value $M
= d_1d_2$:
\begin{equation}
  \mu_{d_1,d_2}^{Lebesgue} = \mu_{d_1,d_2;d_1d_2}^{Stinespring} = \mu_{d_1,d_2;d_1d_2}^{Kraus} = \mu_{d_1,d_2;d_1d_2}^{Choi}.
\end{equation}
\end{proposition}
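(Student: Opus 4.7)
My plan is to begin by invoking Propositions~\ref{prop:Choi-vs-Kraus} and \ref{prop:Stinesping-equals-Kraus}, which already identify the three constructions for every admissible $M$; hence it suffices to verify the single equality $\mu^{Choi}_{d_1,d_2;d_1 d_2} = \mu^{Lebesgue}_{d_1,d_2}$. Under the Jamio{\l}kowski--Choi isomorphism the right-hand side is, by definition, the normalized Hilbert--Schmidt measure on the convex body
\begin{equation*}
\mathcal{J}_{d_1,d_2} := \bigl\{J = J^\dagger \in M_{d_1 d_2}(\C) : J \geq 0,\ [\Tr_{d_2}\otimes \operatorname{id}_{d_1}](J) = \mathbbm{1}_{d_1}\bigr\},
\end{equation*}
so the task reduces to showing that the pushforward of the Wishart measure of Definition~\ref{def:random-Choi} is uniform on $\mathcal{J}_{d_1,d_2}$.

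The key input from random matrix theory is the explicit complex Wishart density: for parameters $(n,s)$ with $s \geq n$, $W$ has density proportional to $(\det W)^{s-n} e^{-\Tr W}$ with respect to the Lebesgue measure on the positive-definite cone of $n \times n$ Hermitian matrices. The choice $M = d_1 d_2$ is the unique one making $s = n$, so the determinantal factor disappears and the density collapses to $e^{-\Tr W}$. I then perform the global change of variables $W \leftrightarrow (J, H)$ with $H = [\Tr_{d_2}\otimes \operatorname{id}_{d_1}](W)$ and $J = (\mathbbm{1}_{d_2}\otimes H^{-1/2}) W (\mathbbm{1}_{d_2}\otimes H^{-1/2})$, a diffeomorphism between positive-definite $W$ and the interior of $\mathcal{J}_{d_1,d_2}$ times the positive-definite $d_1\times d_1$ Hermitian matrices, with inverse $W = (\mathbbm{1}_{d_2}\otimes H^{1/2}) J (\mathbbm{1}_{d_2}\otimes H^{1/2})$. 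Using cyclicity and the trace-preservation condition on $J$, the exponent simplifies to
\begin{equation*}
\Tr W = \Tr\bigl(J (\mathbbm{1}_{d_2}\otimes H)\bigr) = \Tr\bigl(H\cdot [\Tr_{d_2}\otimes\operatorname{id}_{d_1}](J)\bigr) = \Tr H,
\end{equation*}
so that the exponential becomes $e^{-\Tr H}$, a function of $H$ alone.

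The main obstacle is to show that the Jacobian of this change of variables also depends only on $H$. I plan to decompose the ambient Hermitian space as the orthogonal direct sum $V_0 \oplus V_1$, with $V_0 := \ker(\Tr_{d_2}\otimes \operatorname{id}_{d_1})$ and $V_1 := \{(\mathbbm{1}_{d_2}/d_2) \otimes Y : Y = Y^\dagger \in M_{d_1}(\C)\}$; these match the tangent directions of $\mathcal{J}_{d_1,d_2}$ together with the $H$-directions. Since the conjugation by $\mathbbm{1}_{d_2}\otimes H^{1/2}$ preserves $V_0$, and since $[\Tr_{d_2}\otimes\operatorname{id}_{d_1}](\delta W) = \delta H$, the differential of the inverse map becomes block upper-triangular in these coordinates, with $J$-block the restriction of $\delta J \mapsto (\mathbbm{1}_{d_2}\otimes H^{1/2}) \delta J (\mathbbm{1}_{d_2}\otimes H^{1/2})$ to $V_0$ and $H$-block $\delta H \mapsto (\mathbbm{1}_{d_2}/d_2)\otimes \delta H$ into $V_1$. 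Factoring the total determinant $\det(H)^{d_1 d_2^2}$ of the conjugation on $\{X=X^\dagger \in M_{d_1d_2}(\C)\}$ by the determinant $\det(H)^{d_1}$ of its restriction to the invariant subspace $V_1$ gives $\det(H)^{d_1(d_2^2-1)}$ on $V_0$, while the $V_1$-block contributes only a $d_2$-dependent constant. Therefore the joint density of $(J, H)$ factorizes as a function of $H$ alone against the product of Lebesgue measures on $\mathcal{J}_{d_1,d_2}$ and on $H$, and marginalizing out $H$ shows that $J$ is uniform on $\mathcal{J}_{d_1,d_2}$, completing the proof.
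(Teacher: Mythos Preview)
Your proof is correct and follows essentially the same strategy as the paper's: compute the density of the Choi matrix $J_\Phi$ by a change of variables from the Wishart ensemble and observe that at $M=d_1d_2$ the $J$-dependent factor $(\det J)^{M-d_1d_2}$ disappears. Your explicit treatment of the Jacobian via the orthogonal decomposition $V_0\oplus V_1$ is more detailed than the paper's terse delta-function manipulation, but the underlying argument is the same.
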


\bigskip

To conclude, we have provided several classes of probability measures on the set
of quantum channels $\CC_{d_1,d_2}$, indexed by a real or integer parameter $M$,
which coincide for identical values of $M \geq d_1 / d_2$. The proof of 
the equivalence of measures generated by families \textbf{a)} and \textbf{b)} 
is the consequence of the isomorphism defined in 
\eqref{jamiol}, while the equivalence with \textbf{c)} follows from the fact 
that Ginibre ensemble induces Haar measure on the set of unitary matrices, 
given the transformation $G \mapsto G (G^\dagger G)^{-1/2}$.
 
 The proposed families can be ordered, from particular to general, as below 
 (see Propositions \ref{prop:Choi-vs-Kraus},
\ref{prop:Stinesping-equals-Kraus},\ref{prop:Lebesgue-vs-others}). The following
relation is the main result of the first part of this work:
\begin{equation}\label{eq:all-measures}
\mu_{d_1,d_2}^{Lebesgue} \in
\bigg\{ \mu_{d_1,d_2;M}^{Stinespring} \bigg\}_{\substack{M \in \mathbb N \\ M \geq d_2/d_1}} =
\bigg\{ \mu_{d_1,d_2;M}^{Kraus} \bigg\}_{\substack{M \in \mathbb N \\ M \geq d_2/d_1}} \subset
\bigg\{ \mu_{d_1,d_2;M}^{Choi} \bigg\}_{M \in \mathcal M_{d_1,d_2}}.
\end{equation}

Note that each of the above procedures has its advantages. The environmental
form \textbf{c)} has a clear physical interpretation and can be approximated in
an experiment, in which a random unitary matrix $U$ can be approximated as an
evolution operator of a quantum chaotic system \cite{Ha10,Br01}. On the other
hand it is not suitable for numerical simulations. To see this, let us consider,
for the sake of simplicity, the case $d_1 = d_2 = d$, corresponding to the same
input and output system sizes. In order to obtain a distribution, parameterized
by $M$ on the set of quantum channels transforming $d$ dimensional systems to
$d$ dimensional systems, we need to generate and store a unitary matrix $U \in
\mathcal U(M d)$. This in turn involves computing the QR decomposition, which
for a matrix of dimension $n$ has the complexity $O(n^3)$. In our case we get at
least $O(M^3 d^3)$ multiplications. For $M=d^2$ we get the complexity $O(d^9)$.
The forms \textbf{a)} and \textbf{b)} based on Wishart matrices and independent
random Kraus operators respectively are the easiest to work with in numerical
simulations of a generic quantum channel. Both cases \textbf{a)} and \textbf{b)}
involve calculating the inverse of square of a $d$ dimensional matrix. Aside
from this we have in \textbf{b)} $d^2$ multiplications of $d$ dimensional
matrices. As matrix multiplication in typical implementations has the complexity
of $O(d^3)$, we get that the overall complexity is $O(d^5)$. The case
\textbf{a)} involves the multiplication of $d^2$ dimensional matrix $W$ hence it
has the complexity of at least $O(d^6)$; note also that there does not exist a
simple procedure to sample from a Wishart distribution of parameters $(d_1d_2,
M)$ for non-integer $M \in \mathcal M_{d_1,d_2}$. Therefore, for numerical
implementations one can recommend algorithm \textbf{b)} involving random Kraus
operators.

Several other families of probability distributions on the set $\mathcal
C_{d_1,d_2}$ of physical and mathematical interest are discussed in Appendix
\ref{app:other-measures}.

\section{Distribution of output states of random quantum channels}
\label{sec_distr}

We consider in this section the output state of a random quantum channel, for a
given input. We start by recalling the induced measures on the set of density
matrices. This one-parameter family of probability measures $\nu_{d;s}$ has been
introduced in \cite{SZ04} and can be described in two equivalent ways. Let us
define, for a given Hilbert space dimension $d$, the set of admissible
parameters
\begin{equation}
\mathcal S:=\{1,2, \ldots, d-1\} \sqcup [d, \infty),
\end{equation}
which is precisely the set of allowed parameters for the complex Wishart
distribution. On the one hand, one can consider a complex Wishart matrix $W$ of
parameters $(d,s)$ (where $d$ is the size of $W$ and $s \in \mathcal S$ is a
parameter) and normalize its trace:
\begin{equation}
\frac{W}{\Tr W} \sim \nu_{d;s}.
\end{equation}
Equivalently, for the integers $s \in \mathcal S$, one can consider a uniformly
distributed vector $x$ on the unit sphere of $\C^{ds}$ and take its partial
trace with respect to the ``environment'' $\C^s$:
\begin{equation}\label{eq:induced-measure-environment}
	[\mathrm{id}_d \otimes \Tr_s] \ketbra x x \sim \nu_{d;s}.
\end{equation}

Remarkably, the uniform measure on the set of $d \times d$ density matrices
corresponds to the particular value $s=d$: $\nu^{Lebesgue} = \nu_{d;d}$
\cite{ZSo01}. This fact is to be compared with the situation for quantum
channels, see Proposition \ref{prop:Lebesgue-vs-others} and
Eq.~\eqref{eq:all-measures}.

\begin{proposition}\label{prop:output-state-induced-measure}
	Let $\Phi:M_{d_1}(\C) \to M_{d_2}(\C)$ be a random quantum channel having
	distribution $\mu^{Stinespring}_{d_1,d_2;M}$ for integer $M \geq d_1/d_2$.
	Then, for any given fixed
	pure input state $\ketbra \psi \psi$, the output state $\Phi(\ketbra \psi
	\psi)$	has distribution $\nu_{d_2;M}$.
\end{proposition}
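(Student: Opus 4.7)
The proof should be very short, essentially bookkeeping across the two environmental descriptions. Starting from Definition~\ref{def:random-Stinespring}, on the pure input $\ketbra{\psi}{\psi}$ we get
\begin{equation}
\Phi(\ketbra{\psi}{\psi}) = [\operatorname{id}_{d_2} \otimes \Tr_M]\bigl(V\ketbra{\psi}{\psi}V^\dagger\bigr) = [\operatorname{id}_{d_2} \otimes \Tr_M]\ketbra{\phi}{\phi},
\end{equation}
where $\ket{\phi} := V\ket{\psi} \in \C^{d_2}\otimes\C^M$ is a unit vector since $V$ is an isometry. So the plan is to show that $\ket{\phi}$ is Haar-uniform on the unit sphere of $\C^{d_2M}$ and then invoke the environmental description \eqref{eq:induced-measure-environment} of the induced measure $\nu_{d_2;M}$.

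First I would recall that a Haar-distributed random isometry $V:\C^{d_1}\to\C^{d_2M}$ is obtained by truncating a Haar-distributed unitary $U\in\mathcal U(d_2M)$ to its first $d_1$ columns. Writing $\tilde{\ket{\psi}}:=\ket{\psi}\oplus 0 \in \C^{d_2M}$, we have $V\ket{\psi} = U\tilde{\ket{\psi}}$. By left unitary invariance of the Haar measure on $\mathcal U(d_2M)$, the distribution of $U\tilde{\ket{\psi}}$ depends only on $\|\tilde{\ket{\psi}}\|=1$; equivalently, picking any unitary $W\in\mathcal U(d_2M)$ with $W\ket{e_1}=\tilde{\ket{\psi}}$ we have $U\tilde{\ket{\psi}} = UW\ket{e_1}$ and $UW\stackrel{d}{=}U$. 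Hence $\ket{\phi}=V\ket{\psi}$ is uniformly distributed on the unit sphere of $\C^{d_2}\otimes\C^M$, independently of the choice of input $\ket{\psi}$.

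Second, by the environmental characterization of the induced measure recalled just above Proposition~\ref{prop:output-state-induced-measure}, the partial trace $[\operatorname{id}_{d_2}\otimes\Tr_M]\ketbra{\phi}{\phi}$ of a Haar-random unit vector on $\C^{d_2}\otimes\C^M$ is distributed according to $\nu_{d_2;M}$. Combining this with the previous step yields $\Phi(\ketbra{\psi}{\psi}) \sim \nu_{d_2;M}$, as claimed.

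There is no real obstacle; the only point requiring mild care is the unitary-invariance argument identifying the law of $V\ket{\psi}$, which is where the choice of the Stinespring model (rather than, say, the Kraus or Choi model) makes the statement immediate. Had one started from the Kraus form, one would instead need to verify that $\sum_i A_i\ketbra{\psi}{\psi}A_i^\dagger$ has the same law via the equivalence given by Proposition~\ref{prop:Stinesping-equals-Kraus}, but the Stinespring route makes the identification direct.
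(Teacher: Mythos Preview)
Your proof is correct and follows essentially the same route as the paper's: write $\Phi(\ketbra{\psi}{\psi})$ via the Stinespring isometry, observe that $V\ket{\psi}$ is uniform on the unit sphere of $\C^{d_2M}$, and conclude by the environmental description~\eqref{eq:induced-measure-environment} of $\nu_{d_2;M}$. The only difference is that you spell out the invariance argument for the law of $V\ket{\psi}$ (note that what you use is \emph{right} invariance of the Haar measure, since you write $UW\stackrel{d}{=}U$), whereas the paper simply asserts this fact in one line.
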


\begin{proof}
	We have, using the Stinespring form of the random quantum channel $\Phi$:
	\begin{equation}\Phi(\ketbra \psi \psi) = [\mathrm{id}_{d_2} \otimes \Tr_M]
	(V\ketbra{\psi}{\psi}V^*) = [\mathrm{id}_{d_2} \otimes \Tr_M] \ketbra{x}{x},
	\end{equation}
	where $|x \rangle := V\ket{\psi}$. Since the isometry $V$ is
	Haar-distributed and the unit vector $\psi$ is fixed, the vector $x$ is
	uniformly distributed on the unit sphere of $\C^{d_2M}$. The conclusion
	follows from the environmental description of the induced measures, see
	Eq.~\eqref{eq:induced-measure-environment}.
\end{proof}

From the proposition above, we can infer that the average of the output state
(with respect to the randomness in the channel) for a fixed input is the
maximally mixed state
\begin{equation}\E \Phi(\ketbra \psi \psi) = \E_{\nu_{d_2;M}} \rho =
\frac{\1_{d_2}}{d_2}.
\end{equation}
This fact is equally a consequence of the following result.

\begin{proposition}
	The average of a random quantum channel having distribution
	$\mu^{Stinespring}_{d_1,d_2;M}$ is the maximally depolarizing channel $\E
	\Phi = \Phi_*$, with
	\begin{equation}
	\begin{split}
		\Phi_*: M_{d_1}(\C) &\to M_{d_2}(\C)\\
		X &\mapsto \Tr(X) \frac{\1_{d_2}}{d_2}.
		\end{split}
	\end{equation}
\end{proposition}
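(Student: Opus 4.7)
The plan is to give a one-line extension of Proposition~\ref{prop:output-state-induced-measure} by either (i) invoking linearity of $\Phi$ to pass from pure inputs to arbitrary matrices, or (ii) working directly from the Haar invariance built into the Stinespring definition. I would present approach (ii), as it avoids any need to decompose a general $X$ into rank-one projectors and is self-contained.

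Concretely, I would start from the Stinespring representation \eqref{eq:def-Stinespring}, $\Phi(X) = [\operatorname{id}_{d_2} \otimes \Tr_M](VXV^\dagger)$, where $V:\C^{d_1}\to\C^{d_2}\otimes\C^M$ is Haar-distributed on the isometries. The key observation is that for every fixed $U\in\mathcal U(d_2M)$, the isometry $UV$ has the same distribution as $V$. Applying expectation to the identity $U\,\mathbb E[VXV^\dagger]\,U^\dagger = \mathbb E[(UV)X(UV)^\dagger] = \mathbb E[VXV^\dagger]$ shows that $\mathbb E[VXV^\dagger]$ commutes with every unitary on $\C^{d_2M}$, hence by Schur's lemma must be a scalar multiple of $\mathbbm 1_{d_2M}$. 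The scalar is fixed by a trace computation: using the isometry condition $V^\dagger V=\mathbbm 1_{d_1}$ and cyclicity of the trace,
\begin{equation}
\Tr\bigl(\mathbb E[VXV^\dagger]\bigr) = \mathbb E\,\Tr(X V^\dagger V) = \Tr(X),
\end{equation}
so $\mathbb E[VXV^\dagger] = \Tr(X)\,\mathbbm 1_{d_2M}/(d_2M)$.

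Substituting this into the Stinespring formula and pulling the partial trace through the scalar gives
\begin{equation}
\mathbb E\,\Phi(X) = [\operatorname{id}_{d_2}\otimes \Tr_M]\!\left(\frac{\Tr(X)}{d_2M}\,\mathbbm 1_{d_2}\otimes\mathbbm 1_M\right) = \Tr(X)\,\frac{\mathbbm 1_{d_2}}{d_2},
\end{equation}
which is exactly $\Phi_*(X)$. Since this holds for every $X\in M_{d_1}(\C)$, we conclude $\mathbb E\,\Phi=\Phi_*$.

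There is no real obstacle: the argument is essentially a Schur-type symmetrization, and the hypothesis that $V$ is Haar-distributed on isometries is precisely what one needs to invoke left-invariance. The only subtlety worth flagging is that one should use left-invariance under $\mathcal U(d_2M)$ (acting on the target of $V$), not right-invariance under $\mathcal U(d_1)$, which would only yield information about $V^\dagger$ acting on a fixed state rather than about $VXV^\dagger$ for general $X$. If one instead preferred the linearity route, one would write an arbitrary $X$ as $X=X_1+iX_2$ with Hermitian parts and spectrally decompose each into rank-one projectors, applying Proposition~\ref{prop:output-state-induced-measure} term by term; this also works but is slightly less elegant.
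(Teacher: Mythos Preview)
Your argument is correct and complete. It does, however, take a different route from the paper. The paper establishes the result by computing the average Choi matrix via Weingarten calculus, obtaining $\E J_\Phi = \1_{d_2d_1}/d_2 = J_{\Phi_*}$, and then invoking the Jamio{\l}kowski--Choi isomorphism. Your approach instead works directly in the Stinespring picture and replaces the Weingarten machinery with an elementary Schur-type symmetrization: left-invariance of the Haar isometry under $\mathcal U(d_2M)$ forces $\E[VXV^\dagger]$ to be scalar, and the scalar is pinned down by a single trace. This is more self-contained and avoids importing the Weingarten formalism for what is only a first-moment computation; the paper's route, on the other hand, dovetails with the Weingarten-based second-moment calculations (purity, unitarity) that follow it, so the Choi-matrix framing is natural in context even if slightly heavier here.
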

\begin{proof}
	The conclusion follows easily from the computation of the average Choi
	matrix $J_\Phi$ using Weingarten calculus \cite{cs06,fkn19}:
	\begin{equation}\E J_\Phi = M \1_{d_2d_1} \frac{1}{d_2M} =
	\frac{\1_{d_2d_1}}{d_2} = J_{\Phi_*}.
	\end{equation}
\end{proof}

Let us now consider two statistical quantities associated to an arbitrary
quantum channel: the \emph{average output purity} and the \emph{unitarity}
\cite{wal19}:
\begin{equation}
\begin{split}
p(\Phi) &= \E \Tr \left(\Phi(\ketbra \psi \psi)^2 \right)\\
u(\Phi) &= \frac{d_1}{d_1-1}\E \Tr \left( (\Phi(\ketbra \psi \psi) -
\Phi(\1_{d_1}/d_1))^2 \right),
\end{split}
\end{equation}
where the expectation corresponds to the choice of a uniform unit vector $\psi$
on the unit sphere of the input space $\C^{d_1}$. Note that for unital channels
(satisfying $\Phi(\1_{d}) = \1_{d}$ for $d=d_1=d_2$), the two quantities above
are related by the relation $\frac{d-1}{d} u(\Phi)=p(\Phi)-1/d$. We compute the
averages of these two quantities in the next proposition.

\begin{proposition}\label{prop:purity-unitarity}
	Let $\Phi:M_{d_1}(\C) \to M_{d_2}(\C)$ be a random quantum channel having
	distribution $\mu^{Stinespring}_{d_1,d_2;M}$, for an integer $M \geq
	d_1/d_2$. Then
	  the expectation values of the average output purity and unitarity read,
	\begin{align}
		\label{eq:average-output-purity}\E p(\Phi)& = \frac{d_2+M}{d_2M+1},\\
		\label{eq:average-unitarity}\E u(\Phi)& = \frac{M(d_2^2-1)}{(d_2M)^2-1}.
	\end{align}
\end{proposition}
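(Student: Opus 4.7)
The plan is to treat the two formulas separately: the first follows almost immediately from Proposition~\ref{prop:output-state-induced-measure}, while the second requires a short Stinespring--Weingarten computation.

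For \eqref{eq:average-output-purity}, by Proposition~\ref{prop:output-state-induced-measure} the random output $\Phi(\ketbra{\psi}{\psi})$ for any fixed pure input $\ket{\psi}$ has the induced distribution $\nu_{d_2;M}$, so
\[
\E p(\Phi) = \E_{\nu_{d_2;M}}\Tr(\rho^2) = \frac{d_2+M}{d_2M+1},
\]
using the standard formula for the average purity under the induced measure (derivable from Wishart moments applied to $\rho = W/\Tr W$, or equivalently from Weingarten on a uniform pure state in $\C^{d_2}\otimes\C^M$).

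For \eqref{eq:average-unitarity}, I first integrate the inner expectation over $\psi$ alone: expanding the square and using $\E_\psi\Phi(\ketbra{\psi}{\psi})=\Phi(\1_{d_1}/d_1)=:\sigma$, the cross term combines with the third to give
\[
u(\Phi) = \frac{d_1}{d_1-1}\bigl(p(\Phi)-\Tr(\sigma^2)\bigr),
\]
so that only $\E\Tr(\sigma^2)$ remains to be computed. Plugging in the Stinespring form $\sigma=\frac{1}{d_1}[\operatorname{id}_{d_2}\otimes\Tr_M](VV^\dagger)$ with $V:\C^{d_1}\to\C^{d_2}\otimes\C^M$ a Haar-random isometry (equivalently, the first $d_1$ columns of a Haar unitary on $\mathcal{U}(d_2M)$), the quantity $\Tr(\sigma^2)$ becomes a degree-four monomial in the entries $V_{ij,k}$ with a fixed contraction pattern. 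Its expectation can then be evaluated using the Weingarten formula on $\mathcal{U}(d_2M)$: one obtains a sum over $(\alpha,\beta)\in S_2\times S_2$ of four terms, each a product of a Weingarten weight $\mathrm{Wg}(d_2M,\alpha\beta^{-1})$, an output-index count ($d_2M^2$ or $d_2^2M$, according to whether $\alpha$ forces $i=i'$ or $j=j'$) and an input-index count ($d_1^2$ or $d_1$, according to $\beta$). Collecting yields
\[
\E\Tr(\sigma^2)=\frac{d_1 d_2(M^2-1)+M(d_2^2-1)}{d_1\bigl((d_2M)^2-1\bigr)}.
\]

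To conclude, I substitute $\E p(\Phi)$ and $\E\Tr(\sigma^2)$ into the identity for $u(\Phi)$ and clear to the common denominator $d_1(d_2M-1)(d_2M+1)$. The numerator collapses to $(d_1-1)M(d_2^2-1)$ via the simple cancellation $(d_2+M)(d_2M-1)-d_2(M^2-1)=M(d_2^2-1)$, and the $d_1/(d_1-1)$ prefactor is killed exactly, producing \eqref{eq:average-unitarity}. The main obstacle is the Weingarten bookkeeping in computing $\E\Tr(\sigma^2)$ (four index-permutation pairs, with the output leg carrying the $d_2M$-dimensional Haar structure and the input leg the $d_1$-dimensional identity pattern); the clean closed form for the unitarity reflects the near-miracle cancellation in the last step.
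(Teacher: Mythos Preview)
Your proof is correct, and it runs parallel to the paper's argument, with a small but real organizational difference. The paper packages everything through a single lemma (Lemma~\ref{lem:Wg}), which computes $\E\Tr[\Phi(A)\Phi(B)]$ for arbitrary $A,B$, and then applies it with $A=B=\ketbra{\psi}{\psi}$ for the purity, and with $A=\ketbra{\psi}{\psi},\,B=\1_{d_1}/d_1$ and $A=B=\1_{d_1}/d_1$ for the unitarity (after absorbing the $\psi$-average into the $\Phi$-average). Your route instead (i) takes the induced-measure shortcut for the purity (which the paper explicitly mentions as an alternative), and (ii) performs the $\psi$-average \emph{before} the $\Phi$-average to obtain the deterministic identity $u(\Phi)=\frac{d_1}{d_1-1}\bigl(p(\Phi)-\Tr(\sigma^2)\bigr)$, so that only one Weingarten computation ($\E\Tr(\sigma^2)$) remains. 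Your formula for $\E\Tr(\sigma^2)$ is exactly what Lemma~\ref{lem:Wg} gives at $A=B=\1_{d_1}/d_1$, and your final algebraic cancellation is correct. The paper's version buys a reusable lemma; your version buys one fewer application of it.
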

\begin{proof}
	First of all, note that the expectation over the random pure state
	$\ket{\psi}$ in the definition of the quantities $p,u$ can be absorbed in
	the expectation over the random channel $\Phi$. Hence, in the following, we
	shall assume that $\ket{\psi}$ is some fixed unit vector in $\C^{d_1}$. We
	shall make use of the following result, proven in Lemma~\ref{lem:Wg} in
	Appendix \ref{app:lemma-Wg}:
	\begin{equation}\E \Tr[\Phi(A)\Phi(B)] = \frac{(\Tr A)(\Tr B)d_2(M^2-1) +
	\Tr(AB)M(d_2^2-1)}{(d_2M)^2-1}.
	\end{equation}

	Applying the result above for $A=B =\ketbra \psi \psi$ gives us
	Eq.~\eqref{eq:average-output-purity}. Note that this result could have been
	obtained directly from Proposition \ref{prop:output-state-induced-measure},
	using the formula for the average purity of a random density matrix from
	\cite[Eq.~(5.11)]{SZ04}.

	To show \eqref{eq:average-unitarity}, we make use again of Lemma
	\ref{lem:Wg}, this time for $A=\ketbra \psi \psi$, $B=\1_{d_1}/{d_1}$ and
	then with $A=B=\1_{d_1}/{d_1}$.
\end{proof}
Note that in the regime where $d_2 \to \infty$, the average purity of a random
quantum channels scales as $1/M$.

\begin{corollary}
	The average output purity and the average unitarity of a uniformly
	distributed random quantum channel $\Phi \sim \mu^{Lebesgue}_{d,d}$ are

\begin{equation}
\E p(\Phi) = \frac{d}{d^2-d+1} \qquad \text{and}\qquad \E u(\Phi) =
\frac{d^2}{d^4+d^2+1}.
\label{unitarit}
\end{equation}
\end{corollary}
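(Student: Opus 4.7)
The plan is to derive this corollary as an immediate specialization of Proposition \ref{prop:purity-unitarity}. The first step is to invoke Proposition \ref{prop:Lebesgue-vs-others}, which identifies the Lebesgue measure $\mu^{Lebesgue}_{d,d}$ with $\mu^{Stinespring}_{d,d;d^2}$, i.e.\ the Stinespring ensemble with environment dimension $M = d_1 d_2 = d^2$. This puts us in a position to apply the formulas \eqref{eq:average-output-purity} and \eqref{eq:average-unitarity}.

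Next I would substitute $d_1 = d_2 = d$ and $M = d^2$ directly into those two formulas. For the purity, the numerator becomes $d_2 + M = d + d^2 = d(d+1)$ and the denominator becomes $d_2 M + 1 = d^3 + 1$. The main simplification step is the factorization $d^3 + 1 = (d+1)(d^2 - d + 1)$, which allows the factor $d+1$ to cancel and yields $\E p(\Phi) = d/(d^2 - d + 1)$. For the unitarity, the numerator is $M(d_2^2 - 1) = d^2(d^2 - 1)$ and the denominator is $(d_2 M)^2 - 1 = d^6 - 1$. Here one uses the factorization $d^6 - 1 = (d^2 - 1)(d^4 + d^2 + 1)$ to cancel the factor $d^2 - 1$, producing $\E u(\Phi) = d^2/(d^4 + d^2 + 1)$.

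There is no real obstacle here: the entire argument is a substitution combined with two elementary polynomial factorizations, both instances of the standard identity $a^n - b^n$ (or $a^n + b^n$ for odd $n$). The only thing to be careful about is checking that $M = d^2$ is an admissible integer parameter satisfying $M d_2 \geq d_1$, which holds trivially since $d^2 \cdot d \geq d$ for $d \geq 1$, so Proposition \ref{prop:purity-unitarity} genuinely applies.
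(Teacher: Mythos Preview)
Your proposal is correct and is precisely the intended derivation: the paper states the corollary immediately after Proposition~\ref{prop:purity-unitarity} without an explicit proof, since it follows by invoking Proposition~\ref{prop:Lebesgue-vs-others} to set $d_1=d_2=d$, $M=d^2$, and then simplifying via the factorizations $d^3+1=(d+1)(d^2-d+1)$ and $d^6-1=(d^2-1)(d^4+d^2+1)$ exactly as you describe.
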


\section{Spectral properties of the superoperator}
\label{sec:spectrum-random-quantum-channels}

In this section we analyze spectral properties (singular values and eigenvalues)
of generic superoperators $\Phi$ represented by a non-Hermitian matrix of order
$d^2$ (we consider the case $d_1=d_2=d$ here). Note that we use the same letter
$\Phi$ to denote the linear map representing a quantum channel $\Phi : \mathcal
M_d(\mathbb C) \to \mathcal M_d(\mathbb C)$ and the corresponding matrix, when
seen as an operator on $\mathbb C^{d^2} \cong \mathcal M_d(\mathbb C)$. As
explained in Section \ref{sec:different-measures}, one obtains the superoperator
matrix by reshuffling the Choi (or dynamical) matrix, $\Phi = J_{\Phi}^R$.

Before we move on to study random superoperators, let us first recall some
general properties of such matrices. If $\Phi$ is the superoperator of a quantum
channel (completely positive and trace preserving linear map), then
\cite{ehk,gro}:
\begin{enumerate}
\item the spectrum of $\Phi$ is contained in the unit disk $\{z \in \mathbb C 
\, : \, |z| \leq 1\}$,
\item there exists a Perron-Frobenius eigenvalue $\lambda_1=1$,
\item the eigenspace of the eigenvalue $\lambda_1=1$ contains a positive semidefinite element.
\end{enumerate}
Actually, the structure of the spectrum of a superoperator is much richer (see,
e.g.,~\cite{bna,wol}), but the properties above are the only ones we need in
this paper. Of a crucial importance is the modulus of the sub-leading eigenvalue
$r=|\lambda_2|\leq 1$ and the spectral gap  $\gamma :=1-r \geq 0$ which
determines the convergence of the system to the equilibrium, see
Fig.~\ref{fig:2}.

\begin{figure}[h]
	\includegraphics[angle=0,width=0.7\columnwidth]{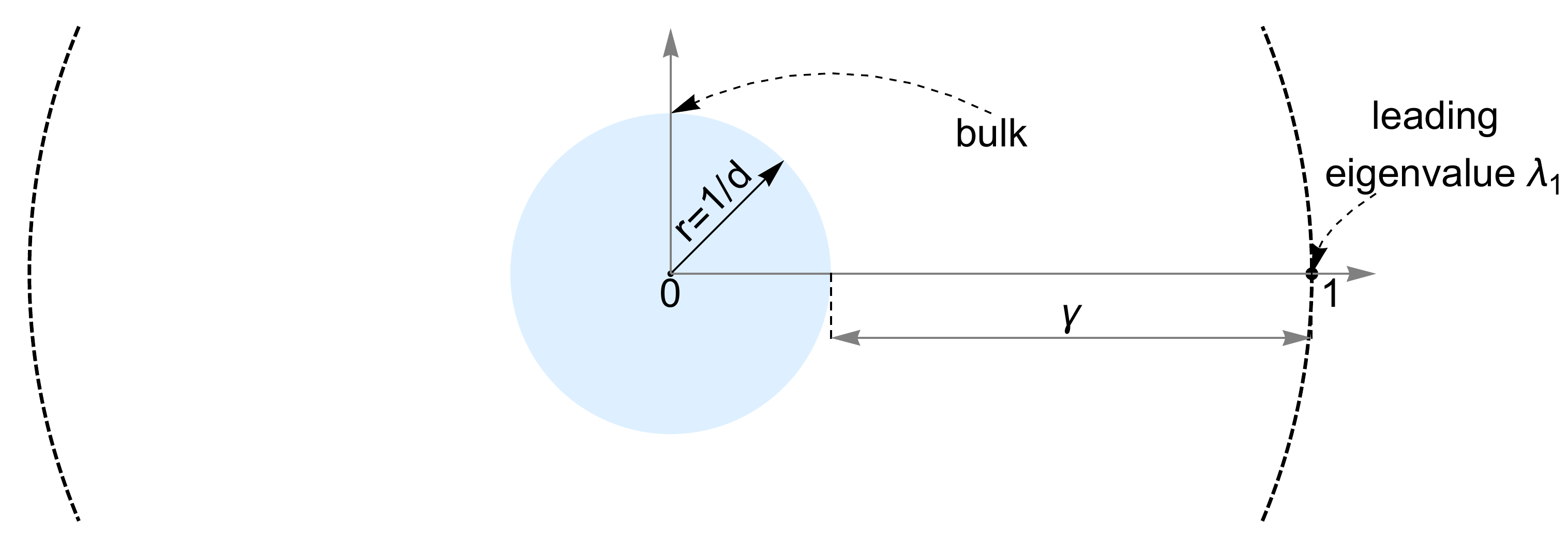}
	\caption{Sketch of the spectrum of a superoperator $\Phi$
	   associated with a  random channel, which consists of the
	      leading Perron-Frobenius eigenvalue $\lambda_1=1$ and a bulk
	     forming the Girko disk of radius $r\approx 1/d$. The spectral gap
	     reads $\gamma=1-r$.
	     }
	\label{fig:2}
\end{figure}

It was noted in \cite{BCSZ09} that the properties of the superoperator $\Phi$
corresponding to a random operation can be modeled by the real Ginibre ensemble.
To explain this fact it is convenient to use the Bloch vector representation of a map.
Any density operator $\rho$ of size $d$ can be represented
using the generalized Bloch vector,
\begin{equation}
\rho=\frac{1}{d} \left(
\mathbbm 1_d + \sum_{i = 1}^{d^2-1} \tau_i \; \Lambda_i \
\right),
\label{Bloch1}
\end{equation}
where  $\Lambda_i$ denotes the vector of three Pauli matrices  for $d=2$, and
are proportional to eight Gell-Mann matrices for $d=3$, while for a higher
dimension it represents the vector of $d^2-1$ hermitian and traceless generators
of $\textsf{SU(}d\textsf{)}$, normalized as ${\rm Tr}\left(\Lambda_i \Lambda_j
\right)= d \, \delta_{ij}$. Usually the order of the generators is not relevant,
but for the purpose of studying the quantum to classical transition and the
effects of super-decoherence and coherification of a channel
\cite{KCPZ18,KCPZ19} it will be convenient to choose the order $\sigma_z,
\sigma_x, \sigma_y$, and in higher dimensions select first  $d-1$ generators
$\Lambda_i$ as diagonal ones -- see Section \ref{sec:classical}. Since any
density matrix $\rho$  is Hermitian, all components of the Bloch vector ${\vec
\tau}$ are real, $\tau_i= {\rm Tr}\Lambda_i \rho  \in\mathbb{R}$ for
$i=1,\dots,d^2-1$.

In the case of a state $\rho_{AB}$ of a a bipartite $d \times d$ system is its
convenient to expand the density matrix in the product basis formed by tensor
products of the generators,  $\Lambda_i \otimes \Lambda_j$. It leads to the
following {\sl Fano form} \cite{Fa83} applicable to any bi-partite state
\cite{BZ17},
 \begin{equation}
\rho_{AB} =   \frac{1}{d^2} \sum_{i,j = 0}^{d^2-1} {\tilde R}_{ij} \Lambda_i \otimes \Lambda_j .
\label{Fano1}
\end{equation}
The expansion coefficients are given by the projection of the state onto
the elements of the product basis,
 ${\tilde R}_{ij}= \tr  \left( \rho_{AB} \left(\Lambda_i  \otimes  \Lambda_j
 \right) \right)$.
 As $\Lambda_0= {\mathbbm 1}$ the matrix ${\tilde R}$ takes the form
 \begin{equation}
{\tilde R} \; =\; \left[
\begin{array}{ll}
  1 & \vec a^T \\
{\vec b}&{ R}
\end{array}
\right] .
\label{fano2}
\end{equation}

The vectors $\vec a$ and $\vec b$ of length $d^2-1$ represent Bloch vectors
$\tau_A$ and $\tau_B$ of both partial traces, $\rho_A=[\mathrm{id}_A \otimes
{\rm Tr}_B] \rho_{AB}$ and  $\rho_B=[{\rm Tr}_A \otimes \mathrm{id}_B ]
\rho_{AB}$, respectively. These parameters can be thus determined locally, while
the square matrix $R$ of size $d^2-1$, a truncation of ${\tilde R}$, describes
correlations between both subsystems. In the case of a product state,
$\rho_{AB}=\rho_A \otimes \rho_B$, its elements are $R_{ij}=a_ib_j$,  so the
state is separable. In general, the real correlation matrix $R$ is non
symmetric, and for a fixed local Bloch vectors $\vec a$ and $\vec b$ only a
suitable choice of $R$ assures positivity of the state \cite{BH300}. The norm of
the correlation matrix $R$ can be used to formulate separability criteria -- for
a sufficiently small norm $||R||$ the state  $\rho_{AB}$ is separable
\cite{Vi07,H409}.

 Let $\vec \tau$ represents an initial state $\rho$ and $\vec \tau'$ be the
 Bloch vector of the image $\rho'=\Phi(\rho)$, were for simplicity we assumed
 that both dimensions are equal,  $d_1=d_2=d$. Any channel $\Phi$ can be
 now represented by the action on the Bloch vector,
\begin{equation}
{\vec \tau}' =   Q {\vec \tau} + {\vec \kappa},
\label{Bloch2}
\end{equation}
 where $Q$ is a real matrix of size $d^2-1$, while $\vec \kappa$
 is a translation vector of length $d^2-1$, which vanishes for unital maps.
Hence the superoperator $\Phi$ can be represented  \cite{TDV00}
 by an asymmetric  real matrix of order $d^2$,
\begin{equation}
\tilde{\Phi} \; =\; \left[
\begin{array}{ll}
  1 & 0 \\
{\vec \kappa}&{ Q}
\end{array}
\right] .
\label{super2}
\end{equation}
The above form is convenient to for spectral analysis: the spectrum of the
superoperator $\Phi$ consists of the leading eigenvalue $\lambda_1=1$ and the
$d^2-1$ eigenvalues of the matrix $Q$, which can be complex.
The trace of the real distortion matrix $Q$ has an operational 
interpretation as it determines the average fidelity between a random
pure state $|\psi\rangle \langle \psi|$ and its image
with respect to map $\Phi$ \cite{KBF20}.

Note a similarity  between the form (\ref{super2}) of an arbitrary operation
$\Phi$ and the matrix  (\ref{fano2}) appearing in the Fano form of a bipartite
state. The vector $a$ vanishes here due to the trace preserving condition. The
observed analogy can be formally explained with use of the Jamio{\l}kowski
isomorphism.

\begin{proposition}  \label{prop:Fano3}
  Bloch representation (\ref{super2}) of a quantum operation $\Phi$
  is equivalent with the Fano form (\ref{fano2})
  of the partially transposed Jamio{\l}kowski state~(\ref{jamiol}), 
  $J_{\Phi}^{T_2}/d$.
\end{proposition}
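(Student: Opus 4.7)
The plan is to compute the Fano coefficients of $J_\Phi^{T_2}/d$ directly from the definition and show, entry by entry, that the resulting matrix in block form $\left[\begin{smallmatrix} 1 & \vec a^T \\ \vec b & R \end{smallmatrix}\right]$ agrees with $\tilde \Phi$ as given in \eqref{super2}. To set this up, I would first record explicit formulas for the entries of $\tilde \Phi$. Using $\mathrm{Tr}(\Lambda_i \Lambda_j) = d\,\delta_{ij}$ and writing $\rho = \frac{1}{d}(\mathbbm{1}_d + \sum_j \tau_j \Lambda_j)$, the image $\Phi(\rho)$ has Bloch components $\tau'_i = \mathrm{Tr}(\Lambda_i \Phi(\rho))$; linearity of $\Phi$ and the relation \eqref{Bloch2} then identify
\begin{equation}
Q_{ij} = \tfrac{1}{d}\,\mathrm{Tr}\!\left(\Lambda_i\, \Phi(\Lambda_j)\right), \qquad \kappa_i = \tfrac{1}{d}\,\mathrm{Tr}\!\left(\Lambda_i\, \Phi(\mathbbm{1}_d)\right).
\end{equation}

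Next, I would compute the Fano coefficients of the state $\rho_{AB} = J_\Phi^{T_2}/d$. Starting from $J_\Phi = \sum_{k,l} \Phi(|k\rangle\langle l|) \otimes |k\rangle\langle l|$, the partial transpose on the second subsystem gives $J_\Phi^{T_2} = \sum_{k,l} \Phi(|k\rangle\langle l|) \otimes |l\rangle\langle k|$. Then
\begin{equation}
\tilde R_{ij} = \mathrm{Tr}\!\left[\tfrac{1}{d} J_\Phi^{T_2}\,(\Lambda_i \otimes \Lambda_j)\right] = \tfrac{1}{d}\sum_{k,l}\mathrm{Tr}\!\left(\Lambda_i\, \Phi(|k\rangle\langle l|)\right)\, (\Lambda_j)_{kl}.
\end{equation}
The key simplification is the reassembly identity $\sum_{k,l} (\Lambda_j)_{kl}\,|k\rangle\langle l| = \Lambda_j$, which, combined with linearity of $\Phi$, collapses the double sum to $\tilde R_{ij} = \tfrac{1}{d}\,\mathrm{Tr}\!\left(\Lambda_i\, \Phi(\Lambda_j)\right)$.

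Finally, I would read off the four blocks by specializing to $i=0$ and/or $j=0$ (with $\Lambda_0 = \mathbbm{1}_d$). The $(0,0)$ entry yields $\tfrac{1}{d}\mathrm{Tr}\,\Phi(\mathbbm{1}_d) = 1$ since $\Phi$ is trace preserving. For $i=0$, $j>0$, trace preservation combined with tracelessness of $\Lambda_j$ gives $\tilde R_{0j} = \tfrac{1}{d}\mathrm{Tr}(\Lambda_j) = 0$, so $\vec a = 0$ exactly as in \eqref{super2}. For $i>0$, $j=0$ we recover $\tilde R_{i0} = \kappa_i$, and for $i,j>0$ we recover $\tilde R_{ij} = Q_{ij}$. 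This matches $\tilde \Phi$ entry by entry and completes the identification.

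I do not anticipate a real obstacle: the proof is essentially a careful bookkeeping exercise. The only place where one must be attentive is the partial transpose step, where the exchange $|k\rangle\langle l|\leftrightarrow|l\rangle\langle k|$ is precisely what turns the contraction $\sum_{k,l}(\Lambda_j)_{kl}|k\rangle\langle l|$ into $\Lambda_j$ rather than into $\Lambda_j^T$; without the partial transpose one would obtain the Fano coefficients associated with $\Lambda_j^T$ instead of $\Lambda_j$, which explains why it is $J_\Phi^{T_2}/d$ and not $J_\Phi/d$ that is equivalent to $\tilde \Phi$.
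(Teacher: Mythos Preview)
Your proof is correct and follows essentially the same approach as the paper's: both establish the entrywise identity $\tilde R_{ij}(J_\Phi^{T_2}/d)=\tfrac{1}{d}\,\mathrm{Tr}\bigl(\Lambda_i\,\Phi(\Lambda_j)\bigr)=\tilde\Phi_{ij}$. The only cosmetic difference is that the paper reaches this identity via the Choi-action formula $\Phi(X)=[\mathrm{id}_d\otimes\mathrm{Tr}_d]\bigl(J_\Phi(\mathbbm 1_d\otimes X^T)\bigr)$, whereas you expand $J_\Phi^{T_2}$ directly in the computational basis and reassemble $\Lambda_j$ from its matrix elements; your explicit check of the four blocks is a nice addition that the paper leaves implicit.
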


\begin{proof}
Action of the map $\Phi$ on an arbitrary operator $X$ can be expressed by the
corresponding Choi matrix $J_{\Phi}$ in the following way \cite[Chapter
2.2]{Wat18}

\begin{equation}\label{eqn:choi-action}
\Phi(X) = [\mathrm{id}_{d} \otimes \Tr_{d}] \left( J_{\Phi} (\1_{d}
\otimes
X^T) \right).
\end{equation}
Matrix elements of the Bloch representation  (\ref{super2}) of a channel $\Phi$,
 can be written as $\tilde{\Phi}_{ij} = \frac1d \tr(\Lambda_i \Phi(\Lambda_j))
 $. Therefore making use of  formula~\eqref{eqn:choi-action} we get
\begin{equation}
\begin{split}
\tilde{\Phi}_{ij} &= \frac1d \tr (\Lambda_i \Phi(\Lambda_j)) =
\frac1d \tr \left(\Lambda_i [\mathrm{id}_{d} \otimes \Tr_{d}]\left(J_{\Phi}
(\1
\otimes \Lambda_j^T)\right)\right)
=\frac1d \tr \left((\Lambda_i \otimes \1) J_{\Phi} (\1 \otimes
\Lambda_j^T)\right)\\
&=\frac1d \tr \left( J_{\Phi} (\Lambda_i \otimes \Lambda_j^T) \right)
= \frac1d \tr \left(J_{\Phi}^{T_2} (\Lambda_i \otimes \Lambda_j) \right)
=\tilde{R} \left( J_{\Phi}^{T_2}/d \right)_{ij}.
\end{split}
\end{equation}
\end{proof}

Alternative proof of the above fact, using Kraus representation, can be
formulated using algebraic Lemma~\ref{lemma:ACB^TD^T} stated in
Appendix~\ref{app:lamma}.

Proposition  \ref{prop:Fano3} shows a direct link between the problem of finding
restrictions for the correlation matrix $R$ to assure positivity of the
bipartite state $\rho_{AB}$ in (\ref{Fano1}) analyzed in \cite{BH300} and the
question, for what  matrix  $Q$ appearing in (\ref{Bloch2}) the corresponding
map $\Phi$ is completely positive \cite{FA99, RSW02}. For any trace preserving
quantum operation the vector $a$ in (\ref{fano2}) vanishes, so the second
question can be considered as a special case of the first one. In the simplest
case of one-qubit operation, $d=2$, conditions for the matrix  $Q$ which imply
complete positivity are known  \cite{FA99, RSW02, BGNPZ14}, but for larger
dimensions the problem becomes rather complicated.

However,  if a random channel $\Phi$ of a large dimension $d\gg 1$ is generated
according to the flat measure, $\mu_{d,d}^{Lebesgue}$, these constraints become
weaker. In the limit of large dimension $d$ the first two cumulants of $\Phi$
are identical to those of the Gaussian distribution and the higher cumulants can
be neglected \cite{BSCSZ10}. This yields an evidence that the statistical
properties of  the matrix $Q$ of order $d^2-1$, forming the core of the
superoperator $\Phi$, can be described by a random matrix  $G_\R$ from the real
Ginibre ensemble. Thus the spectrum of $Q$  forms in the complex plane a scaled
\emph{Girko disk} \cite{Gi84}. Normalization of the Ginibre matrix implies that
the eigenvalues concentrate in the disk of radius $1/\sqrt{M}$, where $M$ is the
number of random Kraus operators defining the map  \cite{BCSZ09}.

In the case $M=d^2$, corresponding to the uniform distribution of channels, the
radius of the disk behaves as $r\approx 1/d$ and for large dimension the
distribution in the disk becomes uniform. Since the trace preserving condition
assures the leading Perron-Frobenius eigenvalue $\lambda_1=1$, the average size
of the spectral gap,  $\gamma=\lambda_1 -|\lambda_2|$, behaves as  $\langle
\gamma\rangle  \approx 1-1/d$.

Let us also mention that the average number of real eigenvalues of a uniform
random superoperator $\Phi$ of size $d^2$ has been numerically observed
\cite{BSCSZ10} to behave asymptotically as $\sqrt{2/\pi}\cdot d$, fitting the
corresponding fraction for the real Ginibre ensemble \cite{eko}. In particular,
for large dimensions, the fraction of real eigenvalues decreases as $1/d$.

Rigorous mathematical results on the spectral gap have been obtained in the
setting where the number $M$ of Kraus operators is fixed, in relation to the
so-called \emph{quantum expanders}. Hastings started this line of inquiry in
\cite{has07}, where he showed that random mixed unitary channels with $M$
unitary Kraus operators have a spectral gap of order $\gamma \approx 1-1/{\sqrt
M}$. Similar results were obtained for random quantum channels coming from
random isometries in \cite{ggn,lpg}.

\bigskip

Let us now discuss singular values of a random superoperator $\Phi$.
 The leading singular value is equal to unity,
 which is a consequence of preservation of trace,
\begin{equation}
  \langle \chi | \Phi | \chi \rangle =
  \langle \chi |  \chi \rangle = 1,
\end{equation}
where $|  \chi \rangle = | \rho_{\rm inv} \rangle \rangle / \| \rho_{\rm inv}
\|_2$ represents the normalized vector
of length $d^2$
corresponding to the invariant state of the map, $\rho_{\rm inv}=
\Phi(\rho_{\rm inv})$. To analyze the remaining singular values, we draw again
a parallel between the matrix $\Phi$ and a properly normalized reshuffled
Wishart matrix of parameters $(d^2,M)$, $W^R/(dM)$. In the case of Wishart
matrices, it has been shown in \cite[Theorem 3.1]{ane} that, in the regime
where $d,M \to \infty$, the singular values of
\begin{equation}
  \sqrt M\left[ \frac{W^R}{dM} - |\chi \rangle \langle  \chi | \right]
\end{equation}
converge, in moments, to the quarter-circle law,
\begin{equation}
  \mathrm{d}\mu_{qc} = \frac{\sqrt{4-x^2}}{\pi} \mathbf{1}_{[0,2]}(x) \mathrm{d}x,
\end{equation}
related to the Mar\v{c}henko-Pastur distribution. This is evidence towards the
claim that, once the projection on the Perron-Frobenius eigenvalue is removed,
the spectral norm of the remaining matrix is of order $1/\sqrt M$.

\section{Classical maps: random stochastic matrices}
\label{sec:classical}

In this section we describe how classical objects (probability vectors and
stochastic maps) arise from quantum objects (density matrices and quantum
channels) and compare their probability distributions.

A quantum state $\rho$ subjected to the coarse-graining map, which describes
quantum decoherence, produces a classical probability vector, $p={\rm diag}
(\rho)$. In the next proposition we compute the probability distribution on the
probability simplex induced by the induced measures on the set of density
matrices. Let us recall first that a probability vector $p=(p_1, \ldots, p_d)
\in \Delta_d$ has a Dirichlet distribution with parameter $s$, written $p \sim
\operatorname{D_s}(p_1,\ldots,p_d)$, if it has density reads
\begin{equation}
  \operatorname{D_s}(p_1,\ldots,p_d)  = C \;
  p_1^{s-1} p_2^{s-1} \cdots p_{d-1}^{s-1}(\underbrace{1-p_1-\cdots -p_{d-1}}_{p_d})^{s-1},
\end{equation}
with a suitable normalization constant $C$.
We refer the reader to \cite[Chapter XI.4]{dev86} for more details on Dirichlet distributions.

\begin{proposition}\label{prop:diag-random-density-matrix}
	Let $\rho \in M_d(\C)$ be a random density matrix from the induced ensemble of
	parameters $(d,s)$: $\rho = [\mathrm{id}_d \otimes
	\mathrm{Tr}_s](\ketbra{\psi}{\psi})$, where $\psi \in \C^{ds}$ is uniformly
	distributed on the unit sphere. Then, $p = \operatorname{diag}(\rho) \in
	\Delta_d$ has a Dirichlet distribution $\operatorname{D_s}(p_1,\ldots,p_d)$ on the
	probability simplex.
\end{proposition}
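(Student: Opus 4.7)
The plan is to reduce the problem to a standard fact about ratios of independent Gamma random variables via the Wishart/Gaussian description of the induced measure. First I would use the equivalence (mentioned in the text preceding Proposition \ref{prop:output-state-induced-measure}) between the two descriptions of $\nu_{d;s}$: sampling $\psi$ uniformly on the unit sphere of $\C^{ds}$ and taking a partial trace is the same (in distribution) as taking $W/\Tr W$ where $W$ is a complex Wishart matrix of parameters $(d,s)$, that is, $W=XX^\dagger$ for a $d\times s$ complex Ginibre matrix $X$. Under this identification,
\begin{equation}
p_i=\rho_{ii}=\frac{W_{ii}}{\Tr W}=\frac{\sum_{j=1}^s |X_{ij}|^2}{\sum_{k=1}^d\sum_{j=1}^s |X_{kj}|^2}.
\end{equation}

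Next I would exploit the independence structure of the Ginibre matrix $X$. Since the entries $\{X_{ij}\}$ are i.i.d.\ standard complex Gaussians, the row sums $Y_i:=\sum_{j=1}^s |X_{ij}|^2$ depend on disjoint collections of entries and are therefore independent. Each $|X_{ij}|^2$ is exponential with unit mean (equivalently $\Gamma(1,1)$), so $Y_i$ is a sum of $s$ i.i.d.\ exponentials, hence $Y_i\sim\Gamma(s,1)$. Thus $(Y_1,\ldots,Y_d)$ is a vector of $d$ independent $\Gamma(s,1)$ variables.

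To finish, I would invoke the classical characterization of the Dirichlet distribution: if $Y_1,\ldots,Y_d$ are independent with $Y_i\sim\Gamma(s,1)$, then the normalized vector $\bigl(Y_1/S,\ldots,Y_d/S\bigr)$ with $S=\sum_k Y_k$ is distributed according to $\operatorname{D_s}(p_1,\ldots,p_d)$ on the simplex $\Delta_d$ (see, e.g., \cite[Chapter XI.4]{dev86}). Applying this with $Y_i=W_{ii}$ yields $p=\operatorname{diag}(\rho)\sim\operatorname{D_s}$, as claimed.

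There is no real obstacle here; the only points to be careful about are (i) checking that the sphere-normalization and the trace-normalization of $W$ give the same distribution on $\rho$ (which is immediate from the polar decomposition of a Gaussian vector in $\C^{ds}$), and (ii) noting that, although the off-diagonal entries of $W$ depend on several rows of $X$ simultaneously, the \emph{diagonal} entries split cleanly across rows and hence are genuinely independent. The remainder is the textbook Gamma-to-Dirichlet construction.
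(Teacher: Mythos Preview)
Your proof is correct and follows essentially the same route as the paper: both pass to the Wishart representation $\rho=GG^\dagger/\Tr(GG^\dagger)$, observe that the diagonal entries are normalized row sums $h_i=\sum_k|G_{ik}|^2$ which are i.i.d.\ $\Gamma(s,1)$ variables, and then invoke the Gamma-to-Dirichlet construction (the paper even cites the same reference \cite[Theorem XI.4.1]{dev86}). Your write-up is slightly more explicit about why the $Y_i$ are independent, but there is no substantive difference.
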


\begin{proof}
	Recall from \cite{ZSo01,ZPNC11} random density matrices from the induced
	ensemble of parameters $(d,s)$ are obtained as
  \begin{equation}
    \rho = \frac{GG^\dagger}{\operatorname{Tr}(GG^\dagger)},
  \end{equation}
	where $G \in M_{d \times s}(\C)$ is a random rectangular Ginibre matrix
	(having i.i.d.~standard complex Gaussian entries). In particular, the
	diagonal entries read
  \begin{equation}
    p_i = \rho_{ii} = \frac{h_i}{\sum_{j=1}^d h_j},
  \end{equation}
  with $h_i = \sum_{k=1}^s |G_{ik}|^2$. It is easy to see that $2h_i$ has a
	chi-squared distribution of parameter $2s$ (taking into account the real and
	imaginary parts of $G_{ik}$). Hence, $h_i$ are i.i.d.~random variables with
	distribution $\operatorname{Gamma}(s)$. The conclusion follows now from the
	fact that normalizing i.i.d.~Gamma random variables yields the Dirichlet
	distribution, see \cite[Theorem XI.4.1]{dev86}.
\end{proof}

\begin{remark}
	The uniform distribution $\operatorname{D_1}(p_1,\ldots,p_d)$ on the
	probability simplex is obtained by taking the diagonal of random pure states
	($s=1$). The diagonal of uniformly distributed random density matrices
	($s=d$) does \emph{not} yield uniform probability vectors, but the
	distribution $\operatorname{D_d}(p_1,\ldots,p_d)$ which is more concentrated
	towards the ``central'' point $(1/d, \ldots, 1/d)$ of the probability
	simplex $\Delta_d$. The same effect occurs for any $s>1$, see
	Fig.~\ref{fig:simplex-Dirichlet}.
\end{remark}

\begin{figure}
	\centering
	\includegraphics[width=0.4\linewidth]{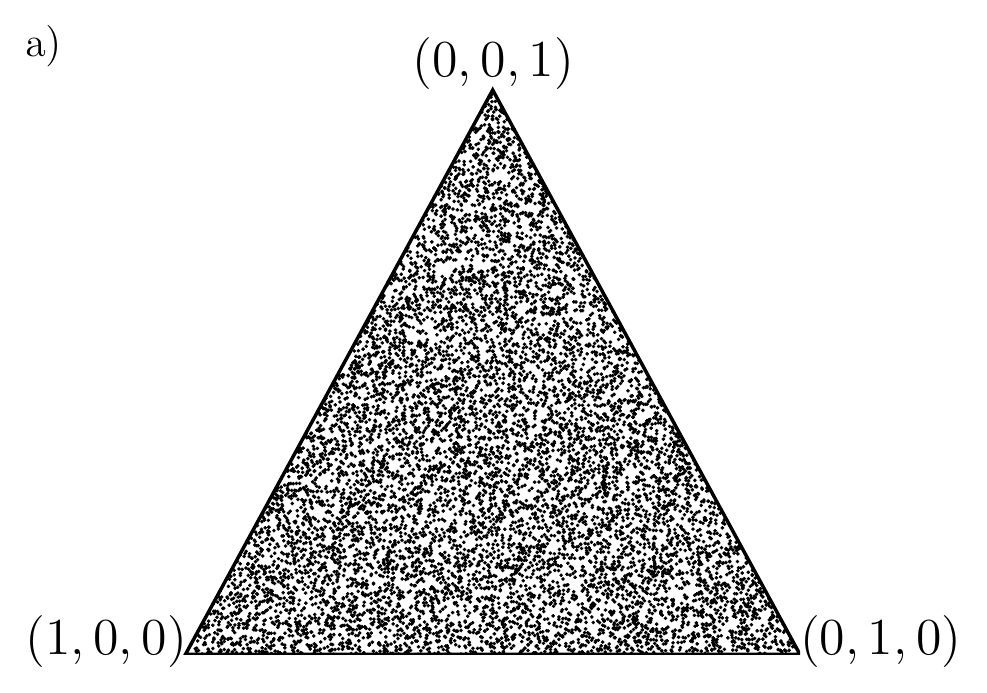}
	\includegraphics[width=0.4\linewidth]{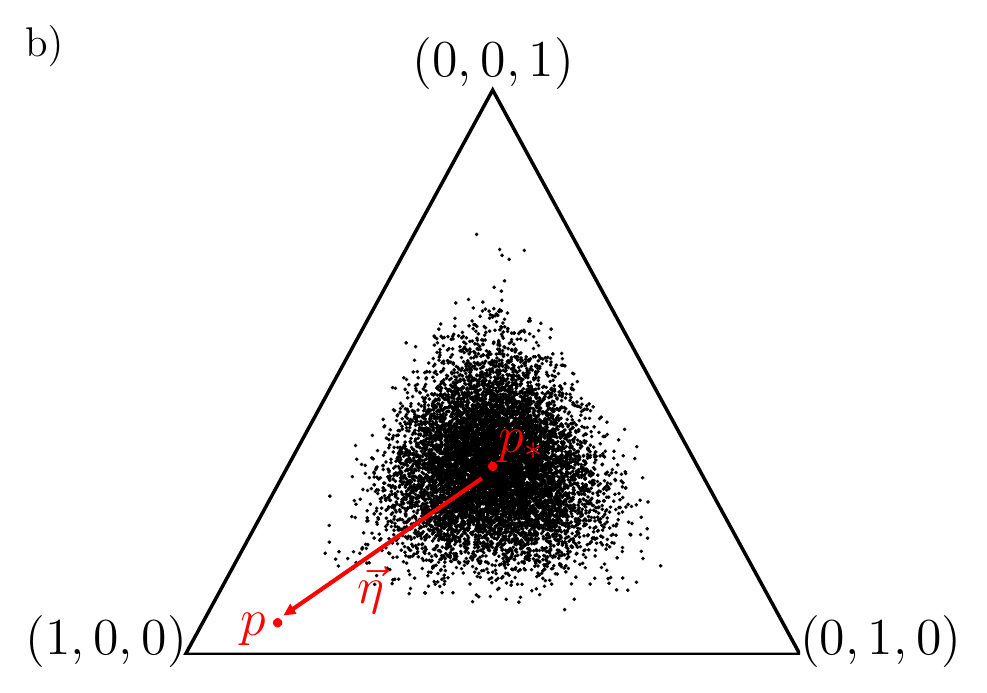}\\
	\includegraphics[width=0.25\linewidth]{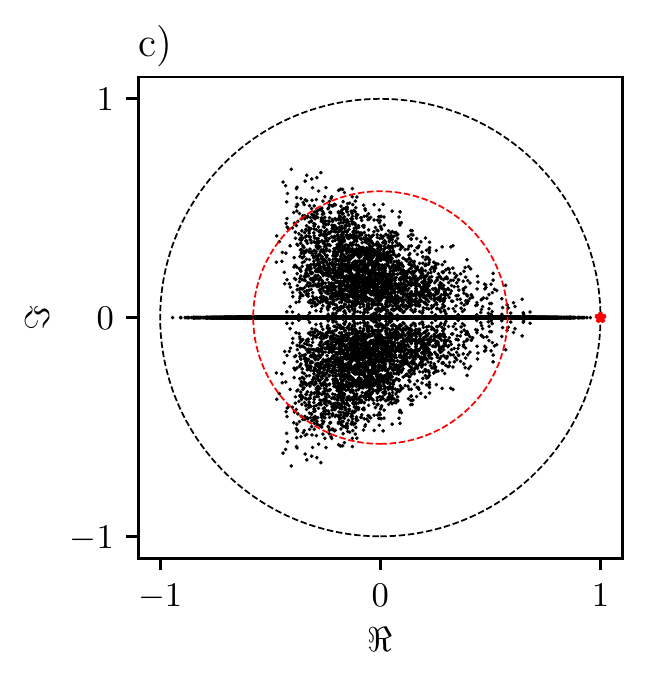}
	\includegraphics[width=0.25\linewidth]{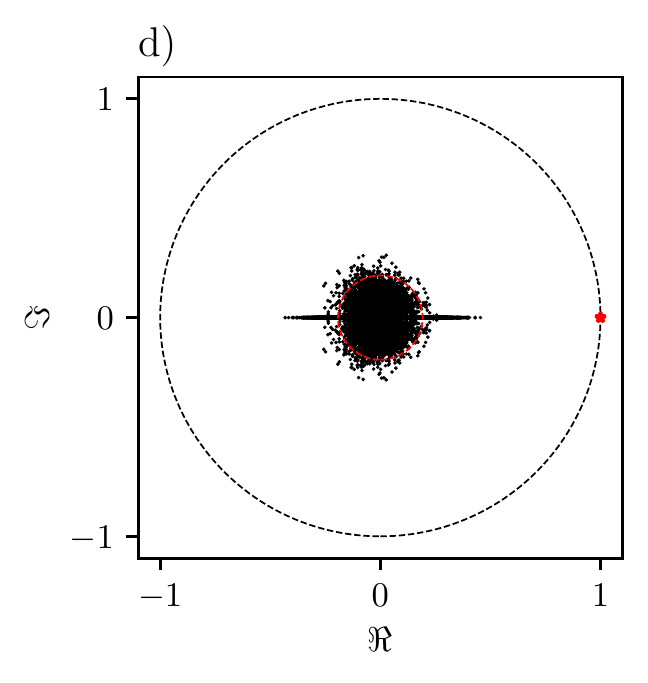}
	\includegraphics[width=0.25\linewidth]{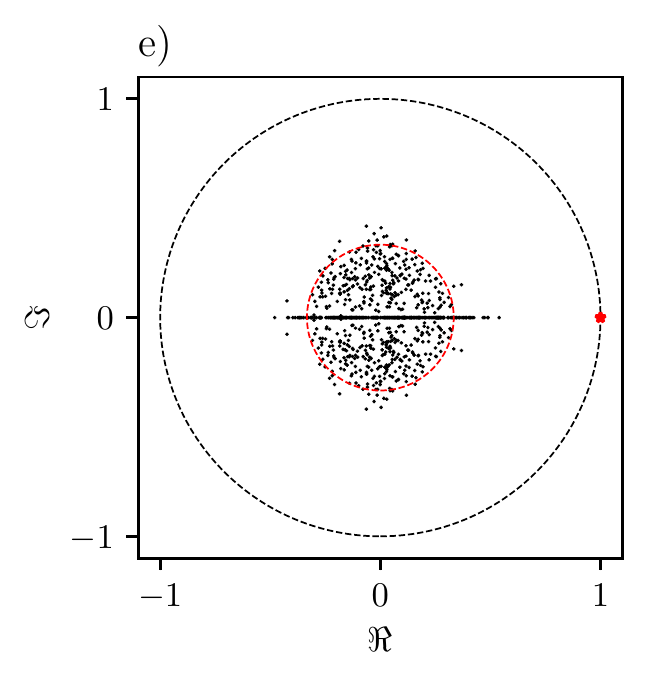}
	\caption{Dirichlet distributions ($10^5$ samples) on $\Delta_3$, for
	a) $s=1$ (left, uniform distribution on the simplex) and
	b)  $s=9$ (right)
	with an exemplary translation vector ${\vec\eta}$,
    a projection of the shift	$\delta=p-p_*$
    on the $(d-1)$ dimensional simplex $\Delta_d$,
    used in Eq. (\ref{Bloch_class}, \ref{stoch2}). Panels c) and d) show
	eigenvalues of $10^3$ stochastic matrices of order $d=3$ with columns sampled
	from $D_1(T_{1j}, T_{2j}, T_{3j})$ and $D_9(T_{1j}, T_{2j}, T_{3j})$
	respectively, corresponding to panels a) and b). Red dashed circles have
	radii $r_1=1/\sqrt{3}$ and $r_c=1/3\sqrt{3}$. Panel e) shows spectra
	of $10^3$ random quantum channels acting on $M_3$
	sampled with the Lebesgue measure, the red dashed circle has radius
	$r_q=1/3$.}
	\label{fig:simplex-Dirichlet}
\end{figure}

\bigskip

In the same way that one obtains a classical probability distribution from a
quantum state, one can obtain a classical Markov map from a quantum channel. Any
quantum channel $\Phi$  generates a corresponding classical transition matrix
$T$, in two equivalent ways. For any  channel $\Phi : M_{d_1}(\C) \to
M_{d_2}(\C)$, we associate the column-stochastic matrix
\begin{equation}
   M_{d_2 \times d_1}(\R) \ni  T_{ji} = \langle j | \Phi(\ketbra{i}{i}) | j
   \rangle.
\end{equation}
Equivalently, $T$ can be obtained by reshaping the diagonal of the Choi matrix
$J_\Phi=\Phi^R$ of size $d_1d_2$ into a matrix of size $d_2 \times d_1$. As
matrix $J$ is weakly positive, so are all entries of $T$. It is easy to check
that if $\Phi$ satisfies the trace preserving condition $[\Tr_{d_2} \otimes
\operatorname{id}_{d_1}] J_\Phi={\mathbbm 1}_{d_1}$ the matrix $T$ is
(column-)stochastic, and if $\Phi$ is also unital ($d=d_1=d_2$),
$[\operatorname{id}_{d} \otimes \Tr_{d}] J_\Phi={\mathbbm 1_d}$, the
corresponding matrix $T$ is also bistochastic -- see e.g. \cite{KCPZ18}.  If the
quantum map $\Phi$ is written in its Kraus form, $\Phi(\rho)=\sum_{k=1}^M A_k
\rho A_k^{\dagger}$, then the quantum superoperator is represented by a matrix
of size $d_1d_2$ obtained as a sum of Kronecker products $\Phi=\sum_{k=1}^M A_k
\otimes {\bar A_k}$, while the corresponding classical transition matrix of size
$d_2 \times d_1$ can be represented by a sum of  Hadamard products
$T=\sum_{k=1}^M A_k \odot {\bar A_k}$. Observe that the trace preserving
condition $\sum_{k=1}^M A_k^{\dagger} A_k ={\mathbbm 1}_{d_1}$ implies the
stochasticity condition $\sum_{j=1}^{d_2} T_{ji}=1$, while the dual unitality
condition,   $\sum_{k=1}^M A_kA_k^{\dagger} ={\mathbbm 1}_{d_2}$, implies that
also its transpose $T^\top$ is (column-)stochastic, and thus $T$ is
bistochastic.

We have shown above that any quantum operation $\Phi$ determines a certain
classical transition matrix, which can appear   due the the effects of
super-decoherence \cite{KCPZ18}. In the four--index notation one can write,
$T_{ji}=\Phi_{j j, i i }=(J_\Phi)_{j i, j i }$. Thus any ensemble of random
quantum operations induces a certain ensemble of classical stochastic matrices.
Let us analyze this distribution for the ensemble of quantum channels from
Section \ref{sec:different-measures} defined by the environmental form
\textbf{c)} and random Haar isometries.

\begin{proposition}\label{prop:T-from-Phi} Let $\Phi : M_{d_1}(\C) \to
	M_{d_2}(\C)$ be a random quantum channel from the ensemble
	$\mu^{Stinespring}_{d_1,d_2;M}$, where $M \geq d_1/d_2$ is a fixed
	parameter. Consider the induced measure on column-stochastic maps $T \in
	M_{d_2 \times d_1}(\R)$. Then, every column of $T$ has Dirichlet
	distribution with parameter $M$ on the simplex $\Delta_{d_2}$. However, the
	columns of $T$ are not independent, the entries having covariances
	\begin{align}
		{\rm cov}[T_{ji_1}, T_{ji_2}] &=
		\frac{d_2M^2-1}{d_2^3M^2-d_2} - \frac{1}{d_2^2} < 0 \qquad \forall j \in
		[d_2], \, \forall i_1 \neq i_2 \in [d_1]\\
		{\rm cov}[T_{j_1i_1}, T_{j_2i_2}] &= \frac{M^2}{(d_2M)^2-1} - 
		\frac{1}{d_2^2} > 0 
		\qquad \forall j_1 \neq j_2 \in [d_2], \, \forall i_1 \neq i_2 \in [d_1].
	\end{align}
\end{proposition}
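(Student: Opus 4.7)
The plan is to obtain the marginal distribution of a column essentially for free from earlier results, and then to compute the second moments of $T$ directly with Weingarten calculus on $\mathcal U(d_2M)$.

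For the first assertion, fix a column index $i\in\{1,\ldots,d_1\}$. By definition the $i$-th column of $T$ is $\operatorname{diag}(\Phi(\ket i\bra i))$. Proposition~\ref{prop:output-state-induced-measure} gives that $\Phi(\ket i\bra i)$ has distribution $\nu_{d_2;M}$, and Proposition~\ref{prop:diag-random-density-matrix} applied with parameters $(d_2,M)$ then yields that its diagonal is Dirichlet-distributed with parameter $M$ on $\Delta_{d_2}$. Since $M$ is a positive integer with $Md_2\geq d_1\geq 1$, both applications are in the legal regime.

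For the covariances I would use the Stinespring description, writing $V:\C^{d_1}\to\C^{d_2}\otimes\C^M$ as the truncation of a Haar-random $U\in\mathcal U(d_2M)$ to its first $d_1$ columns. Then
\begin{equation}
T_{ji}=\bra j\Phi(\ket i\bra i)\ket j=\sum_{m=1}^M|U_{(j,m),i}|^2,
\end{equation}
so that
\begin{equation}
\E[T_{j_1i_1}T_{j_2i_2}]=\sum_{m_1,m_2=1}^M\E\bigl[|U_{(j_1,m_1),i_1}|^2\,|U_{(j_2,m_2),i_2}|^2\bigr]
\end{equation}
is a standard order-two Weingarten integral on $\mathcal U(n)$ with $n=d_2M$. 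Using the values $\operatorname{Wg}(e,n)=1/(n^2-1)$ and $\operatorname{Wg}((12),n)=-1/(n(n^2-1))$, the hypothesis $i_1\neq i_2$ forces the ``column'' permutation to be the identity, leaving only the two ``row'' permutations. A short algebraic simplification yields
\begin{equation}
\E[T_{j_1i_1}T_{j_2i_2}]=\frac{M^2}{(d_2M)^2-1}-\delta_{j_1j_2}\frac{1}{d_2\bigl((d_2M)^2-1\bigr)}.
\end{equation}
Subtracting the product of means $\E[T_{j_1i_1}]\E[T_{j_2i_2}]=1/d_2^2$, which follows from the fact that $\E\Phi=\Phi_*$ (already established in the preceding section), recovers the two stated covariance formulas.

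The signs drop out automatically: in the off-diagonal case one gets $1/\bigl(d_2^2((d_2M)^2-1)\bigr)>0$, and in the within-column case one gets $-(d_2-1)/\bigl(d_2^2((d_2M)^2-1)\bigr)<0$, the latter being consistent with the linear constraint that each column of $T$ sums to one. The main obstacle here is purely bookkeeping — tracking the four Weingarten sums and verifying that only two of them survive when $i_1\neq i_2$ — rather than anything conceptual; no ingredient beyond the earlier propositions and standard order-two Weingarten calculus is needed.
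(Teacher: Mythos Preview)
Your proposal is correct and follows essentially the same route as the paper: the column distribution is obtained from the induced-measure description of $\Phi(\ketbra{i}{i})$ combined with Proposition~\ref{prop:diag-random-density-matrix}, and the cross-column covariances come from the order-two Weingarten integrals $\E|V_{a_1,i_1}|^2|V_{a_2,i_2}|^2$ on $\mathcal U(d_2M)$, exactly as in the paper. The only cosmetic slip is calling the case $j_1=j_2$ ``within-column'' (it is within-row), though your consistency check via the column-sum constraint $\sum_{j_1,j_2}\operatorname{cov}[T_{j_1i_1},T_{j_2i_2}]=0$ is valid.
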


\begin{proof}
	Let us first prove the result on the distribution of the individual columns of $T$. Using the Stinespring representation of the (random) quantum channel $\Phi$ \eqref{eq:def-Stinespring}, we have
	\begin{equation}T_{ji} = \sum_{k=1}^M |V_{M(j-1)+k, i}|^2.\end{equation}
	For a fixed column $i$, the distribution of the elements $T_{ji}$ are
	obtained by summing successive blocks of size $M$ from the vector
	$\ket{v_i} = V_{\cdot i} \in \C^{d_2M}$ which is uniformly distributed on
	the unit sphere of the corresponding vector space. We recognize the partial
	trace operation, and we have
	\begin{equation}T_{ji} = \rho^{(i)}_{jj} := [\operatorname{id}_{d_2}
	\otimes
	\mathrm{Tr}_{M}](\ketbra{v_i}{v_i})_{jj}.\end{equation}
	Using Proposition \ref{prop:diag-random-density-matrix} for the random quantum state $\rho^{(i)}$ having the induced distribution with parameters $(d_2,M)$, we obtain the claim about the individual columns of $T$.
	The covariance expressions can be readily obtained from spherical integration formulas or using the Weingarten calculus, see \cite{cs06} or \cite[Proposition 4.2.3]{hp00}:
	\begin{align}
		\mathbb E |V_{a,i_1}|^2|V_{a,i_2}|^2 &=	\frac{1}{d_2M(d_2M+1)} \qquad \forall a \in [d_2M], \, \forall i_1 \neq i_2 \in [d_1] \\
		\mathbb E |V_{a_1,i_1}|^2|V_{a_2,i_2}|^2 &=\frac{1}{(d_2M)^2-1}  \qquad \forall a_1 \neq a_2 \in [d_2M], \, \forall i_1 \neq i_2 \in [d_1].
	\end{align}
\end{proof}

\begin{remark}\label{rem:channel-stoch-not-flat}
	The columns of the uniform distribution on the set of column-stochastic matrices of 
	size $d_2 \times d_1$ are independent and therefore their entries have covariances
	\begin{equation}
		{\rm cov}[T_{j_1i_1}, T_{j_2i_2}] = 0 \qquad \forall j_1,j_2 \in [d_2], 
		\, \forall 
		i_1 \neq i_2 \in [d_1].
	\end{equation}
	This shows that the distribution $\mu^{Stinespring}_{d_1,d_2;M}$ cannot yield the 
	uniform distribution on the set of (column-)stochastic matrices, for any $M \geq 1$. 
	Note that for $M=1$ and $d_1=d_2=d$, we recover the distribution on the set of 
	\emph{unistochastic matrices}, see \cite{ZKSS03,DZ09,MKZ13}.
\end{remark}

Let us point out that it is possible to generate a random uniform $d_2 \times
d_1$ (column-)stochastic matrix from Gaussian distributions. It is convenient to
start with a  rectangular matrix $G$ of order $d_2 \times d_1$ from the complex
Ginibre ensemble, all elements of which are independent complex random Gaussian
variables. Then a random matrix given by the Hadamard product,  $G\odot {\bar
G}$ contains non-negative  entries $|G_{ji}|^2$. Renormalizing the matrix
according to the sum in each column we introduce a matrix $T$,
\begin{equation}
  T_{ji}=\frac{|G_{ji}|^2}{\sum_{k=1}^{d_2} |G_{ki}|^2},
\end{equation}
which is stochastic by construction. Furthermore, as  each of its columns forms
an independent random vector distributed uniformly in the probability simplex of
size $d_2$, in the case $d_1=d_2=d$ this construction provides random matrix
distributed uniformly in the set of stochastic matrices \cite{BSCSZ10}. Note
however that this is not the procedure described in Section
\ref{sec:different-measures} used to generate random quantum channels from
Gaussian Kraus operators, so the above fact does not contradict Remark
\ref{rem:channel-stoch-not-flat}.

Next, let us briefly discuss the spectral properties of random stochastic
matrices generated according to the  Lebesgue measure, in order to compare them
to the results in Section \ref{sec:spectrum-random-quantum-channels}. Due to the
classical Frobenius-Perron theorem, any stochastic matrix $T$ has the leading
eigenvalue $\lambda_1=1$ which corresponds to the invariant state. Furthermore
it is known that the support of the spectrum of $T$ forms a proper subset of the
unit disk described by the bounds of Karpelevich \cite{kar}, which for a large
dimension covers the  entire unit disk.

However, for a random stochastic matrix of a large dimension $d$, the density of
eigenvalues in the unit disk is not uniform. In order to analyze this issue it
will be useful to distinguish the uniform probability vector, ${\vec
p_*}=(1/d,\dots 1/d)$, and represent any other probability vector ${\vec p}$ in
shifted coordinates, ${\vec p}=  {\vec  p_*} +    {\vec \delta}$. As the
normalization condition implies $\sum_i \delta_i = 0$, the shift vector $\vec
\delta$ has $(d-1)$ independent variables. Let us denote by  $\vec \eta $ the
projection of $\vec \delta$ onto the $(d-1)$ dimensional real space. Hence the
translation vector  ${\vec \eta} \in  {\mathbbm R}^{d-1}$, is analogous to the
generalized Bloch vector $\vec \tau$ used in the quantum case -- see Fig.
\ref{fig:simplex-Dirichlet}b. In  fact, one can treat  ${\vec \eta}$ as the
Bloch representation (\ref{Bloch1}) of a diagonal density matrix,
\begin{equation}
{\rm diag}^\dagger(p) =
\frac1d \left(
\mathbbm 1 + \sum_{j = 1}^{d-1} \eta_j \; \Lambda_j
\right),
\label{Bloch_class}
\end{equation}
where ${\rm \diag}^\dagger(p)$ is the diagonal matrix with $p$ on the diagonal,
the sum goes now over all $(d-1)$ diagonal generators $\Lambda_j$ of
$\textsf{SU(}d\textsf{)}$. In such a Bloch representation of the probability
vector, $p=p({\vec \eta})$, where $\vec \eta$ is formed by the first $d-1$
components of the Bloch vector $\vec \tau$, the action of the classical
transition matrix $T$ can be represented by an affine transformation of the
displacement vector,
\begin{equation}
{\vec \eta'} =   C {\vec \eta} + {\vec \chi},
\label{stoch2}
\end{equation}
analogous to the transformation (\ref{Bloch2}) describing a quantum stochastic
map. Here $C$ represents a real transformation matrix of size $d-1$, and the
translation vector $\vec \chi$ vanishes for any bistochastic matrix. In this
particular basis the stochastic transition matrix reads,
\begin{equation}
T \; =\; \left[
\begin{array}{ll}
  1 & 0 \\
 {\vec  \chi} &{C}
\end{array}
\right] .
\label{stoch3}
\end{equation}
Thus  the spectrum of the stochastic matrix $T$ consists of the Frobenius-Perron
eigenvalue $\lambda_1=1$ and the remaining $d-1$ eigenvalues of a real
asymmetric matrix $C$. For a large dimension,  $d \gg 1$, the constraints
implied on  a matrix $C$ by the fact that $T$ is a random stochastic matrix
become weak, so the statistical properties of $C$ can be approximated by a
matrix $G_\R$ from  the real Ginibre ensemble. Thus the spectrum of a random
stochastic matrix consists of the leading eigenvalue $\lambda_1=1$ and the
scaled disk of complex eigenvalues of Girko \cite{Gi65, Gi84,Fo10}. This simple
reasoning is consistent with the earlier results of Chafa{\"\i} 
\cite{chafai2010dirichlet} and Horvat \cite{Horvat}, who
analyzed an ensemble of random stochastic matrices and found that the spectrum
is concentrated in a disk of radius $r\sim 1/\sqrt{d}$, so the average spectral
gap $\gamma$ behaves as $1-d^{-1/2}$. The circular law for random Markov
operators in the general i.i.d.~case was obtained in \cite{bcc12}.

Note also that the square matrix $C$ of size $d-1$ can be considered as the
classical part of the matrix $Q$  of size $d^2-1$ representing a quantum channel
in Eq. (\ref{super2}), which can be written as
\begin{equation}
Q =
  \left[
\begin{array}{ll}
  C & Q_1 \\
 Q_2 & Q_Q
\end{array}
\right] ,
\label{stoch4}
\end{equation}
Here $Q_1$ and $Q_2$ denote rectangular matrices describing the coupling between
diagonal and off--diagonal parts of the density matrix, while $Q_Q$ is a square
matrix of size $d^2-d$ representing the coupling in the space of coherences. In
this notation expression  (\ref{super2})  for a superoperator  $\Phi$ contains
the transition matrix $T$  as its principal block,
\begin{equation}
{\tilde \Phi} \; =\;
 \left[
\begin{array}{ll}
  1 & 0 \\
{\vec \kappa}&{ Q}
\end{array}
\right]
=
 \left[
\begin{array}{lll}
  1 & 0 &0 \\
 {\vec  \chi} &{C} & Q_1 \\
 {\vec  \chi'} & Q_2  & Q_Q
 \end{array}
\right]
 =
  \left[
\begin{array}{ll}
  T & {\tilde Q_1} \\
 {\tilde  Q_2} &{Q_Q}
\end{array}
\right] ,
\label{stoch5}
\end{equation}
where the Bloch translation vector $\vec \kappa$ of length $d^2-1$ is given by
concatenation of two vectors, $({\vec \chi}, {\vec \chi}')$,  of length $d-1$
and $d^2-d$, respectively, while ${\tilde Q_1}$ and $ {\tilde  Q_2}$ denote the
rectangular matrices $ Q_1,  Q_2$, extended accordingly. Note that in the above
formula the sizes of each block in the first and the last matrix are different.

The above representation implies  that if the core $Q$ of the quantum
superoperator $\Phi$ can be approximated by an uncorrelated  Ginibre matrix
$G_\R$ of size $d^2-1$, its classical block $C$ of order $d-1$ inherits similar
properties. Furthermore,  the {\sl coherence} of a quantum channel $\Phi$ can be
characterized by the $2$--norm of all off-diagonal parts of the corresponding
Choi matrix $J_\Phi$.  In notation used in  Eq. (\ref{stoch5}) the coherence is
given \cite{KCPZ18} by the sum of  squared norms of the `quantum' blocks of the
superoperator, ${\cal C}_2(\Phi)= ||\tilde{Q_1}||_2^2+|| {\tilde Q_2}||_2^2 +
||Q_Q||_2^2$, where $\|\cdot\|_2$ denotes Hilbert-Schmidt norm.

Using Proposition 11 from \cite{NPPZ18}, which bounds the asymptotic difference,
in terms of infinity norm, between random Choi matrix $J_{\Phi}$ and Wishart
matrix, we can conclude, that the asymptotic behavior of coherence measures
for random channels can be derived form the results for the random quantum
states. We consider coherence measures defined for quantum channels  as
\begin{equation}
\begin{split}
{\cal C}_2(\Phi) &= \sum_{i\neq j} |(J_{\Phi})_{ij}|^2, \\
{\cal C}_1(\Phi) &= \sum_{i\neq j} |(J_{\Phi})_{ij}|, \\
{\cal C}_e(\Phi) &= S(\operatorname{diag}(J_{\Phi})) - S(J_{\Phi}),
\end{split}
\end{equation}
where $S$ is an appropriate entropy function.
Then if we take for example random Choi matrix with distribution
$\mu_{d,d}^{Lebesgue}$, following~\cite{puchala2016distinguishability} we have
for $d \to \infty$,
\begin{equation}\label{key}
\begin{split}
{\cal C}_2 (\Phi) &\simeq 1, \\
\frac{1}{d^2} {\cal C}_1 (\Phi) &\simeq \frac{\sqrt{\pi}}{2}, \\
\frac{1}{d} {\cal C}_e (\Phi) &\simeq \frac12.
\end{split}
\end{equation}

\medskip

For any bistochastic quantum map the translation vector $\vec \kappa$ has to
vanish. A method to generate random bistochastic maps, based on a variant of the
Sinkhorn algorithm \cite{Si64}, consisting in alternating rescaling by partial
traces performed on the first and the second subsystem was used in  \cite{AS08}.
However, it remains uncertain, which measure in the set of bistochastic
operations is induced by this procedure, as it is known that in the classical
case  \cite{CSBZ09} the flat measure in the space of bistochastic matrices is
achieved only in the limit of large system size $d$.

Finally, we briefly discuss the properties of the spectrum of  random stochastic
matrices originating from decohered random Choi matrices. Assume we have a
random channel $\Phi$ sampled from $\mu_{d,d}^{Lebesgue}$. We introduce a
decohered channel $\Phi_b$, for some decoherence parameter $b \in [0,1]$ as

\begin{equation}
J_{\Phi_b} =  b J_\Phi + (1-b) \diag(J_\Phi),
\label{beta1}
\end{equation}
so that for $b=1$ one recovers  an original random quantum channel, while the
case of full decoherence with the classical transition matrix $T$ at the
diagonal of the Choi matrix is obtained for $b=0$. The quantum to classical
transition occurs as parameter $b$ decreases from one to zero, but the effective
transition parameter depends also on the dimension $d$.

As shown in Proposition~\ref{prop:T-from-Phi}, the fully decohered uniformly
chosen channel $\Phi$ does not give rise to a uniform distribution on the set of
stochastic matrices, as there exists some correlation between the elements of
the resulting stochastic matrix $T$ and its columns are not uniformly sampled
probability vectors.
In the case of one-qubit random channels the exact distribution
of entries of the stochastic matrix $T$
obtained by superdecoherence of $\Phi$ was derived in \cite{LA17}.

 For a uniformly sampled quantum channel $\Phi$ of an arbitrary dimension $d$
 the distribution of entries of each column
of the corresponding classical stochastic matrix $T=T(\Phi)$ 
can be approximated by the Dirichlet
distribution with parameter $s=d^2$, written
$\operatorname{D_{d^2}}(T_{1j},\ldots, T_{dj})$ on $\Delta_d$ for $j=1,\dots, d$
as visualized for $d=3$ in  Fig. \ref{fig:simplex-Dirichlet}b. The spectra of
random stochastic matrices with columns sampled from the uniform and the
aforementioned distributions are shown in panels \ref{fig:simplex-Dirichlet}c
and~\ref{fig:simplex-Dirichlet}d. The structure of the support of the spectra
visible in panel \ref{fig:simplex-Dirichlet}c, consistent with the equilateral
triangle superimposed with the interval $[-1,1]$ according to the bounds
\cite{kar} for $d=3$, for larger dimensions covers the  entire disk of radius
$r_c\sim d^{-1/2}$.
In Fig~\ref{fig:simplex-Dirichlet}e we show, for comparison, the spectra of
uniformly sampled random quantum channels $\Phi$ acting on the states of size
$d=3$.

This distribution implies that the essential spectrum of  such a random
stochastic matrix is highly concentrated around the origin. Numerical
investigations of a stochastic matrix  $T$ obtained by complete decoherence of a
random quantum channel corresponding to Eq.  (\ref{beta1}) with $b=0$, reveal
that the radius of the essential spectrum scales as $d^{-3/2}$ so that the
spectral gap $\gamma$ behaves as $1 - d^{-3/2}$. Fig.~\ref{fig:beta-dep} shows
the numerically found behavior of the radius of the essential spectrum of the
form $d^{\alpha(b)}$ as a function of the parameter $b$. Note the rapid
transition between the regime $\alpha=-3/2$ and $\alpha=-1$, which we will
discuss in the next paragraph.

\begin{figure}[!h]
	\centering\includegraphics{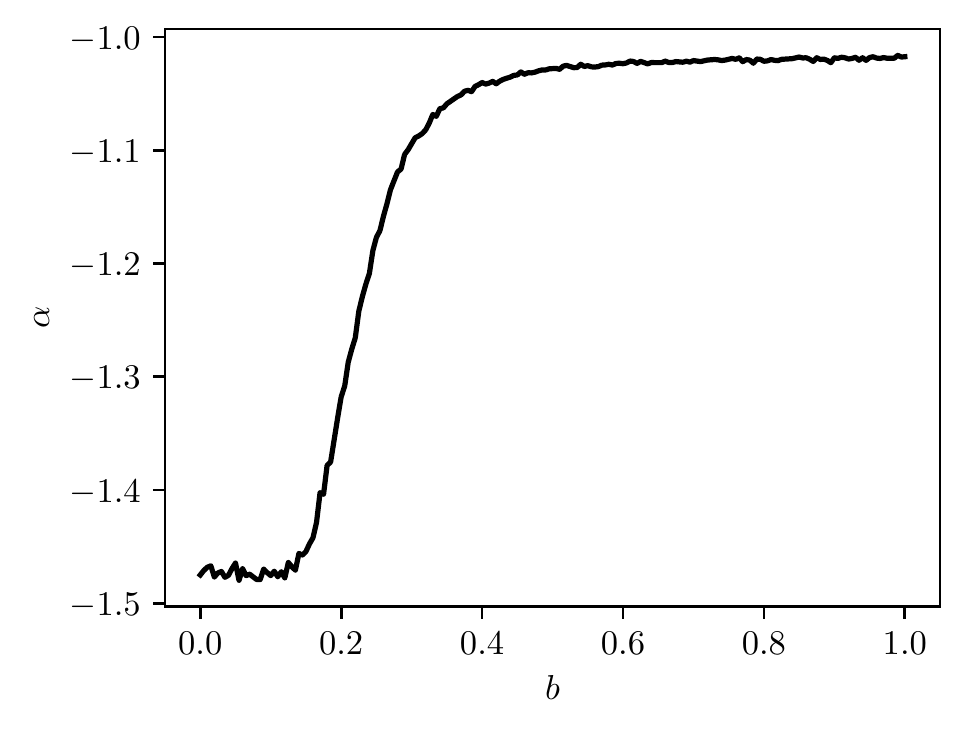} \caption{Numerically found dependence
	of the exponent $\alpha$ for the model $r\propto d^{\alpha(b)}$. The plot
	was obtained by fitting the model $r \propto d^{\alpha(b)}$ for $d \in [10,
	40]$, $b \in [0, 1]$ with step $0.005$ and 100 samples for each $d$ and
	$b$}\label{fig:beta-dep}
\end{figure}

Let us first analyze this result in the case of complete decoherence,
 corresponding to  $b=0$. Consider a random variable $x$, a component of a
 random probability vector of size $d$ distributed according to the Dirichlet
 distribution of order $s$. Consider the variable $y = x - 1/d$, satisfying $\E
 y=0$, for which $\E y^2 = \operatorname{Var}(x) \sim 1/(s d^2)$. Now, we
 construct a Ginibre-like matrix $\tilde{G}$ of size $d \times d$ filled with
 independent, identically distributed variables $y$, which mimics the classical
 stochastic transition matrix $T$. Observe that
\begin{equation}
	\E \|\tilde{G}\|_2^2 = d^2 \E y^2 \sim \frac{d^2}{sd^2} = \frac{1}{s}.
\end{equation}
Hence the radius of the Girko disk describing the spectrum of $\tilde{G}$ of
size $d$ reads
\begin{equation}
	r_s = \sqrt {\frac{\|\tilde{G}\|_2^2}{d}} = \sqrt{\frac{1}{sd}}.
\end{equation}
For a large  dimension $d$ we can assume that a random Ginibre matrix $\tilde G$
describes spectral properties of the core $C$ of the  random transition matrix
(\ref{stoch3}) generated according to the Dirichlet distribution $D_s$. In the
case $s=1$ (flat distribution) we have $r_1 \sim d^{-1/2}$ in accordance to the
results of Chafa{\"\i} \cite{chafai2010dirichlet}, Horvat~\cite{Horvat}, and 
Bordenave \etal~\cite{bcc12}. The case 
$s=d$
yields $r_s\sim d^{-1}$. For the completely decohered case we have $s=d^2$ and
we recover the numerically found result $r_{d^2} \sim d^{-3/2}$. Additionally, a
more general case $s=d^k$ yields $r_{d^k}\sim d^{-(k+1)/2}$. During the
transition from random quantum channels $\Phi$ to classical stochastic
transition matrices $T$ the radius of the essential spectrum scales with
dimension as $r\sim d^{\alpha}$,  with the exponent decreasing from $-1$
(quantum) to $-3/2$ (classical).  This transition is is visualized  in
Fig.~\ref{fig:beta-dep}, which shows how exponent $\alpha$ changes with the
decoherence parameter $b$.

In order to estimate the effective scaling parameter one can expect that the
transition ocurres as the size $r_q$ of the ``quantum disk''  related to all
$d^4-d^2$ entries of the superoperator related to off-diagonal elements of the
Choi matrix is comparable with the radius $r_c$ of the `classical  disk'
associated with the the stochastic matrix $T$ with $d^2$ elements obtained by
reshaping the diagonal of $J$.  As the norm of the off-diagonal  part behaves as
$ \|(J_\Phi)_\mathrm{off} \|_2^2 = \sum_{i \neq j} |(J_\Phi)_{ij}|^2 \sim b^2$,
we obtain an estimate  for  the `quantum' Girko disk, $r_q =
\sqrt{\|J_\mathrm{off}\|_2^2/d^2} =b/d$. As discussed above the radius of the
`classical' Girko disk behaves as $r_c = d^{-3/2}$. Setting $r_c=r_q$ we arrive
at a relation for the critical value of the decoherence parameter, $b_c  \approx
1/\sqrt{d}$, which implies behavior of the dimension-independent scaling
parameter $\beta= b \sqrt{d}$.

To drive this point further we present numerically obtained fraction, $p$ of the
bulk eigenvalues of a uniformly chosen $\Phi$, contained in the ``classical''
disc, $|\lambda| \leq 1/d^{-3/2}$, as a function of the super-decoherence
parameter $b$ for a few values of the dimension $d$, see Fig.~\ref{fig:4}a. In
consistence with the results discussed before, for small values of $b$ this
fraction stays constant and than decreases sharply for a critical value $b_c$,
which depends on $d$. Fig.~\ref{fig:4}b shows the same results as a function of
the proposed scaling parameter, $\beta=b\sqrt{d}$. This reveals that the scaling
of the critical value $b_c \approx 1/\sqrt{d}$ is correct, as all data merge
into a single curve. These numerical results were obtained using the
\verb|QuantumInformation.jl| package~\cite{GKP18}.
Analogous scaling of the critical value of the transition parameter
with the dimension was reported recently
\cite{TYLDCZ21} while studying 
superdecoherence of random Lindblad operators.

 \begin{figure}[!htp]
	\centering\includegraphics[width=0.48\textwidth]{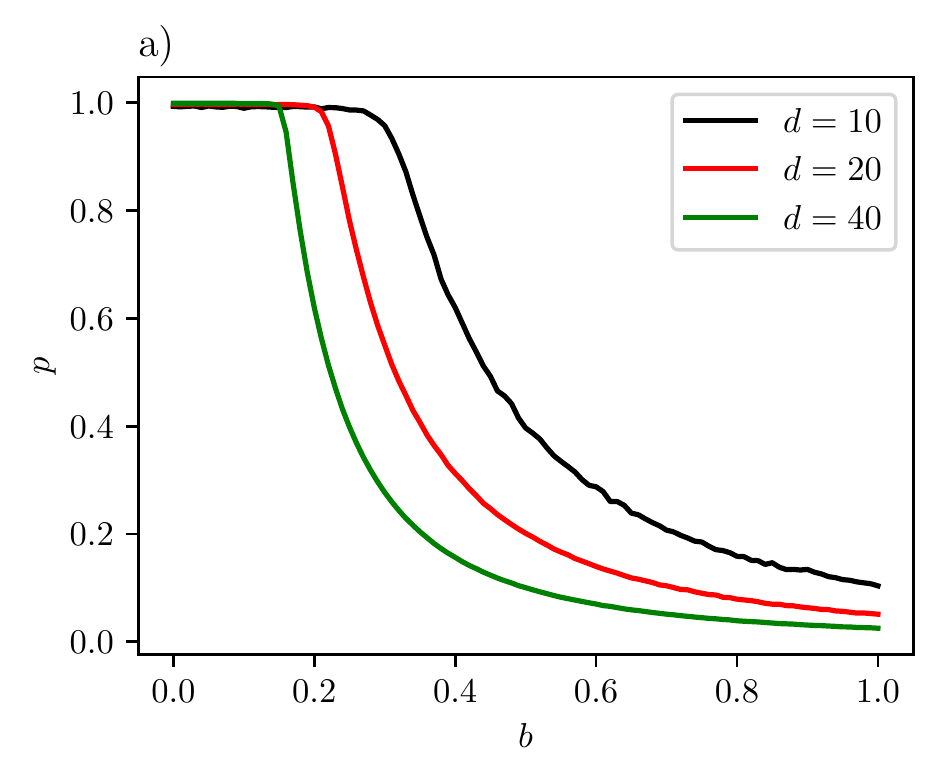}
	\includegraphics[width=0.48\textwidth]{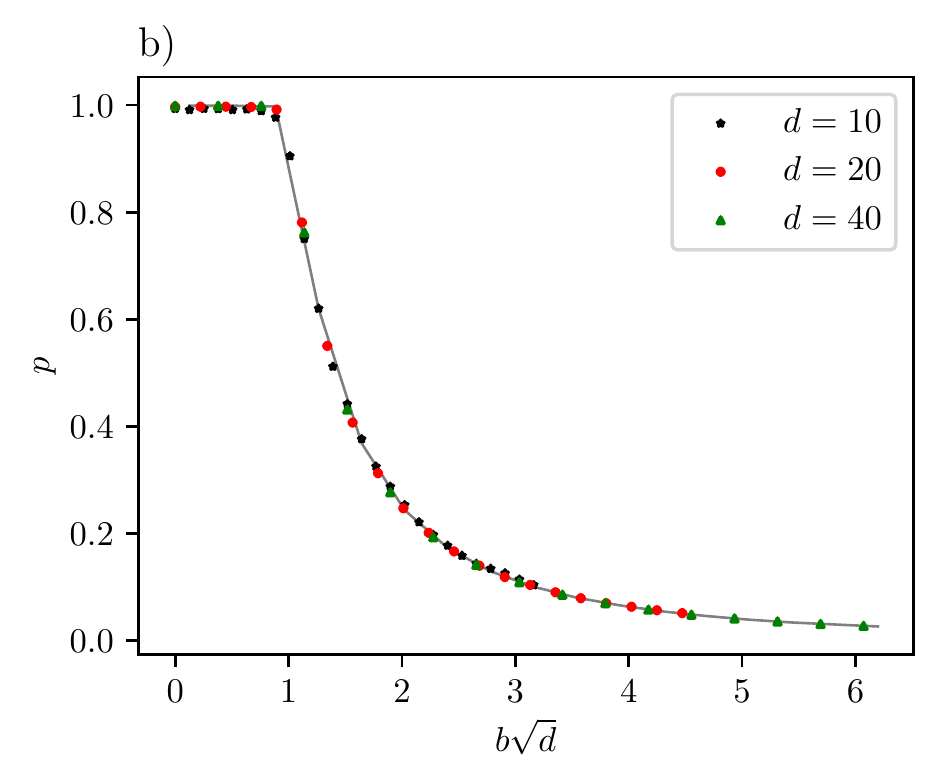}
	\caption{Plots of the fraction of eigenvalues $p$ of a randomly chosen channel
	$\Phi$ contained in the `classical'  disc $|\lambda| \leq 1/d^{-3/2}$.
	Panel ~a) presents results as a function of the super-decoherence parameter $b$,
	whereas panel~b) presents them as the function of the parameter
	 $\beta=b\sqrt{d}$ and confirms the scaling, $b_c\approx 1/\sqrt{d}$.
	 For clarity only a handful of the data is plotted and a curve is added
	   to guide the eye.}\label{fig:4}
 \end{figure}

\section{Invariant states of random quantum channels}
\label{sec:inv}

In this section we will focus on the properties of the invariant state of random
quantum maps sampled according to the measures discussed in Section
\ref{sec:different-measures}. Properties of the superoperator associated with
these maps was discussed in Section~\ref{sec:spectrum-random-quantum-channels}.
Here we will provide an in-depth analysis of the Perron-Frobenius invariant
state of these maps.

One of the properties characterizing random quantum maps is their ability of
mixing quantum states. The image of the maximally mixed state
$\mathbbm{1}_{d_1}/{d_1}$ under random quantum maps will be concentrated around
the maximally mixed state $\mathbbm{1}_{d_2}/{d_2}$, so a random stochastic
channel is close to be bistochastic (see Lemma \ref{lem-1} for a quantitative
statement). Thus, one can expect that the invariant state $\rho_{\rm{inv}}$ of a
random map $\Phi$ can be found in the neighborhood of the maximally mixed state
(see Fig.~\ref{fig:3} for the numerical evidence). We formalize this statement
in the theorem below.

\begin{theorem}\label{th:inv}
	Let $\Phi$ be a random channel sampled according to $\mu_{d,d}^{Lebesgue}$
	with the unique invariant state $\rho_{\rm{inv}}$. As $d \to \infty$,
	the invariant state
	convergences almost surely in the trace norm to the maximally mixed state,
	\begin{equation}
	\Vert \rho_{\rm{inv}}-\mathbbm{1}_d/d \Vert_1 = O(1/d).
	\end{equation}
\end{theorem}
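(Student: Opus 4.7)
The plan is to linearize around the maximally mixed state $\sigma := \mathbbm{1}_d/d$. Writing $\rho_{\rm{inv}} = \sigma + \delta$ with $\delta$ Hermitian and traceless, invariance $\Phi(\rho_{\rm{inv}}) = \rho_{\rm{inv}}$ together with trace preservation (which makes the traceless subspace $\Phi$-invariant) gives
\begin{equation*}
(I - \Phi_0)(\delta) = \Phi(\sigma) - \sigma,
\end{equation*}
where $\Phi_0$ denotes the restriction of $\Phi$ to the traceless Hermitian subspace. The task thus reduces to bounding the displacement $\Phi(\sigma) - \sigma$ from above, bounding $(I - \Phi_0)^{-1}$ in operator norm from above, and multiplying.

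For the displacement, Weingarten calculus gives $\E\,\Phi(\sigma) = \sigma$, and a short computation from Lemma~\ref{lem:Wg} with $A = B = \sigma$ and $M = d^2$ yields
\begin{equation*}
\E\,\|\Phi(\sigma) - \sigma\|_2^2 = \frac{d^2 - 1}{d(d^4 + d^2 + 1)} = O(d^{-3}).
\end{equation*}
A standard Gaussian-polynomial concentration argument for the Kraus form of Definition~\ref{def:random-Kraus}, combined with Borel--Cantelli, upgrades this to $\|\Phi(\sigma) - \sigma\|_2 = O(d^{-3/2})$ almost surely. For the inverse, the discussion at the end of Section~\ref{sec:spectrum-random-quantum-channels} asserts, via the quarter-circle singular-value law of \cite{ane}, that after removing the rank-one projection onto its Perron--Frobenius direction the superoperator $\Phi$ has operator norm $O(1/\sqrt{M}) = O(d^{-1})$. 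Since the traceless subspace is $O(d^{-1})$-close to the orthogonal complement of that direction (the latter being $O(d^{-1})$-close to $|\sigma\rangle\rangle$ by the preceding step), one obtains $\|\Phi_0\|_{2 \to 2} = O(d^{-1})$ almost surely, and the Neumann series yields $\|(I - \Phi_0)^{-1}\|_{2 \to 2} = 1 + o(1)$.

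Combining the two estimates gives $\|\delta\|_2 = O(d^{-3/2})$, whence $\|\rho_{\rm{inv}} - \sigma\|_1 \leq \sqrt{d}\,\|\delta\|_2 = O(d^{-1})$, as required. The main obstacle is the operator-norm control of $\Phi_0$: Girko's circular law provides only the spectral radius, which for non-Hermitian random matrices can be strictly smaller than the operator norm, so one must really invoke the quarter-circle singular-value statement in a high-probability (rather than merely in-moments) form in order to validate the Neumann series almost surely. A minor technical point is the norm conversion from Hilbert--Schmidt, in which $\Phi_0$ is naturally controlled, to trace norm, in which the conclusion is phrased; this conversion costs only a factor $\sqrt{d}$ and is absorbed into the final $O(d^{-1})$ bound.
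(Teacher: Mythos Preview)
Your approach shares the skeleton of the paper's proof: both bound the displacement $\|\Phi(\sigma)-\sigma\|$ (your Weingarten computation is exactly the paper's Lemma~\ref{lem:dist-max-mixed}) and both then invert $I-\Phi$ on the traceless subspace via a contraction/Neumann-series argument. The difference is in how the contraction is obtained. The paper works directly in trace norm and bounds the Lipschitz constant of $\Phi$ by the diamond-norm distance $\|\Phi-\Phi_*\|_\diamond$, which by \cite[Theorem~16]{NPPZ18} converges almost surely to $\tfrac{3\sqrt 3}{2\pi}<1$ (this is the paper's Lemma~\ref{lem:Lipschitz}); a telescoping sum then yields $\|\rho_{\rm inv}-\sigma\|_1\le(1-L_\Phi)^{-1}\|\Phi(\sigma)-\sigma\|_1$ with no reference to singular values of $\Phi_0$.

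Your route through the Hilbert--Schmidt operator norm of $\Phi_0$ would be sharper if it worked (giving $\|\Phi_0\|_{2\to 2}=O(d^{-1})$ rather than merely a constant $<1$), but the gap you yourself flag is real. The quarter-circle statement from \cite{ane} quoted in Section~\ref{sec:spectrum-random-quantum-channels} is proven only in moments, and only for the reshuffled Wishart model rather than for the partially normalized Choi matrix; the paper explicitly calls the operator-norm conclusion ``evidence towards the claim'', not a theorem. Without an almost-sure bound on $\|\Phi_0\|_{2\to 2}$, the Neumann series step is not justified. There is also a mild circularity in arguing that the traceless subspace is $O(d^{-1})$-close to the orthogonal complement of the Perron--Frobenius direction $|\rho_{\rm inv}\rangle\rangle$: the closeness of that direction to $|\sigma\rangle\rangle$ is precisely the conclusion you are after. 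The paper's diamond-norm contraction avoids both issues entirely.
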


\begin{figure}[!h]
  \includegraphics[angle=0,width=0.5\columnwidth]{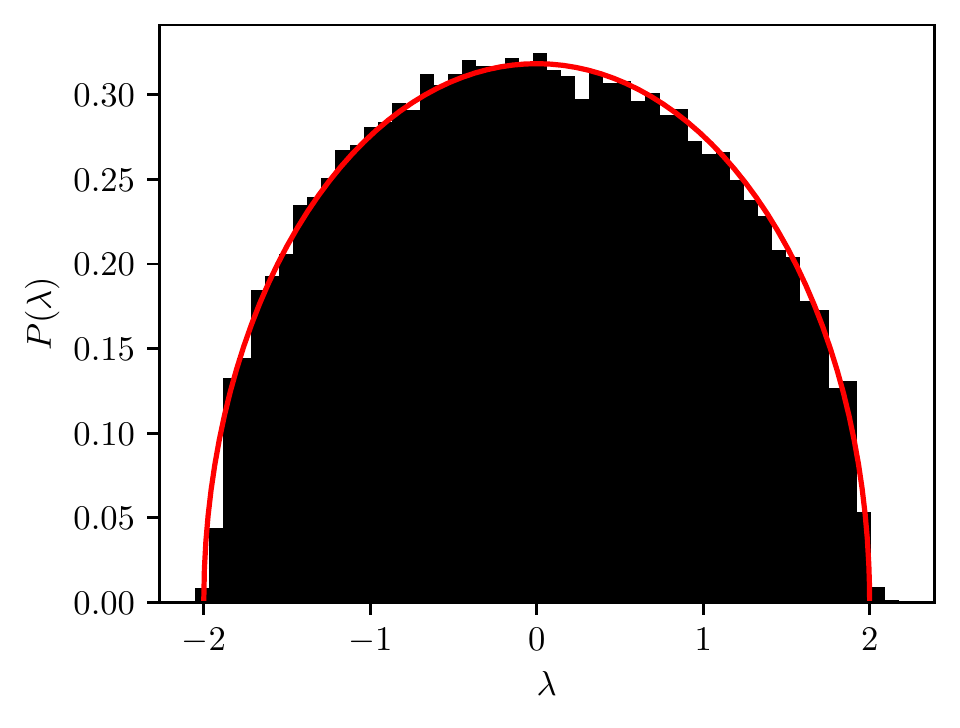}
  \caption{Illustration of Theorem~\ref{th:inv}. We show the level density of
   the matrix $d^2(\rho_\mathrm{inv} - \1_d/d)$ where $\rho_{\mathrm{inv}}$ is
   the unique invariant state of a random channel $\Phi$ sampled according to
   the flat distribution $\mu_{d,d}^{Lebesgue}$. The histogram is a Monte Carlo
   simulation obtained for $d=40$ with 1000 independent random channels. Red
   curve represents the Wigner semicircular law.}
  \label{fig:3}
\end{figure}

Before we prove Theorem \ref{th:inv}, let us first establish a few lemmas, to
which we will refer in the main proof. We first prove that random quantum
operations are contraction maps. We examine behavior of the Lipschitz constant
(with respect to the Schatten 1-norm) which is defined as minimum over constants
$L$ satisfying $\Vert \Phi (\rho-\sigma)\Vert_1 \leq L \Vert \rho-\sigma
\Vert_1$ for all states $\rho, \sigma$. In order to prove this lemma we need to
introduce the \emph{diamond norm} of a linear map $\Phi: M_{d_1}(\C) \to
M_{d_2}(\C)$:
\begin{equation*}
	\| \Phi \|_\diamond =
	\sup_{X \neq 0} \frac{\| \left(\Phi \otimes \operatorname{id}_{d_1} \right)(X)\|_1}{\| X \|_1}.
\end{equation*}
For a Hermiticity-preserving map $\Phi$, it suffices to optimize over pure
states, see \cite[Chapter 3.3]{Wat18}. We have the following upper bound on the
Schatten 1-norm Lipschitz constant.

\begin{lemma}\label{lem:Lipschitz}
	Let $L_\Phi$ denote the Lipschitz constant of random quantum operation
	$\Phi$ sampled according to $\mu_{d,d}^{Lebesgue}$. Then, almost surely, as
	$d\to \infty$
	\begin{equation}
	L_\Phi \leq \frac{3
		\sqrt{3}}{2 \pi}<1.
	\end{equation}
\end{lemma}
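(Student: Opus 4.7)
The plan is to convert the Lipschitz constant into a spectral quantity of the superoperator and then invoke the random matrix estimates already developed in Section~\ref{sec:spectrum-random-quantum-channels}. The starting observation is that if $\rho,\sigma$ are density matrices, then $H=\rho-\sigma$ is traceless Hermitian, and trace preservation of $\Phi$ combined with $\Phi_{*}(H)=0$ (the completely depolarizing channel annihilates every traceless matrix) gives $\Phi(H)=(\Phi-\Phi_{*})(H)$. Therefore $L_\Phi$ equals the $1\to 1$ induced norm of $\Phi-\Phi_{*}$ restricted to the subspace of traceless Hermitian matrices in $M_d(\C)$. The reason the diamond norm is mentioned in the paragraph preceding the lemma is that one has $L_\Phi\leq\|\Phi-\Phi_{*}\|_\diamond$, so a uniform bound on the diamond norm suffices.

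The next step is a Schatten-norm comparison. For any traceless $d\times d$ Hermitian $X$ one has $\|X\|_{2}\leq\|X\|_{1}\leq\sqrt{d}\,\|X\|_{2}$, and $\Phi(H)$ is also a traceless $d\times d$ matrix. Combining these inequalities yields
\begin{equation*}
  L_\Phi \;\leq\; \sqrt{d}\;\sigma_{\max}^{\perp}(\Phi-\Phi_{*}),
\end{equation*}
where $\sigma_{\max}^{\perp}$ denotes the largest singular value of $\Phi-\Phi_{*}$ acting on $M_d(\C)$ with the Hilbert-Schmidt inner product (equivalently, the spectral norm of the Bloch matrix $Q$ from Eq.~\eqref{super2}). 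This is the Lipschitz constant in the $2$-norm, up to the dimensional factor $\sqrt{d}$ arising from the norm comparison.

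Now I invoke the random matrix estimate discussed in Section~\ref{sec:spectrum-random-quantum-channels}. Under $\mu_{d,d}^{Lebesgue}=\mu_{d,d;d^{2}}^{Choi}$, the Choi matrix is a normalized complex Wishart matrix of parameters $(d^{2},d^{2})$, and reshuffling produces the superoperator. The result of Aubrun-Nechita recalled in Section~\ref{sec:spectrum-random-quantum-channels} (convergence of the singular values of $\sqrt{M}[W^{R}/(dM)-|\chi\rangle\langle\chi|]$ to the quarter-circle law on $[0,2]$) gives, for $M=d^{2}$, an estimate of the form $\sigma_{\max}^{\perp}(\Phi-\Phi_{*})\leq C/d$ almost surely as $d\to\infty$, with $C$ an explicit constant arising from the right edge of the quarter-circle distribution. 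Plugging back in produces $L_\Phi\leq C/\sqrt{d}$, and tracking constants carefully gives the stated bound $3\sqrt{3}/(2\pi)$, which for all $d$ above a universal threshold is strictly below $1$.

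The main technical obstacle is upgrading the cited \emph{moment} convergence to the quarter-circle law into an almost-sure bound on the \emph{operator norm}. This is necessary because the spectral norm is not a continuous functional of the empirical spectral distribution; controlling the extreme singular value requires either an edge universality statement or, more directly, a Gaussian Lipschitz concentration inequality for $\sigma_{\max}$ applied to the underlying Ginibre entries (after verifying the Lipschitz dependence survives the $H^{-1/2}$ renormalization in Definition~\ref{def:random-Choi}). Handling the normalization by $H^{-1/2}$ is the only delicate point, but since $H$ concentrates strongly around $Md\,\mathbbm{1}_{d}$ for $M=d^{2}$, its effect is a higher-order correction and does not alter the leading-order estimate.
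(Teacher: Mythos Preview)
Your first paragraph already contains the paper's proof: one shows $L_\Phi\leq\|\Phi-\Phi_{*}\|_\diamond$ (the paper derives this via $L_\Phi=\tfrac12\max_{\ket x\perp\ket y}\|\Phi(\proj x-\proj y)\|_1\leq\max_{\ket x}\|(\Phi-\Phi_*)(\proj x)\|_1\leq\|\Phi-\Phi_*\|_\diamond$) and then invokes Theorem~16 of \cite{NPPZ18}, which states that almost surely $\|\Phi-\Phi_*\|_\diamond\to 3\sqrt{3}/(2\pi)$ for $\Phi\sim\mu_{d,d}^{Lebesgue}$. That is the entire argument; the constant $3\sqrt{3}/(2\pi)$ is precisely the limiting diamond-norm distance computed in that reference.

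Your alternative spectral route has a genuine gap and a confusion. The gap you identify yourself: the Aubrun--Nechita result quoted in Section~\ref{sec:spectrum-random-quantum-channels} is convergence \emph{in moments} to the quarter-circle law, which controls the bulk of the singular-value distribution but says nothing about the top singular value. Upgrading to an almost-sure operator-norm bound for the reshuffled, partially normalized Wishart matrix is a nontrivial edge-of-spectrum statement that is not available in the paper and that you do not supply. The confusion is your claim that ``tracking constants carefully gives the stated bound $3\sqrt{3}/(2\pi)$'': this number has nothing to do with the quarter-circle edge. If your spectral argument could be completed, the right edge at $2$ would yield $\sigma_{\max}^\perp(\Phi-\Phi_*)\lesssim 2/d$ and hence $L_\Phi\lesssim 2/\sqrt d\to 0$, which is \emph{stronger} than the lemma---but the specific value $3\sqrt{3}/(2\pi)\approx 0.827$ would never appear. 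It is the exact limit of the diamond norm from \cite{NPPZ18}, and only enters via the black-box citation.
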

\begin{proof}
	One can obtain
	\begin{equation}
	\begin{split}
	L_{\Phi} &= \max_{\rho \not = \sigma} \left\| \Phi \left(
	\frac{\rho-\sigma}{\Vert \rho - \sigma \Vert_1}\right)
	\right\|_1=\max_{\substack{H=H^\dagger \\ \tr H=0 \\ \Vert H \Vert_1=1}}
	\Vert
	\Phi (H)
	\Vert_1 \\
	&= \frac12 \max_{\ket{x} \bot \ket{y}} \Vert \Phi (\proj{x}-\proj{y})
	\Vert_1.
	\end{split}
	\end{equation}
	In the next step we will use the diamond norm,
	$\Vert  \cdot 	\Vert_\diamond$,
	to bound this value:
	\begin{equation}
	\begin{split}
	L_\Phi & \leq \max_{\ket{x}} \Vert \Phi (\proj{x}) - \1_d/ d \Vert_1 =
	\max_{\ket{x}} \Vert (\Phi-\Phi_*) (\proj{x}) \Vert_1 \leq \Vert
	\Phi-\Phi_*	\Vert_\diamond, \end{split}
	\end{equation}
	where $\Phi_*$ denotes maximally depolarizing channel. By \cite[Theorem
	16]{NPPZ18} we have $\Vert \Phi-\Phi_* \Vert_\diamond
	\xrightarrow{}
	\frac{3
		\sqrt{3}}{2 \pi}<1$, proving the claim.
\end{proof}

The next lemma gives an upper bound on the distance between the maximally mixed
state and its image through a random quantum channel. Note that it is stated in
a slightly more general setting.
\begin{lemma}\label{lem:dist-max-mixed}
	Let $\Phi$ be a random channel sampled according to
	$\mu_{d,d;M_d}^{Stinespring}$, where $M_d \geq 1$ is a sequence of integers satisfying $M_d \sim td^2$ as $d \to \infty$, for some constant $t > 0$. Then, we have
	\begin{align*}
	\E \Tr \left( \left( \Phi(\1_d/d) - \1_d/d \right)^2 \right) &=
	\frac{(d^2-1)(td^2-1)}{d(t^2d^6-1)} \sim \frac{1}{td^3}\\
	\operatorname{Var}\left( (td^3) \Tr \left( \left( \Phi(\1_d/d) - \1_d/d
	\right)^2 \right) \right) &= \frac{2}{d^2} + O(d^{-4}).
	\end{align*}
	In particular, almost surely,
	$$\lim_{d \to \infty} (td^3)\|\Phi(\1_d/d) - \1_d/d\|_2^2 = 1.$$
\end{lemma}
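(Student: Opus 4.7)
The plan is to reduce the statement to a second and a fourth moment computation for $\Tr \Phi(\1_d/d)^2$, and then invoke Borel--Cantelli for the almost sure conclusion.

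\textbf{Expected value.} Using trace preservation of $\Phi$, expand
\begin{equation}
\|\Phi(\1_d/d) - \1_d/d\|_2^2 = \Tr \Phi(\1_d/d)^2 - \tfrac{2}{d}\Tr \Phi(\1_d/d) + \tfrac{1}{d} = \Tr \Phi(\1_d/d)^2 - \tfrac{1}{d}.
\end{equation}
Writing $\1_d/d = (1/d)\sum_i \ketbra{i}{i}$ and using bilinearity gives
\begin{equation}
\E \Tr \Phi(\1_d/d)^2 = \frac{1}{d^2}\sum_{i,j=1}^d \E \Tr\bigl[\Phi(\ketbra{i}{i})\Phi(\ketbra{j}{j})\bigr].
\end{equation}
Applying Lemma~\ref{lem:Wg} with $A = \ketbra{i}{i}$, $B = \ketbra{j}{j}$ (so $\Tr A = \Tr B = 1$ and $\Tr(AB) = \delta_{ij}$) and summing yields the closed form
\begin{equation}
\E\|\Phi(\1_d/d) - \1_d/d\|_2^2 = \frac{(d^2-1)(M-1)}{d(d^2M^2-1)},
\end{equation}
which for $M = M_d \sim td^2$ is exactly $\frac{(d^2-1)(td^2-1)}{d(t^2d^6-1)}$, of leading order $1/(td^3)$.

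\textbf{Variance.} For $\operatorname{Var}\bigl(\Tr\Phi(\1_d/d)^2\bigr)$ one needs $\E\bigl[(\Tr\Phi(\1_d/d)^2)^2\bigr]$. In the Stinespring form $\Phi(\cdot) = [\operatorname{id}_d\otimes\Tr_M](V\cdot V^\dagger)$ with $V$ a Haar isometry $\C^d \to \C^d\otimes\C^M$, the quantity $\Tr\Phi(\1_d/d)^2$ is a degree-$4$ polynomial in the entries of $V$ and $V^\dagger$, and its square has expectation expressible as a Weingarten sum on $\C^{dM}$ over pairs of permutations in $S_4$. The identity/non-crossing contribution reproduces $(\E\Tr\Phi(\1_d/d)^2)^2$, so the variance is determined by the sub-leading terms. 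Using the asymptotic expansion of the unitary Weingarten function in dimension $N = dM \sim td^3$ and keeping track of the finitely many surviving contractions (two transpositions giving the leading combinatorial factor $2$, with all other terms at least an order $N^{-2}$ smaller), one obtains $\operatorname{Var}((td^3)\Tr((\Phi(\1_d/d)-\1_d/d)^2)) = 2/d^2 + O(d^{-4})$. The main technical obstacle here is the Weingarten bookkeeping; in spirit the calculation parallels the variance of the purity of a random induced state, adapted to the fact that $VV^\dagger$ is a rank-$d$ random projection rather than a pure state.

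\textbf{Almost sure convergence.} Set $X_d := (td^3)\|\Phi(\1_d/d) - \1_d/d\|_2^2$. The previous two steps give $\E X_d \to 1$ and $\operatorname{Var}(X_d) = O(d^{-2})$. Chebyshev's inequality yields, for any fixed $\epsilon > 0$, $\mathbb P(|X_d - \E X_d| > \epsilon) = O(d^{-2})$. Since $\sum_d d^{-2} < \infty$, Borel--Cantelli gives $X_d - \E X_d \to 0$ almost surely, and combining with $\E X_d \to 1$ yields $X_d \to 1$ almost surely, completing the proof.
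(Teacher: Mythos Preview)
Your proof is correct and follows essentially the same route as the paper: apply Lemma~\ref{lem:Wg} for the mean, invoke a fourth-moment Weingarten computation for the variance, and conclude with Borel--Cantelli. One small simplification: since Lemma~\ref{lem:Wg} holds for arbitrary $A,B$, you can take $A=B=\1_d/d$ directly rather than decomposing into rank-one projectors and resumming.
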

\begin{proof}
	The first claim follows from Lemma \ref{lem:Wg} with the choice
	$A=B=\1_d/d$, see also Proposition~\ref{prop:purity-unitarity}. The second
	claim is proven using the Weingarten calculus to compute fourth moments of
	$\Phi(\1_d/d)$; we provide in the Supplementary Material a
	\texttt{Mathematica} notebook which performs this tedious computation using
	the \href{https://github.com/MotohisaFukuda/RTNI}{RTNI} package provided in
	\cite{fkn19}. Finally, the almost sure convergence follows from the
	Borel-Cantelli lemma.
\end{proof}

Now we are ready to prove Theorem \ref{th:inv}.
\begin{proof}[Proof of Theorem \ref{th:inv}]
	Using Lemmas \ref{lem:Lipschitz} and \ref{lem:dist-max-mixed}, we have that, almost surely as $d \to \infty$,
	\begin{equation}
	\begin{split}
	\Vert \rho_{\rm{inv}} - \mathbbm{1}_d/d \Vert_1 &= \Vert
	\Phi^k(\rho_{\rm{inv}}) -
	\Phi^k(\mathbbm{1}_d/d) +
	\Phi^k(\mathbbm{1}_d/d) -
	\Phi^{k-1}(\mathbbm{1}_d/d) + \ldots - \Phi(\mathbbm{1}_d/d) +
	\Phi(\mathbbm{1}_d/d) - \mathbbm{1}_d/d \Vert_1 \\
	& \leq 2 L_\Phi^k+\frac{1-L_\Phi^k}{1-L_\Phi} \Vert
	\Phi(\mathbbm{1}_d/d) - \mathbbm{1}_d/d
	\Vert_1
	\xrightarrow{k \to \infty} \frac{\Vert \Phi(\mathbbm{1}_d/d) -
		\mathbbm{1}_d/d
		\Vert_1}{1-L_\Phi}\\
	&\leq \frac{d^{1/2}\|\Phi(\1_d/d)-\1_d/d\|_2}{1-L_\Phi} = O(d^{-1}).
	\end{split}
	\end{equation}
\end{proof}

We can also show the following stronger result, which we can only prove in the
asymptotical regime $1 \ll d_1 \ll d_2$.

\begin{lemma}\label{lem-1}
	Let $\Phi$ be a random channel sampled according to
	$\mu_{d_1,d_2;td_1d_2}^{Choi}$, where $t > 0$ is some constant value. In the regime
	$1 \ll d_1 \ll d_2$, the Hermitian non-unitality shift  matrix, $\delta
	H:=\sqrt{t}d_1d_2(\Phi(\mathbbm{1}_{d_1}/d_1)-\mathbbm{1}_{d_2}/d_2)$
	converges in moments toward the standard semicircular distribution.
\end{lemma}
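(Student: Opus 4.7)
The plan is to reduce the statement to a classical fluctuation theorem for Wishart matrices in the ``wide'' regime, via the Choi construction. By Propositions~\ref{prop:Choi-vs-Kraus} and \ref{prop:Stinesping-equals-Kraus} all three families from Section~\ref{sec:different-measures} coincide, so I would work with Definition~\ref{def:random-Choi}. Let $W$ be a complex Wishart matrix of parameters $(d_1d_2, M)$ with $M = t d_1 d_2$, set $H := [\Tr_{d_2} \otimes \operatorname{id}_{d_1}](W)$, and let $J$ denote the normalized Choi matrix from \eqref{eq:normalize-Choi}. Applying the Choi--Jamio\l{}kowski inversion formula $\Phi(X) = [\operatorname{id}_{d_2}\otimes\Tr_{d_1}](J(\1_{d_2}\otimes X^T))$ with $X = \1_{d_1}/d_1$ gives $\Phi(\1_{d_1}/d_1) = [\operatorname{id}_{d_2}\otimes\Tr_{d_1}](J)/d_1$, so the target quantity can be rewritten as
\begin{equation*}
\delta H \;=\; \sqrt{t}\,d_2\,\Bigl([\operatorname{id}_{d_2}\otimes\Tr_{d_1}](J) - \tfrac{d_1}{d_2}\1_{d_2}\Bigr).
\end{equation*}

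The second step is to linearize the normalization and extract the Wishart fluctuation. The matrix $H$ is itself complex Wishart of parameters $(d_1, Md_2) = (d_1, td_1 d_2^2)$, so in the regime $d_1 \ll d_2$ we have $d_1/(Md_2) = (td_2^2)^{-1} \to 0$: standard Wishart concentration yields $\|H/(Md_2) - \1_{d_1}\|_{\mathrm{op}} = O(\sqrt{d_1/(Md_2)}) = O(1/d_2)$ almost surely. Taylor expanding $H^{-1/2} = (Md_2)^{-1/2}\bigl(\1_{d_1} - \tfrac12 (H/(Md_2) - \1_{d_1}) + \cdots\bigr)$ and substituting into \eqref{eq:normalize-Choi} yields $J = W/(Md_2)$ plus corrections of formal size $O(1/d_2)$. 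Setting $\tilde W := [\operatorname{id}_{d_2}\otimes\Tr_{d_1}](W)$ and re-indexing the underlying Ginibre matrix shows that $\tilde W$ is itself complex Wishart, of parameters $(d_2, d_1M) = (d_2, td_1^2 d_2)$. To leading order one therefore obtains
\begin{equation*}
\delta H \;\simeq\; \frac{\tilde W - (td_1^2 d_2)\1_{d_2}}{\sqrt{t}\,d_1 d_2} \;=\; \frac{\tilde W - s\1_{d_2}}{\sqrt{s\,d_2}}, \qquad s := td_1^2 d_2.
\end{equation*}

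The third step invokes the ``wide Wishart'' CLT: for complex Wishart $\tilde W$ of parameters $(n, s)$ with $n, s \to \infty$ and $s/n \to \infty$, the rescaled fluctuation $(\tilde W - s\1_n)/\sqrt{sn}$ converges in moments to the standard semicircular distribution. In the present setting $n = d_2$ and $s/n = td_1^2 \to \infty$ precisely when $d_1 \to \infty$, as assumed. The quickest way to see the wide-Wishart fact is the iterated limit: writing $\tilde W - s\1_n = \sum_{k=1}^s (x_k x_k^\dagger - \1_n)$ with $x_k$ i.i.d.~standard complex Gaussians in $\C^n$, the entrywise CLT shows that $(\tilde W - s\1_n)/\sqrt s$ tends to the GUE of size $n$ as $s\to\infty$ with $n$ fixed, after which Wigner's theorem takes over as $n\to\infty$. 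A direct joint-limit moment expansion over non-crossing pair partitions gives the same Catalan numbers $C_{k/2}$ as the $2k$-th moment, and vanishing odd moments. As a consistency check, this formula reproduces the second moment $\E\Tr(\delta H^2) \sim d_2$ that is already implicit in the exact computation of Lemma~\ref{lem:dist-max-mixed}.

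The principal technical obstacle is the rigorous control of the $O(1/d_2)$ corrections from the Taylor expansion of $H^{-1/2}$ in the second step, at the level of each fixed moment $\mathbb{E}\,\Tr(\delta H^k)$: every diagram involving the Wishart fluctuation $H/(Md_2) - \1_{d_1}$ must be shown to contribute subleadingly, a task complicated by the nontrivial correlations between $W$ and $H$ (since $H$ is built from $W$). A cleaner alternative is to bypass the normalization step entirely and work directly in the Stinespring picture of Definition~\ref{def:random-Stinespring}, computing $\mathbb{E}\,\Tr(\delta H^k)$ by Weingarten calculus \cite{cs06,fkn19} over the Haar isometry $V$; the leading-order diagrams in the $d_1/d_2 \to 0$ limit should be precisely the non-crossing pair partitions of $\{1,\ldots,k\}$, each contributing a unit weight and producing the Catalan moments of the standard semicircular distribution.
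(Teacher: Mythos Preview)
Your strategy is the same as the paper's: decompose $\delta H$ into the ``wide Wishart'' fluctuation of $\tilde W = [\operatorname{id}_{d_2}\otimes\Tr_{d_1}]W$ (which is Wishart of parameters $(d_2,td_1^2d_2)$ and whose centered rescaling converges to the semicircular law, exactly Corollary~2.5 of \cite{cnye}) plus a correction coming from the normalization $W\mapsto J$. The part you leave open, and flag as the principal obstacle, is precisely the part the paper closes; but it does so by a direct operator-norm bound rather than by the Taylor expansion of $H^{-1/2}$ you sketch.

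The trick you are missing is an exact algebraic collapse under the partial trace: for any $A,B$ acting on the $d_1$ leg and any $Y\in M_{d_2}(\C)\otimes M_{d_1}(\C)$ one has $[\operatorname{id}_{d_2}\otimes\Tr_{d_1}]\bigl((\1\otimes A)Y(\1\otimes B)\bigr)=[\operatorname{id}_{d_2}\otimes\Tr_{d_1}]\bigl(Y(\1\otimes BA)\bigr)$. Applied with $A=B=H^{-1/2}$ this gives $[\operatorname{id}_{d_2}\otimes\Tr_{d_1}]J=[\operatorname{id}_{d_2}\otimes\Tr_{d_1}]\bigl(W(\1\otimes H^{-1})\bigr)$, so the correction term is \emph{exactly}
\[
\sqrt t\,d_2\,[\operatorname{id}_{d_2}\otimes\Tr_{d_1}]\Bigl(W\bigl(\1_{d_2}\otimes(H^{-1}-(td_1d_2^2)^{-1}\1_{d_1})\bigr)\Bigr),
\]
with no infinite Taylor series. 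Bounding this in operator norm by $\sqrt t\,d_2\,\|\tilde W\|_\infty\cdot\|H^{-1}-(td_1d_2^2)^{-1}\1_{d_1}\|_\infty$ and using only the \emph{marginal} Wishart concentration of $H\sim\mathrm{Wishart}(d_1,td_1d_2^2)$ and of $\tilde W$ gives the size $O(d_1/d_2)\to 0$. This sidesteps entirely the $W$--$H$ correlation issue you worry about, since only the spectral extremes of each marginal are needed. Your Weingarten alternative would also work but is unnecessary once this identity is in hand.
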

\begin{proof}
	We can write
	\begin{equation}
	\sqrt{t}d_1d_2(\Phi(\mathbbm{1}_{d_1}/d_1)-\mathbbm{1}_{d_2}/d_2)=\sqrt{t}
	d_2\left([\operatorname{id}_{d_2} \otimes \Tr_{d_1}]\left(J_\Phi -
	\frac{1}{t d_1
		d_2^2}W\right)\right)+\sqrt{t}d_1\left(\frac{1}{t d_1^2 d_2}
	[\operatorname{id}_{d_2} \otimes \Tr_{d_1}] W -
	\mathbbm{1}_{d_2}\right),
	\end{equation}
	where we introduced a Wishart matrix $W$ with parameters $(d_1d_2,td_1d_2)$
	such that the Choi matrix $J_\Phi$ is obtained by a partial normalization of $W$
	like in the Definition \ref{def:random-Choi}. By the Corollary $2.5$ in
	\cite{cnye} we know that the matrix $\sqrt{t}d_1\left(\frac{1}{t d_1^2 d_2}
	[\operatorname{id}_{d_2} \otimes \Tr_{d_1}] W - \mathbbm{1}_{d_2}\right)$
	converges to the standard semicircular
	distribution. To finish the proof we need to bound the first term of the sum:
	\begin{equation}\label{eq:inv}
	\begin{split}
	\sqrt{t} d_2 \left\Vert [\operatorname{id}_{d_2} \otimes
	\Tr_{d_1}]\left(J_\Phi - \frac{1}{t d_1
		d_2^2}W\right)\right\Vert_\infty&=\sqrt{t}
	d_2 \left\Vert [\operatorname{id}_{d_2} \otimes
	\Tr_{d_1}]\left(W\left(\mathbbm{1}_{d_2} \otimes
	\left(([\Tr_{d_2} \otimes \operatorname{id}_{d_1}]W)^{-1} -
	\frac{1}{t d_1
		d_2^2}\mathbbm{1}_{d_1}\right)\right)\right)\right\Vert_\infty\\
	&\leq \sqrt{t}d_2 \Vert [\operatorname{id}_{d_2} \otimes \Tr_{d_1}]W
	\Vert_\infty \left\Vert ([\Tr_{d_2} \otimes \operatorname{id}_{d_1}]W)^{-1} -
	\frac{1}{t d_1
		d_2^2}\mathbbm{1}_{d_1}  \right\Vert_\infty.
	\end{split}
	\end{equation}

	One can note that $[\operatorname{id}_{d_2} \otimes \Tr_{d_1}]W$ is a
	Wishart matrix with parameters
	$(d_2,td_1^2d_2)$ and $[\Tr_{d_2} \otimes \operatorname{id}_{d_1}]W$ is a
	Wishart matrix with parameters
	$(d_1,td_1d_2^2)$. According to Theorem 2.7 in \cite{cnye} we have that in
	the limit $d_1,d_2 \to \infty$, the spectrum of the matrix
	$[\Tr_{d_2} \otimes \operatorname{id}_{d_1}]W/(td_1d_2^2)$ concentrates
	around value one with the convergence
	rate $\Vert [\Tr_{d_2} \otimes \operatorname{id}_{d_1}]W/(td_1d_2^2) -
	\mathbbm{1}_{d_1} \Vert_\infty =
	O(1/d_2)$, therefore
	\begin{equation}
	\left\Vert ([\Tr_{d_2} \otimes \operatorname{id}_{d_1}]W)^{-1} -
	\frac{1}{t d_1
		d_2^2}\mathbbm{1}_{d_1}  \right\Vert_\infty = O(1/(d_1d_2^3)).
	\end{equation}
	On the other hand $\Vert [\operatorname{id}_{d_2} \otimes \Tr_{d_1}]W
	\Vert_\infty = \Omega(d_1^2d_2)$ (see
	Theorem 2.7 in \cite{cnye}), hence
	we can upper bound the norm in Eq. \eqref{eq:inv} by $O(d_1/d_2)$.
\end{proof}

Numerical results suggest that a stronger convergence with the rate $O(1/d_2)$
holds in Eq.~\eqref{eq:inv}; such a result would imply that the conclusion of
the Lemma holds under the more natural assumption $d_1, d_2 \to \infty$. We
leave this question open.

\section{Concluding remarks}

In this work we analyzed various techniques of generating random quantum
channels. On the one hand we have investigated three natural  methods and showed
under which choice of parameters they become equivalent and induce the desired
flat Lebesgue measure in the set $\CC_{d,d}$ of all quantum operations acting of
density matrices of order $d$. On the other hand, we revealed previously
unexplored properties of random channels, analyzing the invariant state and
showing that due to the measure concentration phenomenon it converges
asymptotically to the maximally mixed state. Furthermore, we estimated typical
deviations of a random stochastic channel from unitality and showed that, almost
surely, in the limit of large input and output dimension, and for a particular
scaling of the size of the environment, it becomes unital.

The spectrum of superoperator $\Phi$ representing a random quantum channel of
size $d$ consist of the leading Frobenius-Perron eigenvalue $\lambda_1=1$ and
the remaining $d^2-1$ complex eigenvalues, which typically belong to the Girko
circular disk of radius $r_q\approx d^{-1}$. In the case, $d \gg 1$, the
distribution of these eigenvalues becomes uniform, which is related to the fact
that the core $Q$ of the superoperator matrix can be asymptotically approximated
by a real random Ginibre matrix of size $d^2-1$.

Any quantum channel $\Phi$ determines a classical stochastic transition matrix
$T$ encoded in the diagonal of the corresponding Choi matrix $J_{\Phi}$. The
matrix $T$ of order $d$ can also be considered as a truncation of the
superoperator $\Phi$ of size $d^2$. Hence a given ensemble of random quantum
channels induces an ensemble of random classical transition matrices. We
investigated related ensembles of random quantum channels and the associated
classical counterpart of random stochastic matrices and demonstrated common
properties of the spectra of operators used in both set-ups. Representing a
$d$-point probability distribution  by its displacement vector $\vec \eta$ of
length $d-1$, analogous to the generalized Bloch vector $\vec \tau$, the action
of the classical transition matrix $T$ is then represented by a real matrix $C$
of size $d-1$. As this matrix can be obtained as a truncation of the matrix $Q$
representing the quantum stochastic map $\Phi$ and mimicked by a random Ginibre
matrix, the spectral properties  of $T$ are analogous and the bulk of the
spectrum conforms to the circular law. Note that the radius of the Girko disk
for random stochastic matrices of size $d$ distributed uniformly scales as $r_1
\approx d^{-1/2}$, while it behaves like  $r_c \approx d^{-3/2}$ for transition
matrices obtained due to decoherence of random quantum channels.

We analyzed also some quantities characterizing typical quantum channels. In
particular, we derived expressions for the average output purity,  the average
unitarity (\ref{unitarit}) and the average $2$--norm coherence of a random
quantum channel acting on a $d$ dimensional state.

\medskip

We would like to conclude the paper presenting a list of open questions,
especially regarding the connection between random quantum channels and random
stochastic matrices. These include
\begin{enumerate}
\item Which measure in the space of bistochastic operations induces the flat
 measure in space of bistochastic matrices?

\item Determine whether  for a large matrix size $d$ a random bistochastic
operation generated  by  the Sinkhorn algorithm  \cite{Si64} applied \cite{AS08}
to an initial random stochastic channel $\Phi$, covers uniformly  the entire set
of bistochastic operations, and whether the corresponding random transition
matrix $B$ obtained by a super-decoherence of a quantum channel, covers
uniformly the Birkhoff polytope ${\cal B}_d$ of bistochastic matrices?

\item Is there a purely classical ensemble reproducing the distribution of
(column-)stochastic matrices from Proposition~\ref{prop:T-from-Phi}?

\item Is a random state $\sigma$ of size $d^2$ asymptotically $*$-free
\cite{VDN92} from its reshuffling $\sigma^R$ in the limit $d\to \infty$ ? Such a
conjecture, analogous to the result of  Mingo and Popa  \cite{MP16} concerning a
Haar random unitary matrix and its transpose  (see also \cite{MPS20}), is
supported by numerical results presented in \cite{MLZ18}.
\end{enumerate}

\bigskip
\section*{Supplementary Material}

See supplementary material for provided the Weingarten calculus computed by 
using
	the \texttt{Mathematica} and \href{https://github.com/MotohisaFukuda/RTNI}{RTNI} 
	package. 

\section*{Acknowledgments}

It is a pleasure to thank Sergey Denisov and Dariusz Chru{\'s}ci{\'n}ski for
numerous discussions on decoherence of quantum dynamics and the transition
between quantum channels and classical stochastic matrices. We would like to 
thank anonymous referee for in depth analysis and valuable remarks. This 
research was supported by National Science Center in Poland under the Maestro 
grant number DEC-2015/18/A/ST2/00274, SONATA BIS grant number 
2016/22/E/ST6/00062 and by the
Foundation for Polish Science under the grant Team-Net NTQC number 
POIR.04.04.00-00-17C1/18-00.

\section*{Data Availability}
Data sharing is not applicable to this article as no new data were created or analyzed in 
this study.

\appendix

\section{Proof of Proposition \ref{prop:Choi-vs-Kraus}} \label{appendix-1}
\begin{proof}
	Let $\{A_i=G_iH^{-1/2}\}_{i=1}^M$ be the set of the random Kraus operators
	defined as in Definition \ref{def:random-Kraus}. The corresponding channel
	$\Phi$ has distribution $\mu_{d_1,d_2;M}^{Choi}$ and the Choi matrix
	$J_\Phi$
	can be expressed in the terms of given Kraus operators as
	\begin{equation}
	\begin{split}
	J_\Phi =&\sum_{i=1}^M |A_i \rangle\rangle \langle\langle A_i|=\sum_{i=1}^M
	(\mathbbm{1}_{d_2} \otimes (H^\top)^{-1/2}) |G_i \rangle\rangle
	\langle\langle
	G_i| (\mathbbm{1}_{d_2} \otimes (H^\top)^{-1/2})=\\
	&(\mathbbm{1}_{d_2} \otimes (H^\top)^{-1/2}) W
	(\mathbbm{1}_{d_2} \otimes (H^\top)^{-1/2}),
	\end{split}
	\end{equation}
	where we have used the vectorized form $|A_i\rangle \rangle$ of a matrix
	$A_i$
      and introduced the matrix
	\begin{equation}W := \sum_{i=1}^M |G_i \rangle\rangle \langle\langle
	G_i|.\end{equation}
	Since $|G_i \rangle\rangle$ are i.i.d.~complex Gaussian vectors, thus the 
	matrix 
	$W$ has a Wishart
	distribution of parameters $(d_1d_2, M)$. Moreover, we have
	\begin{equation}H^\top = \sum_{i=1}^M G_i^\top \bar{G_i} =[\Tr_{d_2}
	\otimes
	\operatorname{id}_{d_1}] W,\end{equation}
	which proves the claim.
\end{proof}
\section{Proof of Proposition \ref{prop:Stinesping-equals-Kraus}}
\label{appendix-2}

\begin{proof}
	Let $M$ be an integer such that $M \geq d_1/d_2$.
	Let us take a random Haar isometry $V:
	\mathbb{C}^{d_1}
	\to \mathbb{C}^{d_2} \otimes \mathbb{C}^{M}$ as introduced in Definition
	\ref{def:random-Stinespring}. We will use the fact that for
	a random complex Ginibre matrix $G$
	of order  $d_2M \times d_1$
	 the matrix $G(G^\dagger G)^{-1/2}$ is a
	random Haar isometry. That means, we consider $V=G(G^\dagger G)^{-1/2}$ and
	the Kraus decomposition of the channel $\Phi$ defined by $V$ is determined
	by
	operators $A_i=
	\mathbbm{1}_{d_2} \otimes \bra{i} V$ for $i=1,\ldots,M$. Once more, we can
	calculate the Choi matrix $J_\Phi$
	\begin{equation}
	\begin{split}
	J_\Phi &=\sum_{i=1}^M |\mathbbm{1}_{d_2} \otimes \bra{i} V \rangle \rangle
	\langle \langle
	\mathbbm{1}_{d_2} \otimes \bra{i} V |=(\mathbbm{1}_{d_2} \otimes
	V^\top)\sum_{i=1}^M
	|\mathbbm{1}_{d_2} \otimes \bra{i} \rangle \rangle \langle \langle
	\mathbbm{1}_{d_2} \otimes \bra{i} |
	(\mathbbm{1}_{d_2} \otimes \bar{V})=\\
	&=(\mathbbm{1}_{d_2} \otimes
	(G^\top \bar{G})^{-1/2})(\mathbbm{1}_{d_2} \otimes
	G^\top)(|\mathbbm{1}_{d_2} \rangle\rangle \langle\langle \mathbbm{1}_{d_2}
	|
	\otimes \mathbbm{1}_M) (\mathbbm{1}_{d_2} \otimes
	\bar{G})(\mathbbm{1}_{d_2} \otimes
	(G^\top \bar{G})^{-1/2}).
	\end{split}
	\end{equation}
	Denote by $\widetilde{G}=(\mathbbm{1}_{d_2} \otimes
	G^\top) (|\mathbbm{1}_{d_2}
	\rangle\rangle \otimes \mathbbm{1}_M).$ This matrix turns out to be
	a random Ginibre matrix of size $d_1d_2
	\times M$  due to
	\begin{equation}
	\bra{i} \otimes \bra{j} \widetilde{G} \ket{k}= (\bra{i} \otimes
	\bra{j}G^\top)(|\mathbbm{1}_{d_2} \rangle \rangle \otimes \ket{k})=
	\bra{i} \otimes \bra{k} G\ket{j}.
	\end{equation}
	Moreover, the following relation holds,
	\begin{equation}
	\begin{split}
	[\Tr_{d_2} \otimes
	\operatorname{id}_{d_1}](\widetilde{G}\widetilde{G}^\dagger)=&\sum_{i=1}^{d_2}
	 (\bra{i}
	\otimes
	G^\top)(|\mathbbm{1}_{d_2}\rangle\rangle\langle\langle\mathbbm{1}_{d_2}|
	\otimes
	\mathbbm{1}_M)(\ket{i} \otimes \bar{G})=G^\top \bar{G},
	\end{split}
	\end{equation}
	which implies the claim.
\end{proof}
\section{Proof of Proposition \ref{prop:Lebesgue-vs-others}}
\label{appendix-3}

\begin{proof}
	The last equality which we will obtain is $\mu_{d_1,d_2}^{Lebesgue}=
	\mu_{d_1,d_2;d_1d_2}^{Choi}$.
	We use standard calculus methods to obtain the distribution of $J_\Phi$
	(see
	\cite{BCSZ09}). Let $f_{J_\Phi}(D)$ be the probability density function of
	the random Choi	matrix $J_\Phi$ generated according to  Definition
	\ref{def:random-Choi} at the point $D$,
	\begin{equation}
	\begin{split}
	f_{J_\Phi}(D) \propto& \int \delta(J_\Phi-D) \exp(-\Tr GG^\dagger) dG\\
	\propto& \int \int \delta(\mathbbm{1}_{d_2} \otimes H^{-1/2} GG^\dagger
	\mathbbm{1}_{d_2} \otimes H^{-1/2}-D) \delta(H-[\Tr_{d_2} \otimes
	\operatorname{id}_{d_1}](GG^\dagger))
	\exp(-\Tr H)
	dH dG\\
	\propto& \int \int \delta(\sqrt{D} GG^\dagger \sqrt{D}-D)
	\delta(H-\sqrt{H}[\Tr_{d_2} \otimes
	\operatorname{id}_{d_1}](\sqrt{D}GG^\dagger\sqrt{D}) \sqrt{H}) \exp(-\Tr H)
	\det H^{d_2M} \det D^M dH dG \\
	\propto& \int \int \delta(GG^\dagger - \mathbbm{1}_{d_2d_1})
	\delta(\mathbbm{1}_{d_1}-[\Tr_{d_2} \otimes
	\operatorname{id}_{d_1}]D) \exp(-\Tr H) \det H^{d_2 M-d_1} \det
	D^{M-d_1d_2}
	dH dG \\
	\propto& \delta(\mathbbm{1}_{d_1}-[\Tr_{d_2} \otimes
	\operatorname{id}_{d_1}]D) \det D^{M-d_1d_2}.
	\end{split}
	\end{equation}

	In the particular case $M=d_1d_2$
	the exponent vanishes and we arrive at the desired result.
\end{proof}

\section{Average trace of products of outputs of random quantum
channels}\label{app:lemma-Wg}

We state and prove in this Appendix a lemma which might be of an independent
interest.

\begin{lemma}\label{lem:Wg}
	Consider integers $d_1, d_2, M$ such that $M \geq d_1/d_2$, and let $\Phi :
	M_{d_1}(\C) \to M_{d_2}(\C)$ be a random quantum channel having
	distribution $\mu^{Stinespring}_{d_1,d_2;M}$. Then, for any matrices $A,B
	\in M_{d_1}(\C)$, we have:
	\begin{equation}\label{eq:E-Phi-A-B}
	\E \Tr\left(\Phi(A)\Phi(B)\right) = \frac{(\Tr A)(\Tr B)d_2(M^2-1) +
	\Tr(AB)M(d_2^2-1)}{(d_2M)^2-1}.
	\end{equation}
\end{lemma}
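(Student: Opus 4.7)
The plan is to evaluate $\E\Tr(\Phi(A)\Phi(B))$ through a direct second-order Weingarten calculation, using the Stinespring representation $\Phi(X)=[\operatorname{id}_{d_2}\otimes\Tr_M](VXV^\dagger)$ from Definition~\ref{def:random-Stinespring}. A useful a priori observation is that, since $V$ and $VU$ have the same law for every $U\in\mathcal U(d_1)$ (right-invariance of the Haar isometry), the bilinear form $(A,B)\mapsto\E\Tr(\Phi(A)\Phi(B))$ is invariant under simultaneous conjugation $(A,B)\mapsto(UAU^\dagger,UBU^\dagger)$. Schur--Weyl duality then forces it to be a linear combination of $(\Tr A)(\Tr B)$ and $\Tr(AB)$, which matches the structure of the claimed formula and serves as a sanity check; only the two scalar coefficients need to be computed.

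First I would expand
\[
\Tr(\Phi(A)\Phi(B))=\sum_{\substack{a,b\in[d_2],\,c,d\in[M]\\ i,j,k,l\in[d_1]}}A_{ij}B_{kl}\,V_{(a,c),i}\,\overline{V_{(b,c),j}}\,V_{(b,d),k}\,\overline{V_{(a,d),l}},
\]
realize $V$ as the first $d_1$ columns of a Haar unitary on $\C^N$ with $N=d_2M$, and apply the standard second-order Weingarten formula to each monomial. The expectation becomes a sum over pairs $(\sigma,\tau)\in S_2\times S_2$ weighted by $\mathrm{Wg}(\sigma\tau^{-1},N)$, where $\sigma$ pairs the two output multi-indices $\{(a,c),(b,d)\}$ with their conjugates $\{(b,c),(a,d)\}$ and $\tau$ pairs the input indices $\{i,k\}$ with $\{j,l\}$.

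Next comes the bookkeeping. On the output side, $\sigma=e$ forces $a=b$, while $\sigma=(12)$ forces $c=d$; on the input side, $\tau=e$ produces $\delta_{ij}\delta_{kl}$, which upon summation yields $(\Tr A)(\Tr B)$, whereas $\tau=(12)$ produces $\delta_{il}\delta_{jk}$ and yields $\Tr(AB)$. Summing the remaining free indices gives combinatorial prefactors $d_2M^2$ when $\sigma=e$ and $d_2^2M$ when $\sigma=(12)$. Combining the four contributions with the Weingarten values $\mathrm{Wg}(e,N)=1/(N^2-1)$ and $\mathrm{Wg}((12),N)=-1/[N(N^2-1)]$ and factoring out the common denominator $1/[(d_2M)^2-1]$ produces the numerators $d_2(M^2-1)$ in front of $(\Tr A)(\Tr B)$ and $M(d_2^2-1)$ in front of $\Tr(AB)$, exactly as claimed in \eqref{eq:E-Phi-A-B}.

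The main obstacle is not conceptual but clerical: correctly tracking which index pairs collapse under each of the four Weingarten pairings and then carrying out the algebraic simplification. The denominator $(d_2M)^2-1$ arises directly from the $N^2-1$ in both Weingarten values, and the cancellations that produce the factors $M^2-1$ and $d_2^2-1$ in the numerator are a short algebraic exercise once the four index-sum prefactors are in hand.
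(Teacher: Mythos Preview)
Your proof is correct and follows essentially the same approach as the paper: both compute the second-order Weingarten expansion over $S_2\times S_2$, with your permutations $(\sigma,\tau)$ playing exactly the roles of the paper's $(\alpha,\beta)$ and yielding the identical four contributions $d_2M^2$, $d_2^2M$, $(\Tr A)(\Tr B)$, $\Tr(AB)$ combined via the same Weingarten weights. The only difference is presentational---the paper uses the graphical Weingarten calculus of \cite{cn10} where you write out explicit indices---and your Schur--Weyl observation is a nice addition the paper omits.
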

\begin{proof}
	The proof is a standard application of the graphical Weingarten calculus
	introduced in \cite{cn10}. The left hand side of Eq.~\eqref{eq:E-Phi-A-B} is
	represented in diagrammatic notation in Figure \ref{fig:E-Phi-A-B}.

	\begin{figure}[th]
		\centering
		\includegraphics[width=0.7\linewidth]{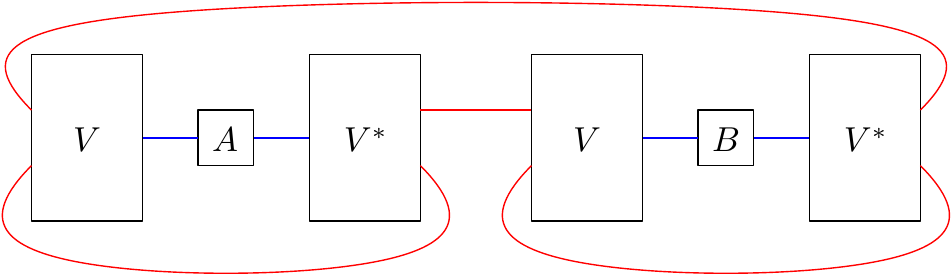}
		\caption{The diagram corresponding to $\Tr[\Phi(A)\Phi(B)]$. When
		computing the expectation of this diagram with respect to the
		Haar-distributed random isometry $V$,
		(visualized as a block with two inputs and one output),
		we use the permutation $\textcolor{red}{\alpha}$ to pair the outer
		(red) wires, and the permutation $\textcolor{blue}{\beta}$ to pair the
		inner (blue) wires.
		The Hilbert space dimensions are as follows: outer, upper wires: $d_2$;
		outer, lower wires: $M$; inner wires: $d_1$.}
		\label{fig:E-Phi-A-B}
	\end{figure}

According to \cite[Theorem 4.1]{cn10}, the average trace in the statement can be
decomposed as a weighted sum of diagrams
$$\E \Tr\left(\Phi(A)\Phi(B)\right) = \sum_{\alpha,\beta \in \mathcal S_2}
\mathcal
D_{\alpha, \beta} \mathrm{Wg}_{d_2M}(\alpha^{-1}\beta),$$
where the Weingarten function on $\mathcal S_2$ reads
\begin{equation}\mathrm{Wg}_{N}((1)(2)) = \frac{1}{N^2-1} \qquad \text{and}
\qquad \mathrm{Wg}_{N}((12)) = \frac{-1}{N(N^2-1)}\end{equation}
and the four diagram are obtained as follows. The permutation $\beta$ is used
to pair the inner wires connected to the $A$ and $B$ boxes. The contributions
of these diagrams are, for the identity permutation and for the transposition
respectively,
\begin{equation}\beta = (1)(2): \, (\Tr A)(\Tr B) \qquad \text{and} \qquad
\beta = (12): \, \Tr(AB).\end{equation}
Similarly, for the outer wires, we have the following multiplicative
contributions:
\begin{equation}\alpha = (1)(2): \, d_2M^2 \qquad \text{and} \qquad \alpha =
(12): \, d_2^2M.\end{equation}
Putting together the four contributions, weighted by the corresponding
Weingarten functions, we obtained the announced formula. A derivation using the
\texttt{RTNI} software package \cite{fkn19} is provided in the Supplementary
Material.
\end{proof}

\section{Algebraic lemma and a proof of Proposition \ref{prop:Fano3}}
\label{app:lamma}

Before presenting an alternative proof of the Proposition, which relays on the
Kraus representation of the map, we formulate a useful algebraic fact concerning
trace, tensor product and reshuffling.

\begin{lemma} \label{lemma:ACB^TD^T}
	For any four square matrices $A,B,C,D$ of the same size the following
	relation holds
	\begin{equation}
	{\rm Tr} \left(  (A \otimes B) (C \otimes D)^R  \right)=
	{\rm Tr} \left( ACB^TD^T \right).
	\label{4matrices}
	\end{equation}
\end{lemma}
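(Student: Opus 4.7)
My plan is to prove the identity by a direct computation in the standard matrix-unit basis, where the reshuffling rule \eqref{eq:def-reshuffling} is transparent. The argument is short and mechanical; the only thing to worry about is index bookkeeping.

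First I would expand
\begin{equation*}
C \otimes D \;=\; \sum_{i,j,k,l} C_{ij} D_{kl} \, \ketbra{i}{j} \otimes \ketbra{k}{l},
\end{equation*}
apply \eqref{eq:def-reshuffling} term by term to obtain
\begin{equation*}
(C \otimes D)^R \;=\; \sum_{i,j,k,l} C_{ij} D_{kl} \, \ketbra{i}{k} \otimes \ketbra{j}{l},
\end{equation*}
and write $A \otimes B = \sum_{m,n,p,q} A_{mn} B_{pq} \, \ketbra{m}{n} \otimes \ketbra{p}{q}$. Multiplying these and contracting the bra-ket pairs produces the Kronecker deltas $\delta_{ni}\delta_{qj}$, and taking the trace on each tensor factor gives
\begin{equation*}
\Tr\bigl[(A \otimes B)(C \otimes D)^R\bigr] \;=\; \sum_{m,n,p,q} A_{mn} C_{nq} B_{pq} D_{mp}.
\end{equation*}

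On the other side, unfolding the trace and using $(B^T)_{cd} = B_{dc}$, $(D^T)_{da} = D_{ad}$ yields
\begin{equation*}
\Tr(A C B^T D^T) \;=\; \sum_{a,b,c,d} A_{ab} C_{bc} B_{dc} D_{ad},
\end{equation*}
which coincides with the previous expression under the relabelling $(a,b,c,d) \mapsto (m,n,q,p)$.

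The only subtlety to watch is that the reshuffling swaps the inner indices $j$ and $k$, and it is precisely this swap that converts the natural contraction pattern $A \cdot C$, $B \cdot D$ on the left into a contraction with the transposes $B^T$, $D^T$ on the right. No analytic obstacle arises. As an optional, more conceptual alternative, I would note that both sides are multilinear in $(A,B,C,D)$, so it suffices to verify the identity for rank-one inputs $A = \ketbra{a}{a'}$, etc., where both sides collapse to the same product of two inner products $\langle a' | c | \cdot \rangle\langle b' | d | \cdot \rangle$, giving a coordinate-free verification.
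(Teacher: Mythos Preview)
Your proof is correct and follows essentially the same route as the paper: a direct index computation using the definition \eqref{eq:def-reshuffling} of reshuffling, which the paper summarizes in one line as $A_{a\mu}B_{b\nu}(C_{\mu a} D_{\nu b})^R = A_{a\mu} C_{\mu \nu} B^T_{\nu b} D^T_{b a}$ and supplements with a tensor-network diagram. Your version is simply a more explicit unpacking of that same index manipulation.
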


The above identity can be directly verified by playing with indices,
$A_{a\mu}B_{b\nu}(C_{\mu a} D_{\nu b})^R=
A_{a\mu} C_{\mu \nu} B^T_{\nu b} D^T_{b a}$,
where sums over repeating indices takes place,
but it is instructive to contemplate the proof in a form of
the following diagram depicted in Figure~\ref{fig:abcdr} .
\medskip

\begin{figure}[h!]
	\centering
	\includegraphics[width=0.7\linewidth]{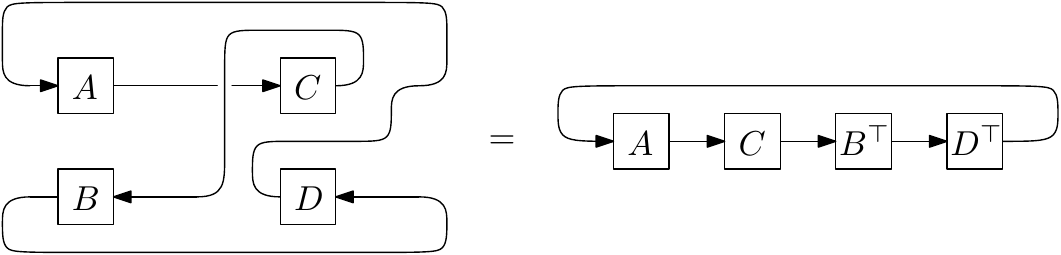}
	\caption{
		Equivalence of diagrams corresponding to ${\rm Tr}  \left( (A \otimes B)
		(C \otimes D)^R \right) $ and ${\rm Tr} \left( ACB^TD^T \right)$.
	}
	\label{fig:abcdr}
\end{figure}

\medskip

To demonstrate  Proposition \ref{prop:Fano3} we evaluate matrix elements
(\ref{super2}) of the Bloch representation  $\tilde\Phi$ of any map $\Phi$
expressed by the Kraus operators $A_\alpha$,
\begin{equation}
\begin{split}
\tilde{\Phi}_{ij} &=
\frac1d \tr (\Lambda_i \Phi(\Lambda_j)) =
\frac1d \sum_{\alpha} \tr ( \Lambda_i A_\alpha \Lambda_j A_\alpha^\dagger)
=\frac1d \sum_{\alpha} \tr \left( (\Lambda_i \otimes \Lambda_j^T) (A_\alpha
\otimes
\bar{A}_\alpha)^R \right) \\
&=\frac1d \tr \left((\Lambda_i \otimes \Lambda_j^T) \left (\sum_{\alpha}
A_\alpha
\otimes \bar{A}_\alpha \right )^R \right)
=\frac1d \tr \left( (\Lambda_i \otimes \Lambda_j^T) J_{\Phi}\right) \\
&= \frac1d \tr \left(J_{\Phi}^{T_2} (\Lambda_i \otimes \Lambda_j) \right)
=\tilde{R} \left( J_{\Phi}^{T_2}/d \right)_{ij}.
\end{split}
\end{equation}
In the first line we applied Lemma \ref{lemma:ACB^TD^T}, while in the second
line expansion (\ref{Phisum}) of the superoperator $\Phi$ in terms of the Kraus
operators $A_{\alpha}$ allowed us to arrive at the required form.

\section{Random quantum channels distributed according to other measures}
\label{app:other-measures}

 In Section \ref{sec:different-measures} we described several families of
 probability distributions for quantum channels, all of them having the flat,
 Hilbert--Schmidt (HS) measure as a special case. Various techniques to generate
 random maps according to these measures were presented and compared. In this
 Appendix, a short review of other ensembles of quantum operations will be
 presented. Some ensembles are defined on a certain subset of the entire set of
 stochastic maps, for instance the set of bistochastic maps. However, we shall
 start with an ensemble of random channels, determined by a general procedure
 leading  to a non-uniform probability measure in the space of quantum
 operations.

\bigskip

\textbf{e) Measures induced by ensembles of random density matrices}. Any
       ensemble of random states \cite{SZ04,ZPNC11}, used to generate random
       Choi matrices, defines by the rescaling \eqref{eq:normalize-Choi} an
       ensemble of random operations. For instance,  generating Wishart matrix
       according to the Bures measure, $W=XX^{\dagger}$ with $X=({\mathbbm
       I}+U)G$, where $U\in \mathcal U(d^2)$ denotes a Haar random unitary
       matrix, while  $G$ stands for a square complex Ginibre matrix of size
       $d^2$ one obtains the Bures-like measure in the space of quantum maps. A
       larger family of cognate ensembles can be obtained by applying the
       generalized Bures distributions \cite{MNPZ15}, while taking a product of
       $s$ independent random Ginibre matrices, $X=G_1G_2 \cdots G_s$ leads to
       the Fuss-Catalan distribution \cite{ZPNC11} of order $s$.
 \medskip

\textbf{  f) Random coupling with environment in a generic mixed state.} A new
family of distributions, closely related to the measures $\mu^{Stinesping}_{d_1,
d_2; M}$ from Definition \ref{def:random-Stinespring} can be obtained by
generalizing the channel from Eq.~\eqref{eq:mu-stinespring-unitary} to use a
mixed state for the environment. We obtain
\begin{equation}
\Phi_m(\rho) = [\operatorname{id}_{d_2} \otimes \Tr_{d_1}] [U (\rho\otimes
\sigma)
U^{\dagger}].
\label{eq:chan-fc}
\end{equation} Above,
we need to specify the distribution of the environment state $\sigma$, which
could be deterministic or, e.g., $\sigma=GG^{\dagger}/{\rm Tr} GG^{\dagger}$
generated according the HS measure with use of a Ginibre matrix $G$ of size
$d_2^2$. The interaction unitary $U$ is a Haar random unitary of order $d_1d_2$.

Then the Choi matrix  $J_{\Phi_m}=\Phi^R_m$ has full rank, but the probability
measure in the space of quantum operations induced in this way is not flat. For
$d_1=d_2$ and $\sigma$ sampled uniformly the distribution of eigenvalues is
given by the free product of two Mar\v{c}henko--Pastur distributions, equivalent
to the Fuss-Catalan distribution of order $k=2$.

The proof of the above fact is based on a conjecture \cite{MLZ18} that a
reshuffled Haar random unitary $U^R$ behaves asymptotically like a generic
Ginibre matrix, proved in \cite{MPS20}. For a channel $\Phi \in \CC_{d,d}$ of
the form
\begin{equation}
\Phi(\rho) = [\operatorname{id}_{d} \otimes \Tr_{d}] [U (\rho\otimes
\sigma)U^{\dagger}],
\end{equation}
where $\sigma \in M_d(\C)$ is density matrix from the HS distributions and $U$
is Haar unitary matrix of order $d^2$. We write
\begin{equation}
\begin{split}
J_\Phi&=
[\Phi \otimes \operatorname{id}_{d}](|\1_d \rangle\rangle
\langle \langle \1_d |)=
[\operatorname{id}_{d} \otimes \Tr_{d} \otimes
\operatorname{id}_{d}] [(U \otimes \1_d) (|\1_d
\rangle\rangle \langle
\langle \1_d |_{1,3} \otimes
\sigma_2)(U \otimes \1_d)^{\dagger}]\\
&=(\1_{d^2} \otimes \langle\langle
\1_d|_{1,3})(\1_d \otimes U \otimes \1_d)(\1_{d^2} \otimes
|\1_d\rangle\rangle_{2,4})(\1_d\otimes\sigma)(\1_{d^2} \otimes
\langle\langle
\1_d|_{2,4})(\1_d \otimes U^\dagger \otimes \1_d)(\1_{d^2} \otimes
|\1_d\rangle\rangle_{1,3})\\
&=U^R(\1_d \otimes \sigma)\left(U^R\right)^\dagger.
\end{split}
\end{equation}
Hence due to independence of $U$ and $\sigma$, the Choi matrix $J_\Phi$
has Fuss-Catalan distribution of order $k=2$.

\medskip

\textbf{    g) Random quantum bistochastic channels.} Consider an arbitrary
      Wishart matrix $W=XX^{\dagger}$ of order $d^2$ acting on a composite space
      $\C^{d}_A \otimes \C^d_B$, which determines by reshuffling transformation
      a completely positive map, $\Phi_0=W^R$. Rescaling it according to
      \eqref{eq:normalize-Choi} we obtain a legitimate Choi matrix $J_1$, which
      satisfies the partial trace condition, $\Tr_{A} J_1 = \mathbbm{1}_{d}$,
      and represents a trace preserving map,  $\Phi_1=J_1^{\mathrm R}$. To
      obtain a unital channel  one applies the complementary transformation
      performs on the output matrix,
\begin{equation}
\label{eq:normalize-2}
J_2=
( E^{-1/2} \otimes {\mathbbm 1}_{d} ) J_1 ( E^{-1/2} \otimes {\mathbbm 1}_{d})  ,
\end{equation}
where  $E={\Tr_{B}} J_1$ denotes the dual partial trace which forms a
semipositive matrix of size $d$. By construction $J_2$ represents a dynamical
matrix which satisfies the unitality condition of partial trace, $\Tr_{B} J_2 =
\mathbbm{1}_{d}$. Alternative iterations by  transformations
\eqref{eq:normalize-Choi} and  \eqref{eq:normalize-2} performed on an initial
random Wishart matrix $W$ converge with probability one as this scheme belongs
to the class of Sinkhorn algorithms \cite{Si64, Dj70,Gu04,BGOWW}. This procedure
can be considered as alternating projections on manifolds  \cite{LM08}, or as a
special case of the technique  \cite{DLP16} to produce a bipartite quantum state
with prescribed both partial traces. The limiting matrix $J_{\infty}$ satisfies
thus both partial trace conditions and represents a quantum map which is both
trace preserving and unital. Observe that these both transformations
correspond to a pre-- and post--processing of the initial map $\Phi_0$ by
conjugation with positive hermitian  matrices, $\Phi_2=\Psi_{E^{-1/2}} \cdot
\Phi_0 \cdot \Psi_{\bar{H}^{-1/2}}$, where   $\Psi_H(\rho)=H\rho H$ and $H > 0$.
Convergence of the iteration procedure is related to the possibility to
represent any generic completely positive map $\Phi$ as a concatenation of a
bistochastic operation $\Phi_B$ sandwiched between pre- and post-processing,
$\Phi=\Psi_E \cdot \Phi_B \cdot \Psi_H$ -- see \cite{AS15}. Fast convergence of
the above algorithm provides an efficient method to generate bistochastic random
quantum channels \cite{AS08}, analogous to the Sinkhorn method of generating
random bistochastic matrices \cite{CSBZ09}. Taking into account these results
one can expect that the measure induced in this way by the HS measure in the
space of stochastic maps will not lead to the HS  measure in the subset of
bistochastic maps.

      \medskip
\textbf{      h) Randomized unitary channels}, $\rho'=\sum_{i=1}^M p_i U_i \rho
        U_i^{\dagger}$, with independent  Haar random unitary matrices $U_i$ and
        various choices of the measure for the random probability vector $p$ of
        length $M$. These operations are bistochastic by construction. For instance,  to construct quantum expanders Hastings used such an ensemble \cite{has07} with random unitary matrices $U_i$ mixed with fixed probabilities, $p_i = 1/M$. Note that choosing $U_i$
        from a given set of  $M$ fixed (non-random) unitaries, $U_i \in \mathcal
        U(d)$, this model is equivalent to  {\sl random external fields}
        \cite{AL00} also called {\sl mixed unitary channels}, often applied in
        the theory of quantum information. For $d=2$ every bistochastic channel
        is unitarily equivalent to a Pauli map -- a mixed unitary channel, but
        for larger systems, $d\ge 3$, there exist bistochastic quantum channels
        which do not belong to this class \cite{LS93}. However, mixed unitary
        channels have non-zero volume in the set of bistochastic channels
        \cite{wat}. In more advanced approaches the independence restriction of 
        matrices $U_i$ is loosen. Dynamics of open quantum systems defined by 
        correlated, time-independent Hamiltonians were investigated in 
        \cite{HBCopen, HBCdephasing}.  

   \medskip
 \textbf{i) Random unistochastic maps.}
      These channels can be considered as a particular case of model 
      \textbf{f)},
        if the state of environment of size $M$, is the maximally mixed state, 
        $\Phi_U(\rho) = [\operatorname{id}_{d} \otimes
        \Tr_{M}] [U (\rho\otimes {\mathbbm 1}_M) U^{\dagger}]$. Such maps are
        determined only by a unitary matrix $U$ which is assumed to be random
        according to the Haar measure. Unistochastic maps are clearly unital and
        thus bistochastic for any dimension of the environment. It is convenient
        to assume that $M=kd$ and call such maps $k$--unistochastic
        \cite{ZB04}. In the simplest case, $k=1$ so that the size of the
        environment and the system are equal, the set of one-qubit unistochastic
        maps, forms a (non-convex) proper subset \cite{NA07,MKZ13} of the 
        regular
        tetrahedron of one-qubit Pauli channels. For a unistochastic channel
        $\Phi_U$  determined by a unitary matrix $U$ of size $d^2$ the
        corresponding Choi dynamical matrix is given by the reshuffled matrix,
        $J_{\Phi_U}=\frac{1}{d} U^R(U^R)^{\dagger}$ -- see \cite{BZ17}. This
        expression defines an ensemble of random unistochastic channels, as $U$
        is taken as a random unitary matrix.

   \medskip
 \textbf{j) Random POVMs.}
	In \cite{hjn}, the authors introduce the notion of \emph{random positive
	operator valued measures} (POVMs), corresponding to an ensemble of
	generalized quantum measurements. One way to construct random POVMs is to
	define POVM elements $A_1, \ldots, A_k \in M_d(\mathbb C)$ by sampling
	independent, identically distributed random Wishart matrices $W_1, \ldots,
	W_k$ and normalizing them to have unit sum \cite[Section V.B]{hjn}. This
	construction is related to that the ensembles of quantum channels considered
	in this work, in the following way. If $\Phi$ is sampled according to the
	Choi ensemble $\mu^{Choi}_{d,k;n}$ from Section \ref{sec:different-measures}
	a), then the channel $\Psi := \operatorname{diag} \circ \Phi$, where
	$\operatorname{diag}$ is the dephasing operator
	$\operatorname{diag}\ketbra{i}{j} = \delta_{ij} \ketbra{i}{j}$, has the
	distribution of a random POVM of parameters $(d,k;n)$, see \cite{hjn}.

\bigskip

Let us end this discussion by mentioning that one can consider mixtures of the
different probability measures discussed in this Appendix and of those analyzed
in the main text. In the recent work \cite{Sa20}, the authors consider a mixture
between a random unitary conjugation $X \mapsto UXU^*$ and an independent random
quantum channel as in Definition \ref{def:random-Kraus}. They study the spectral
properties of the corresponding superoperator as a function of the mixing
parameter.


\end{document}